\begin{document}
\title{On the k-synchronizability of systems}
%
%
\author{Cinzia Di Giusto\inst{1} \and
Laetitia Laversa\inst{1} \and
Etienne Lozes\inst{1}}
\authorrunning{C. Di Giusto et al.}
%
\institute{Université Côte d’Azur, CNRS, I3S, France}
%
\maketitle              
\begin{abstract}

In this paper, we work on the notion of  \kSity{k}: a system is \kSable{k}
if any of its executions, up to reordering causally
independent actions, can be divided into a succession of
 $k$-bounded interaction phases. We show two results (both for mailbox and peer-to-peer automata):
 first, the reachability problem  is decidable for
 \kSable{k} systems; second, the membership problem (whether a
 given system is \kSable{k}) is decidable as well.
 Our proofs fix several important issues in previous attempts
 to prove these two results for mailbox automata.

\keywords{Verification  \and Communicating Automata \and A/Synchronous communication.}
\end{abstract}


\section{Introduction}
Asynchronous message-passing is ubiquitous in communication-centric systems;
these include high-performance computing, distributed
memory management, event-driven programming, or web services
orchestration. 
One of the parameters that play an important role in these systems
is whether 
the number of pending sent messages can be
bounded in a predictable fashion, or 
whether the buffering capacity offered by the communication layer should be unlimited.
Clearly, when considering implementation, testing, or verification,
bounded asynchrony is  preferred over unbounded asynchrony.
Indeed, for bounded systems, 
reachability analysis and invariants inference can be solved by
regular model-checking~\cite{DBLP:conf/cav/BouajjaniHV04}. 
Unfortunately and even if designing a new system in this setting is easier, this is not the case when considering that the buffering capacity is unbounded, or that the bound is not
known a priori . 
Thus, a  question that arises naturally is  how can we bound the ``behaviour" of  a  system so that it operates as one 
with unbounded buffers?
 In a recent work~\cite{DBLP:conf/cav/BouajjaniEJQ18}, Bouajjani~\emph{et al.}
introduced the notion of \kSable{k} system of finite
state machines communicating through mailboxes and showed that the reachability problem is decidable for such systems.
Intuitively, 
 a system is \kSable{k}
if any of its executions, up to reordering causally
independent actions, can be chopped into a succession of
$k$-bounded interaction phases.
Each of these
phases starts with at most $k$ send actions that are 
followed by at most $k$ receptions. 
Notice that,  a system may be \kSable{k} even if
some of its executions require buffers of unbounded capacity. 

%


As explained in the present paper, this result, although valid, 
is surprisingly non-trivial, 
mostly due to complications introduced by the mailbox semantics of
communications. Some of these complications were 
missed by Bouajja\-ni~\emph{et al.} and the algorithm for the
reachability problem in~\cite{DBLP:conf/cav/BouajjaniEJQ18}
suffers from false positives. 
Another problem is the membership problem for the subclass of \kSable{k}
systems: for a given $k$ and a given system of communicating finite state
machines, is this system \kSable{k}?
The main result in~\cite{DBLP:conf/cav/BouajjaniEJQ18}
is that this problem is decidable. However, again,
the proof of this result 
contains an important flaw
at the very first step that breaks all subsequent developments; as a consequence, the
algorithm given in~\cite{DBLP:conf/cav/BouajjaniEJQ18} produces both false positives and false negatives.

In this work, we present a new proof of the decidability of the reachability
problem together with a new proof of the decidability of
the membership pro\-blem. Quite surprisingly, the reachability problem is more demanding in terms of causality analysis,
whereas the membership problem, although rather intricate, builds on a simpler dependency analysis.
We also extend both decidability results to the case of peer-to-peer communication.

\vspace{0.1cm}
\noindent \textbf{Outline.} Next section
recalls the definition of communicating systems and related notions.
In Section~\ref{sec:ksity} we introduce \kSity{k}
and we give a gra\-phi\-cal characterisation of this property.
This characterisation corrects Theorem~1
in~\cite{DBLP:conf/cav/BouajjaniEJQ18} and highlights the flaw in the proof
of the membership problem. Next, in Section~\ref{sec:reachability},
we establish the decidability of the reachability problem, which
is the core of our contribution and departs considerably from~\cite{DBLP:conf/cav/BouajjaniEJQ18}. In Section~\ref{sec:dec}, we show the decidability of the membership problem.
Section \ref{section:p2p} extends previous results to the peer-to-peer setting.
Finally Section \ref{sec:conc} concludes the paper discussing other related works.
Proofs and some additional material are added in a separate  Appendix.


\section{Preliminaries}

A  communicating system is a set of finite state machines that exchange messages:  automata have  transitions labelled with either send or receive actions.
The paper mainly  considers as communication architecture, mailboxes: \ie messages await to be
received in  FIFO buffers that store  all
messages sent to a same automaton, regardless of their senders. 
Section \ref{section:p2p}, instead, treats peer-to-peer systems, their introduction is therefore delayed to that point.

Let $\paylodSet$ be a finite set of messages and $\procSet$ a finite set of processes.  A send action, denoted $\send{p}{q}{\amessage}$, designates the sending of message $\amessage$ from process $p$ to process $q$. Similarly a receive action $\rec{p}{q}{\amessage}$ expresses that process $q$  is receiving  message
$\amessage$ from $p$.  We write $a$ to denote a send or receive action.
Let
$\sendSet~=~\{ \send{p}{q}{\amessage} \mid  p,q \in \procSet, \amessage \in \paylodSet \}$ be
the set of send actions and $\receiveSet~=~\{ \rec{p}{q}{\amessage}\mid  p,q \in \procSet, \amessage \in \paylodSet\}$ the set of receive actions. $\sendSet_p$ and $\receiveSet_p$ stand for the set of sends and   receives  of process $p$ respectively.
Each process is encoded by an automaton and by abuse of notation we say that a \emph{system} is the parallel composition of 
processes.

\begin{definition}[System]
  A system is a tuple $ \system = \left( (L_p, \delta_p, l^0_p) \mid p \in \procSet \right) $ where, for each process $p$, $L_p$ is a finite set of local
  control states, $\delta_p \subseteq (L_p \times ( \sendSet_p \cup \receiveSet_p ) \times L_p)$ is the transition relation (also denoted $l \xrightarrow{a}_p l'$) and $l^0_p$ is the initial state.
\end{definition}

\begin{definition}[Configuration]
  Let $\system= \left( (L_p, \delta_p, l^0_p) \mid p \in \procSet \right) $, a configuration is a pair $(\globalstate{l}, \B)$
  where $\globalstate{l}=(l_p)_{p\in\procSet} \in \Pi_{p \in \procSet} L_p$
  is a global control state of $\system$ (a local control state for each automaton), and $\B= (b_p)_{ p \in \procSet} \in (\paylodSet^*)^\procSet$ is a vector of buffers,
  each $b_p$ being a word over $\paylodSet$.
\end{definition}
We write
$\globalstate{l_0}$ to denote
the vector of initial states of all processes
$p \in \procSet$, and $\B_0$ stands for the vector
of empty  buffers.
The semantics of a system is defined by the two rules below.

\begin{center}
\begin{minipage}{.5\columnwidth}
\small{[SEND]}

\AxiomC{{\small $ \Transition{l_p}
               {\send{p}{q}{\amessage}}
               {l_p'}
               {p}
    \quad
    b_{q}' = b_{q} \cdot \amessage$}}
\UnaryInfC{{\small $\Transition{(\globalstate{l}, \B)}
               {\send{p}{q}{\amessage}}
	       {(\globalstate{l}\sub{l_p'}{l_p},\B\sub{b_{q}'}{b_{q}})}
               {}$}}
\DisplayProof
\end{minipage}
\begin{minipage}{.5\columnwidth}
\small{[RECEIVE]}
\AxiomC{{\small$ \Transition{l_q}
               {\rec{p}{q}{\amessage}}
               {l_q'}
               {q}
    \quad
    b_{q} = \amessage \cdot b_{q}'$}}
\UnaryInfC{{\small$\Transition{(\globalstate{l}, \B)}
               {\rec{p}{q}{\amessage}}
	       {(\globalstate{l}\sub{l_q'}{l_q},\B\sub{b_{q}'}{b_{q}})}
               {}$}}
\DisplayProof
\end{minipage}
\end{center}

%
\noindent A send action adds a message in the buffer $b$ of the receiver, and
a receive action pops the message from this buffer.
An  execution $e=a_1\cdots a_n$ is a sequence of actions in
$\sendSet \cup \receiveSet$ such that $(\globalstate{l_0}, \B_0) \xrightarrow{a_1} \cdots \xrightarrow{a_n} (\globalstate{l}, \B)$ for some $\globalstate{l}$ and $\B$.
As usual $\xRightarrow{e}$ stands for $ \xrightarrow{a_1} \cdots \xrightarrow{a_n}$.
We write $\asEx(\system)$ to denote the set of asynchronous executions of
a system $\system$.
In a sequence of actions $e=a_1\cdots a_n$,
a send action $a_i= \send{p}{q}{\amessage}$ is  \emph{matched}
by a reception $a_j=\rec{p'}{q'}{\amessage'}$  (denoted by $a_i \matches a_j$)
if $i< j$, $p=p'$, $q=q'$, $\amessage=\amessage'$, and there is $\ell\geq 1$ such that
$a_i$ and $a_j$ are the $\ell$th actions of $e$ with these properties respectively.
A send action $a_i$ is \emph{unmatched} if there is no
matching reception in $e$.
A \emph{message exchange} of a sequence of actions
$e$ is a set either of the form  $v=\{a_i,a_j\}$ with $a_i\matches a_j$ or
of the form $v=\{a_i\}$ with $a_i$ unmatched. 
For a message $\amessage_i$, we will note $v_i$ the corresponding message exchange.
When $v$ is either an unmatched
$\send{p}{q}{\amessage}$ or a pair of matched actions
$\{\send{p}{q}{\amessage},\rec{p}{q}{\amessage}\}$, we write
$\sender{v}$ for $p$ and $\receiver{v}$ for $q$.
Note that $\procofactionv{R}{v}$ is defined even if $v$ is unmatched.
Finally, we write $\procs{v}$ for $\{p\}$ in the case of an unmatched send
and $\{p,q\}$ in the case of a matched send.

An execution imposes a total order on the actions. Here, we are interested in stressing the causal dependencies between messages. We thus make use of  message sequence charts (MSCs) that only impose an order
between matched pairs of actions and between the actions of a same
process. Informally, an MSC will be depicted with vertical
timelines (one for each process) where time goes from top to bottom, that carry some events (points) representing send and receive actions of this process (see Fig.~\ref{fig:exmp:causal}). An arc is
drawn between two matched events. We will also draw a dashed arc
to depict an  unmatched send event.
An MSC is, thus,  a partially ordered set of events,
each corresponding to a send or receive action.

\begin{definition}[MSC]\label{def:msc}
  A message sequence chart 
  is a tuple
  $(Ev, \lambda, \prec)$,
  where
\begin{itemize}
\item $Ev$ is a finite set of events,
\item $\lambda :Ev\to\sendSet\cup\receiveSet$ tags each event with an action,
\item $\prec=(\prec_{po}\cup \prec_{src})^+$ is the transitive closure of $\prec_{po}$ and $\prec_{src}$ where: 
\begin{itemize}
\item $\prec_{po}$ is a partial order on $Ev$ such that, for all process $p$,
$\prec_{po}$ induces a total order on the set of events of process $p$, \ie on $\lambda^{-1}(\sendSet_p\cup \receiveSet_p)$
\item $\prec_{src}$ is a binary relation that relates each receive event to its preceding send event :
\begin{itemize}
\item for all event $r\in \lambda^{-1}(\receiveSet)$, there is exactly one event $s$ such that $s\prec_{src} r$
\item for all event $s\in \lambda^{-1}(\sendSet)$, there is at most one event $r$ such that $s\prec_{src} r$
\item for any two events $s,r$ such that $s\prec_{src}r$, there are $p,q,\amessage$ such that $\lambda(s)=\send p q \amessage$ and $\lambda(r)=\rec p q \amessage$.
\end{itemize}
\end{itemize}
\end{itemize}
\end{definition}
We identify \MSC{}s up to graph isomorphism (i.e., we view an
\MSC as a labeled graph).
For a given \emph{well-formed}  (\ie  each reception is matched) sequence of actions $e=a_1\dots a_n$, we let
$msc(e)$ be the MSC where $Ev=\interval{1}{n}$, $\prec_{po}$ is the set of pairs of indices $(i,j)$ such
that $i<j$ and 
 $\{a_i,a_j\}\subseteq \sendSet_p\cup\receiveSet_p$ for some $p\in\procSet$ (\ie $a_i$ and $a_j$ are actions of a same process), and
$\prec_{src}$ is the set of pairs of indices $(i,j)$ such that
$a_i\matches a_j$. We say that $e=a_1\dots a_n$ is a \emph{linearisation} of $msc(e)$, and
we write $\asTr(\system)$ to denote $\{ msc(e) \mid e \in \asEx(\system) \}$  the set of MSCs of system $\system$.

\begin{figure}[t]
\begin{tikzpicture}
\begin{scope}
	\coordinate (pa) at (0,-0.25) ;
	\coordinate (pb) at (0,-2.75) ;
	\coordinate (qa) at (1,-0.25) ;
	\coordinate (qb) at (1,-2.75) ;
	\coordinate (ra) at (2, -0.25);
	\coordinate (rb) at (2, -2.75);
		\draw (0,0) node{$p$} ;
	\draw (1,0) node{$q$} ;
	\draw (2,0) node{$r$} ;
	\draw (pa) -- (pb) ;
	\draw (qa) -- (qb) ;
	\draw (ra) -- (rb) ;

	\coordinate (as) at (0,-0.75);
	\coordinate (ar) at (0.75,-0.75);
	\draw[>=latex,->] (as) -- (ar) node[above, sloped] {$\amessage_1$};
	\draw[>=latex,->,  dashed] (ar) -- (2, -0.75);

	\coordinate (ds) at (0, -1.5);
	\coordinate (dr) at (2, -1.5);
        \node at (dr) [right] {\danger};
	\draw[>=latex,->, red] (ds) to node[above left,red]{$\amessage_2$} (dr);

	\draw(1, -3.5) node[above]{\textbf{(a)}};
\end{scope}
\begin{scope}[shift = {(3.1,0)}]
	\coordinate (pa) at (0,-0.25) ;
	\coordinate (pb) at (0,-2.75) ;
	\coordinate (qa) at (1,-0.25) ;
	\coordinate (qb) at (1,-2.75) ;
	\coordinate (ra) at (2, -0.25);
	\coordinate (rb) at (2, -2.75);
		\draw (0,0) node{$p$} ;
	\draw (1,0) node{$q$} ;
	\draw (2,0) node{$r$} ;
	\draw (pa) -- (pb) ;
	\draw (qa) -- (qb) ;
	\draw (ra) -- (rb) ;

	\coordinate (as) at (0,-0.75);
	\coordinate (ar) at (0.75,-0.75);
	\draw[>=latex,->] (as) -- (ar) node[midway, above] {$\amessage_1$};
	\draw[>=latex,->,  dashed] (ar) -- (2, -0.75);

	\coordinate (bs) at (0, -1.5);
	\coordinate (br) at (1, -1.5);
	\draw[>=latex,->] (bs) -- (br) node[midway, above] {$\amessage_2$};

	\coordinate (cs) at (1,-2.25);
	\coordinate (cr) at (2,-2.25);
        \node at (cr) [right] {\danger};
	\draw[>=latex,->, red] (cs) -- (cr) node[midway, above] {$\amessage_3$};
		\draw(1, -3.5) node[above]{\textbf{(b)}};

\end{scope}
\begin{scope}[shift = {(6.2,0)}]
    \coordinate(pa) at (0,-0.25) ;
    \coordinate (pb) at (0,-2.75) ;
    \coordinate (qa) at (1,-0.25) ;
    \coordinate (qb) at (1,-2.75) ;
    \coordinate (ra) at (2,-0.25) ;
    \coordinate (rb) at (2,-2.75) ;

			\draw (0,0) node{$p$} ;
	\draw (1,0) node{$q$} ;
	\draw (2,0) node{$r$} ;
    \draw (pa) -- (pb) ;
    \draw (qa) -- (qb) ;
    \draw (ra) -- (rb) ;

    \coordinate (s1) at (1,-0.75);
    \coordinate (r1) at (0,-0.75);
    \draw[>=latex,->] (s1) -- node [above,sloped] {$\amessage_1$} (r1);
    \coordinate (s2) at (0, -2.25);
    \coordinate (r2) at (1, -2.25);
    \coordinate (s3) at (1, -1.5);

    \coordinate (r3) at (2, -1.5);

    \draw[>=latex,->] (s2) -- node [above,sloped] {$\amessage_2$} (r2);
    \draw[>=latex,->] (s3) -- node [above,sloped] {$\amessage_3$} (r3);
    \draw(1, -3.5) node[above]{\textbf{(c)}};

\end{scope}
\begin{scope}[shift = {(9.5, -2.25)}]
	\node[draw] (m1) at (0,0) {$v_1$};
	\node[draw] (m2) at (2,0) {$v_2$};
	\node[draw] (m3) at (0,2) {$v_3$};

	\draw[->] (m1) -- node [above] {RS} node [below] {SR} (m2);
	\draw[->] (m1) -- node [left] {SS} (m3);
	\draw[->] (m3) -- node [above,sloped] {SR} (m2);
		\draw(1, -1.25) node[above]{\textbf{(d)}};

\end{scope}
\end{tikzpicture}
\vspace*{-.2cm}
\caption{(a) and (b): two MSCs that violate causal delivery. 
  (c) and (d): an MSC and its conflict graph 
\vspace*{-0.5cm}
}
\label{fig:exmp:causal}
\end{figure}

Mailbox communication imposes a number of constraints on what and when messages can be read.  The precise definition is given below, we now discuss some of the possible scenarios.  For instance:
  if two messages are sent to a same process,  they will be received in the same order as they have been sent. As another example, unmatched messages also
  impose some constraints:
  if a process $p$ sends an unmatched message to $r$, it will not  be able to send matched messages to $r$ afterwards (Fig.~1a);
 or similarly,
    if a process $p$ sends an unmatched message to $r$, any process $q$ that receives subsequent messages from $p$ will not  be able to send matched messages to $r$ afterwards (Fig.~1b).
When 
an MSC satisfies the constraint imposed by mailbox communication, we say that it satisfies causal delivery. Notice that, by construction, all executions satisfy causal delivery. 

\begin{definition}[Causal delivery]\label{def:causal-delivery}
Let $(Ev , \lambda, \prec)$ be an MSC. 
  We say that it satisfies causal delivery if
  the MSC has a linearisation $e=a_1\dots a_n$ such that
  for any two  events $i\prec j$ such that $a_i~=~\send{p}{q}{\amessage}$ and $a_j = \send{p'}{q}{\amessage'}$,  either $a_j$ is unmatched, or there are $i',j'$ such that $a_i\matches a_{i'}$, $a_j\matches a_{j'}$, and $i'\prec j'$.
\end{definition}

 Our definition enforces the following intuitive property. 
 
\begin{proposition}\label{prop:cau}
An MSC $msc$ satisfies causal delivery if and only if there is a system $\system$ and an execution $e\in\asEx(\system)$
such that $msc=msc(e)$.
\end{proposition}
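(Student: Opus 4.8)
The plan is to prove the two implications separately. The backward implication is the routine one: any execution automatically respects the FIFO mailbox discipline, and the causal-delivery condition is precisely the abstraction of that discipline. The forward implication carries the real work: from an MSC satisfying causal delivery I must manufacture both a system \emph{and} an execution, and the mailbox semantics (one payload-only FIFO buffer $b_q$ per receiver $q$, into which every sender enqueues) makes the choice of a correct linearisation delicate.

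For the backward direction, suppose $e\in\asEx(\system)$ and $msc=msc(e)$. Then $e$ is by construction a linearisation of $msc$, so I take $e$ itself as the witness required by Definition~\ref{def:causal-delivery}. Given $i\prec j$ with $a_i=\send{p}{q}{\amessage}$, $a_j=\send{p'}{q}{\amessage'}$ and $a_j$ matched, I first note that $i\prec j$ forces $i<j$ in $e$, so the payload of $a_i$ is enqueued into the single buffer $b_q$ strictly before that of $a_j$. Since $b_q$ is consumed in FIFO order, the message of $a_j$ cannot be received while an older message still occupies the head; hence $a_i$ is received as well, at an earlier position $i'<j'$. As $a_{i'}$ and $a_{j'}$ are both actions of $q$, they are totally ordered by $\prec_{po}$, whence $i'\prec j'$, which is exactly the required conclusion.

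For the forward direction, I would first build the \emph{canonical system} $\system$ that can only replay the MSC: for each process $p$ the automaton is the single line whose transitions are the actions of $p$ in their $\prec_{po}$ order, so that the control flow of $\system$ admits precisely the linear extensions of $\prec$ as candidate executions. It then remains to exhibit one such extension $e^{\star}$ that is genuinely firable and satisfies $msc(e^{\star})=msc$. The key idea is to schedule the message exchanges so that, for every receiver $q$, the payloads are enqueued into $b_q$ in the order in which their receptions occur along $q$'s timeline, and every unmatched send to $q$ is scheduled after all matched sends to $q$. Causal delivery is exactly the hypothesis that makes this schedule consistent: for two $\prec$-comparable sends to $q$ it guarantees that ``earlier send'' and ``earlier reception'' agree, and it forbids an unmatched send from causally preceding a matched send to the same receiver; hence the desired enqueue order never contradicts $\prec$ and can be completed into a linear extension $e^{\star}$, along which each reception finds its intended payload at the head of $b_q$.

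I expect the main obstacle to be precisely the verification that $e^{\star}$ is firable and still induces the given matching $\prec_{src}$, because the payload-only buffers decouple the combinatorial relation $\matches$ from the physical FIFO order: two sends to $q$ with equal payload but distinct senders, together with unmatched sends that stay stuck at the head of $b_q$ and block everything behind them, must be handled carefully so that no reception is served the wrong message and $\prec_{src}$ is reproduced exactly. This is the point where a naive choice of linearisation — including the raw causal-delivery witness itself — may fail to be executable, and where the full strength of causal delivery is needed.
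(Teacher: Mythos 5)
Your backward direction is essentially fine; the problem is in the forward direction, at the step ``hence the desired enqueue order never contradicts $\prec$ and can be completed into a linear extension $e^{\star}$''. You justify the new per-mailbox ordering constraints only against $\prec$-\emph{comparable} pairs of sends (this is the only place you invoke causal delivery), but pairwise consistency with $\prec$ does not make the union of $\prec$ with all of the new constraints acyclic: a cycle can thread through two different mailboxes using only $\prec$-incomparable sends. Concretely, take processes $p,q,a,b$ and matched exchanges $m_1:a\to p$, $m_2:b\to p$, $m_3:b\to q$, $m_4:a\to q$ with pairwise distinct payloads, where $a$ sends $m_4$ then $m_1$, $b$ sends $m_2$ then $m_3$, $p$ receives $m_1$ then $m_2$, and $q$ receives $m_3$ then $m_4$. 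Every pair of sends to a common mailbox is $\prec$-incomparable, so the condition you use imposes nothing; yet your enqueue-order constraints demand $\send{a}{p}{\amessage_1}$ before $\send{b}{p}{\amessage_2}$ and $\send{b}{q}{\amessage_3}$ before $\send{a}{q}{\amessage_4}$, which together with the two $\prec_{po}$ edges on $a$ and $b$ form a cycle, so no $e^{\star}$ exists. (This MSC is indeed not realisable and does not satisfy Definition~\ref{def:causal-delivery} --- but only because \emph{no} linearisation passes the test of that definition, a fact your argument never exploits.)

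The missing idea is that Definition~\ref{def:causal-delivery} already hands you the global object you are trying to rebuild: a single linearisation $e$ in which, for every $q$, the matched sends to $q$ occur in the order of their receptions and no unmatched send to $q$ precedes a matched one (if $a_i$ is unmatched and $a_j$ matched with $i$ before $j$, the clause ``there are $i',j'$ with $a_i\matches a_{i'}$'' already fails). Feeding that very linearisation to your canonical line system, one checks directly that each reception finds its payload at the head of $b_q$: every message enqueued into $b_q$ before it has already been dequeued, and the unmatched payloads accumulate harmlessly at the tail, behind everything that will ever be consumed. So your closing worry is inverted: the raw causal-delivery witness \emph{is} executable, whereas the $e^{\star}$ you try to manufacture from the partial order alone need not exist. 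A last nit on the backward direction: since $b_q$ stores payloads only while $\matches$ counts per sender--payload pair, an execution $e$ in which two distinct senders ship the same payload to $q$ can fail the letter of the causal-delivery condition even though $msc(e)$ satisfies it via another linearisation; so taking $e$ verbatim as the witness needs a word of justification (permute $\prec$-incomparable sends) rather than none.
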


We now recall from \cite{DBLP:conf/cav/BouajjaniEJQ18} the definition of \emph{conflict graph} depicting the causal dependencies between message exchanges. Intuitively, we have  a dependency whenever two messages have a  process in common. For instance an $\cgedge{SS}$ dependency between message exchanges $v$ and $v'$ expresses the fact that $v'$ has been sent  after $v$, by the same process.

%



\begin{definition}[Conflict graph]\label{def:conflict-graph}
  The conflict graph $\cgraph{e}$ of a sequence of actions $e=a_1\cdots a_n$ is the labeled
  graph $(V,\{\cgedge{XY}\}_{X,Y\in\{R,S\}})$ where $V$ is the set
  of message exchanges of $e$, and for all $X,Y\in\{S,R\}$, for all $v,v'\in V$,
  there is a $XY$ dependency edge $v\cgedge{XY}v'$ between $v$ and $v'$
  if there are $i<j$ such that $\{a_i\}= v\cap X$,
  $\{a_j\}=v'\cap Y$, and $\procofactionv{X}{v}=\procofactionv{Y}{v'}$.
\end{definition}

Notice that each linearisation $e$ of an MSC will have the same conflict graph. We can thus talk about an MSC and the associated conflict graph. (As an exam\-ple cfr. Figs.~1c and 1d.)

We write $v\to v'$ if $v\cgedge{XY} v'$ for some
$X,Y\in\{R,S\}$,
and $v\to^*v'$ if there is a (possibly empty) path from $v$  to $v'$.



\section{\kSable{k} systems}
\label{sec:ksity}

In this section, we define  \kSable{k} systems. 
The main contribution of this part is a new characterisation
of \kSable{k} executions  that corrects the one given in~\cite{DBLP:conf/cav/BouajjaniEJQ18}.

In the rest of the paper, $k$ denotes a given integer $k\geq 1$.
A \kE{k} denotes a sequence of actions  starting with
at most $k$ sends and  followed by at most $k$ receives matching some of
the sends.  An  MSC is \emph{\kSous{k}} if there exists a linearisation that is breakable into 
a sequence of \emph{\kE{k}s}, such that
a message sent during a \kE{k} cannot
be received during a subsequent one: either it is received
during the same \kE{k}, or it remains orphan forever.


\begin{definition}[\kSous{k}]
  An MSC $msc$ is \kSous{k} if:
  \begin{enumerate}
  \item there exists a linearisation of $msc$ $e=e_1\cdot e_2\cdots e_n$ where for all $i\in\interval{1}{n}$, $e_i\in\sendSet^{\leq k}\cdot \receiveSet^{\leq k}$,
  \item $msc$ satisfies causal delivery,
  \item for all $j,j'$ such that $a_j\matches a_{j'}$ holds in $e$,
    $a_j\matches a_{j'}$ holds in some $e_i$.
  \end{enumerate}
  An execution $e$ is \kSable{k} if $msc(e)$ is \kSous{k}.
\end{definition}

We write $\sTr_k(\system)$ to denote the set 
$\{msc(e)\mid e\in\asEx(\system)\mbox{ and } msc(e)$  is \kSous{k}\}.



\begin{example}[\kSous{k} MSCs and \kSable{k} executions]\ \label{exmp:ksynchronous}
\begin{enumerate}
\item \label{exmp:nonksable} There is no $k$ such that the  MSC  in Fig.~\ref{fig:ex-counter-exmp-SCC}a is  \kSous{k}. All messages must be grouped in the same \kE{k}, but it is not
  possible to schedule all the sends first, because the reception of
  $\amessage_1$ happens before the sending of $\amessage_3$. Still, this MSC
  satisfies causal delivery.
\item \label{exmp:ksable}

Let $e_1 = \send{r}{q}{\amessage_3}\cdot \send{q}{p}{\amessage_2} \cdot \send{p}{q}{\amessage_1}\cdot\rec{q}{p}{\amessage_2}\cdot\rec{r}{q}{\amessage_3}$ be an execution. Its MSC, $msc(e_1)$  depicted in Fig.~\ref{fig:ex-counter-exmp-SCC}b satisfies causal delivery.
Notice that $e_1$ can not be divided in \kE{1}s. However, if we consider the alternative linearisation of $msc(e_1)$:  $e_2 = \send{p}{q}{\amessage_1}\cdot\send{q}{p}{\amessage_2}\cdot\rec{q}{p}{\amessage_2}\cdot\send{r}{q}{\amessage_3}\cdot\rec{r}{q}{\amessage_3}$,  we have that  $e_2$ is breakable into \kE{1}s in which each matched send is in a \kE{1} with its reception. Therefore, $msc(e_1)$ is \kSous{1} and $e_1$ is \kSable{1}. 
Remark that $e_2$ is not  an execution and there exists no execution that can be divided into \kE{1}s. A  \kSous{k}  MSC  highlights dependencies between messages but does not impose an order for the execution.



   \end{enumerate}
\end{example}

\begin{comp}
In \cite{DBLP:conf/cav/BouajjaniEJQ18}, the authors define the set $sEx_{k}(\system)$ for a system $\system$ as the set of \kSous{k} executions of the system in the \kSous{k} semantics. 
Nonetheless as remarked in Example \ref{exmp:ksynchronous}.2 not all  executions of a system can be divided into \kE{k}s even if they are \kSable{k}.
Thus, in order not to lose any executions, we have decided to reason only on MSCs (called traces in \cite{DBLP:conf/cav/BouajjaniEJQ18}).
%
\end{comp}

Following standard terminology, we say that
a set $U\subseteq V$ of vertices is a \emph{strongly
  connected component} (SCC) of a given graph $(V,\to)$
if between any two vertices $v,v'\in U$,
there exist
two oriented paths $v\to^*v'$ and $v'\to^* v$.
The statement below fixes some issues with
Theorem~1 in~\cite{DBLP:conf/cav/BouajjaniEJQ18}.

\begin{theorem}[Graph-theoretic characterisation of \kSous{k} MSCs]
  \label{thm:k-sity-scc}
  Let $msc$  be a causal delivery MSC.
   $msc$ is \kSous{k} iff every SCC in its
  conflict graph is of size at most $k$ and if  no RS edge occurs on any
  cyclic path.
\end{theorem}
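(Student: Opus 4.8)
The plan is to prove both directions of the equivalence, reformulating the combinatorial condition on linearisations (Definition of \kSous{k}) into the graph-theoretic condition on the conflict graph. Throughout I would exploit the fact, noted after Definition~\ref{def:conflict-graph}, that the conflict graph is an invariant of the MSC, so I can freely move between different linearisations without changing the graph. The two properties to match are: (i) every SCC has size at most $k$, and (ii) no $\cgedge{RS}$ edge lies on a cyclic path.

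\medskip

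\noindent\textbf{From \kSous{k} to the graph condition.} Assume $msc$ is \kSous{k}, witnessed by a linearisation $e = e_1 \cdots e_n$ in which each $e_i \in \sendSet^{\leq k}\cdot\receiveSet^{\leq k}$ and each matched pair stays within a single block $e_i$. I would first show that $v \to^* v'$ forces the block index of $v$ to be no larger than that of $v'$, with equality possible only under tight constraints. The key lemma to establish is a \emph{monotonicity} property of block indices along edges: for an $\cgedge{SS}$, $\cgedge{SR}$, or $\cgedge{RR}$ edge $v \cgedge{XY} v'$, the block of $v$ is at most the block of $v'$; for an $\cgedge{RS}$ edge, the block of $v$ is \emph{strictly} less than that of $v'$. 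The intuition is that an $\cgedge{RS}$ edge records a receive happening before a later send on the same process, and because matched sends and receives are confined to one block and sends precede receives within a block, such a dependency cannot close within a single block. Granting this lemma, condition~(ii) follows immediately: any cyclic path returns to the same block, so it cannot contain a strict-increase $\cgedge{RS}$ edge. For condition~(i), every vertex in one SCC must sit in the same block (by the non-strict monotonicity and the fact that a cycle cannot strictly increase), and a single block contains at most $k$ message exchanges (at most $k$ sends, each message exchange using a distinct send), so each SCC has size at most $k$.

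\medskip

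\noindent\textbf{From the graph condition to \kSous{k}.} Conversely, assume every SCC has size at most $k$ and no $\cgedge{RS}$ edge lies on a cycle. The strategy is to contract each SCC to a single supernode, obtaining a DAG on which I perform a topological sort; this gives an ordering $U_1, \dots, U_m$ of the SCCs consistent with $\to$. Collapsing to SCCs is exactly what makes the $\cgedge{RS}$-freeness usable: since no $\cgedge{RS}$ edge is on a cycle, all $\cgedge{RS}$ edges go strictly forward in the topological order, so I can schedule, SCC by SCC, all the sends of $U_i$ before all its receives (each $U_i$ has at most $k$ message exchanges, hence at most $k$ sends and $k$ receives, fitting the $\sendSet^{\leq k}\cdot\receiveSet^{\leq k}$ shape), and no reception will be forced to wait for a send in a later block. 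I would then argue that the resulting word is a genuine linearisation: it respects $\prec_{po}$ (process order is captured by $\cgedge{SS}/\cgedge{RR}$ edges, which respect the SCC order) and $\prec_{src}$ (matched pairs either lie in a common SCC, and are placed send-then-receive within its block, or the $\cgedge{SR}$ edge orders them across blocks). Causal delivery is given by hypothesis, and condition~3 of the definition holds because matched pairs never straddle a block boundary by construction.

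\medskip

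\noindent The step I expect to be the main obstacle is the \emph{block-monotonicity lemma} in the forward direction, and dually the verification in the backward direction that the per-SCC schedule respects both $\prec_{po}$ and $\prec_{src}$ simultaneously. The subtlety lies with $\prec_{src}$: placing all sends of an SCC before all its receives could, a priori, reorder a matched send/receive pair that belongs to two \emph{different} SCCs in the wrong way, or violate causal delivery for messages sent to a common process. Handling unmatched sends correctly (they contribute a vertex with no receive component, so only $\cgedge{S\cdot}$ and $\cgedge{\cdot S}$ edges) and making sure the interaction with mailbox causal delivery does not secretly require an $\cgedge{RS}$-style dependency across the boundary is where the care is needed; I would isolate this in a lemma relating the $\cgedge{SR}$ and $\cgedge{RS}$ edges to the forced ordering of matched pairs and to causal delivery, before assembling the topological schedule.
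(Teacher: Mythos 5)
Your proposal is correct and follows essentially the same route as the paper: block-index monotonicity along conflict-graph edges (non-strict for SS/SR/RR, strict for RS) gives the forward direction, and a topological ordering of the SCCs with all sends scheduled before all receives inside each SCC gives the converse. The step you flag as the main obstacle is discharged in the paper by a short observation: the reordering can only postpone a receive past a send of the \emph{same} SCC, i.e.\ it could only turn an RS edge into an SR edge inside an SCC, which cannot occur since SCCs are RS-free by hypothesis; hence the conflict graph, and therefore the MSC, is unchanged by the permutation.
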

\iflong
\begin{proof}
\begin{quote}
{\bf Theorem 1}
   Let $msc$  be a causal delivery MSC.
   $msc$ is \kSous{k} iff every SCC in its
  conflict graph is of size at most $k$ and if  no RS edge occurs on any
  cyclic path.
\end{quote}

\begin{proof}
  Let $msc$ be a causal delivery MSC. 
  $\implies$ If $msc$ is \kSous{k},
  then $\exists e = e_1 \cdots e_n$ such that $msc(e) = msc$ where  
  each $e_i$ is a \kE{k}. For every vertex $v$ of the conflict graph 
  $\cgraph{e}$ 
  there is exactly one index $\iota(v)\in [1..n]$ such that
  $v\subseteq e_{\iota(v)}$.
  Now, observe that if there is an edge from $v$ to $v'$ in the conflict graph,
  some action of $v$ must happen before some action of $v'$,
  i.e., $\iota(v)\leq \iota(v')$. So if $v,v'$ are on a same SCC, $\iota(v)=\iota(v')$,  they must both
  occur within the same \kE{k}. Since each \kE{k} contains at most
  $k$ message exchanges, this shows that all SCC are of size at most $k$.
  Observe also that if $v\cgedge{RS}v'$, then $\iota(v)<\iota(v')$, since
  within a \kE{k} all the sends precede all the receives.
  So an RS edge cannot occur on a cyclic path.

  $\Longleftarrow$ 
  Let $e$ be a linearisation of $msc$. 
  Assume now that conflict graph $\cgraph{e}$ 
  neither contains a SCC of size greater than $ k$ nor a cyclic path
  with an RS edge.
  Let $V_1,\dots,V_n$ be the set of maximal SCCs of the conflict graph,
  listed in some topological order. For a fixed $i$,
  let
  $e_i=s_1\dots s_mr_1\dots r_{m'}$ be the enumeration of the actions
  of the message exchanges of $V_i$ defined by taking
  first all send actions of $V_i$ in the order in which they appear in $e$,
  and second all the receive actions of $V_i$ in the same order as
  in $e$. Let $e'=e_1\dots e_n$. Then $\cgraph{e'}$ is the same as 
  $\cgraph{e}$: indeed, the permutation of
  actions we defined could only postpone a receive after a send of
  a same SCC,
  therefore it could only replace some $v\cgedge{RS}v'$ edge
  with an $v'\cgedge{SR}v$ edge between two vertices $v,v'$
  of a same SCC, but we assumed that the SCCs do not contain RS edges,
  so it does not happen.
  Therefore $e$ and $e'$ have the same conflict graph,
  and $msc(e')=msc(e)$. Moreover, also by hypothesis,
  $|V_i|\leq k$ for all $i$, therefore each $e_i$ is a \kE{k},
  and finally 
  $msc$ is \kSous{k}. \qed
\end{proof}
\end{proof}
\fi

\begin{figure}[t]
\begin{tikzpicture}[scale = 0.8]
	\begin{scope}
		\draw (0,0) to (0,-3.5) ;
		\draw (1,0) to (1,-3.5) ;
		\draw (2,0) to (2,-3.5) ;
		\draw (0,0.25) node{$p$} ;
		\draw (1,0.25) node{$q$} ;
		\draw (2,0.25) node{$r$} ;

		\draw[->,>=latex] (0,-.5) to node [above, pos=0.7,sloped] {$\amessage_0$} (2,-3);
		\draw[->,>=latex] (2,-.5) to node [above,sloped, pos = 0.2] {$\amessage_1$} (1,-1.5);
		\draw[->,>=latex] (1,-1) to node [above] {$\amessage_2$} (0,-1);
		\draw[->,>=latex] (1,-2) to node [above] {$\amessage_3$} (0,-2);
		\draw[->,>=latex] (2,-2.5) to node [below, pos=.7] {$\amessage_4$} (0,-2.5);
  		\draw (1,-4.2) node[above]{\textbf{(a)}};
	\end{scope}
	\begin{scope}[shift={(3.5,0)}]
		\draw (0,0) to (0,-3.5) ;
		\draw (1,0) to (1,-3.5) ;
		\draw (2,0) to (2,-3.5) ;
		\draw (0,0.25) node{$p$} ;
		\draw (1,0.25) node{$q$} ;
		\draw (2,0.25) node{$r$} ;
		\draw[->,>=latex] (0,-1) to node [above, pos = 0.9] {$\amessage_1$} (.5,-1);
		\draw[->,>=latex,dashed] (.5,-1) to (1,-1);
		\draw[->,>=latex] (1,-2) to node [above] {$\amessage_2$} (0,-2);
		\draw[->,>=latex] (2,-3) to node [above] {$\amessage_3$} (1,-3);
   		\draw (1,-4.2) node[above]{\textbf{(b)}};
                \draw[dashed] (-0.5,-1.5) -- (2.5,-1.5);
                \draw[dashed] (-0.5,-2.5) -- (2.5,-2.5);
	\end{scope}
	\begin{scope}[shift = {(7,0)}]
		\coordinate(pa) at (0,0) ;
		\coordinate (pb) at (0,-3.5) ;
		\coordinate (qa) at (1,0) ;
		\coordinate (qb) at (1,-3.5) ;
		\coordinate (ra) at (2,0) ;
		\coordinate (rb) at (2,-3.5) ;
		\coordinate (sa) at (3,0);
		\coordinate (sb) at (3, -3.5);

		\draw (0,0.25) node{$p$} ;
		\draw (1,0.25) node{$q$} ;
		\draw (2,0.25) node{$r$} ;
		\draw (3, 0.25) node{$s$} ;
		\draw (pa) -- (pb) ;
		\draw (qa) -- (qb) ;
		\draw (ra) -- (rb) ;
		\draw (sa) -- (sb) ;

		\coordinate (s1) at (1,-0.5);
		\coordinate (r1) at (2,-3);
		\draw[>=latex,->] (s1) to node [above,sloped, pos = 0.1]{$\amessage_1$} (r1);

		\coordinate (s2) at (0, -1);
		\coordinate (r2) at (1, -1);
		\draw[>=latex,->] (s2) -- (r2);
		\draw (s2) node[above right]{$\amessage_2$};

		\coordinate (s3) at (3, -1.5);
		\coordinate (r3) at (1, -1.5);
		\draw[>=latex,->] (s3) -- (r3);
		\draw (s3) node[above left] {$\amessage_3$};

		\coordinate (s4) at (0, -2);
		\coordinate (r4) at (2, -2);
		\draw[>=latex,->] (s4) -- (r4);
		\draw (s4) node[above right]{$\amessage_4$};

		\coordinate (s5) at (3, -2.5);
		\coordinate (r5) at (2, -2.5);
		\draw[>=latex,->] (s5) -- (r5);
		\draw (s5) node[above left]{$\amessage_5$};

		\draw (1.5,-4.2) node[above]{\textbf{(c)}};
 	\end{scope}
 	\draw (13,-4.2) node[above]{\textbf{(d)}};
 	\begin{scope}[shift={(13,-1.25)}]
   		\node[draw] (a) at (0,0) {$v_1$};
		\node[draw] (b) at (-2,0) {$v_2$};
		\node[draw] (c) at (2,0) {$v_3$};
		\node[draw] (d) at (-1.5,-1.5) {$v_4$};
		\node[draw] (e) at (1.5,-1.5) {$v_5$};

		\draw[->] (a) to node[sloped, midway, above]{SR} (b);
		\draw[->] (a) to node[sloped, midway, above]{SR} (c);
		\draw[->] (b) to node[sloped, midway, below] {SS} (d);
		\draw[->] (d) to node[sloped, midway, below]{RR} (e);
		\draw[->] (d) to node[sloped, midway, above]{RR} (a);
		\draw[->] (e) to node[sloped, midway, above]{RR} (a);
		\draw[->] (c) to node[sloped, midway, below]{SS} (e);
		\draw[->] (b) to[bend left = 40] node[sloped, midway, above]{RR} (c);
	\end{scope}
	\end{tikzpicture}
\caption{(a) the MSC of Example~\ref{exmp:ksynchronous}.\ref{exmp:nonksable}. (b) the MSC of Example~\ref{exmp:ksynchronous}.\ref{exmp:ksable}. (c) the MSC of Example~\ref{ex:counter-exmp-SCC} and (d) its conflict graph.}
\label{fig:ex-counter-exmp-SCC}
\end{figure}

\begin{example}[A \kSous{5} MSC]\label{ex:counter-exmp-SCC}
  Fig.~\ref{fig:ex-counter-exmp-SCC}c depicts a \kSous{5} MSC,  that is not \kSous{4}. Indeed, its conflict graph (Fig.~\ref{fig:ex-counter-exmp-SCC}d) contains a
  SCC of size 5 (all vertices are on the same SCC).
\end{example}

\begin{comp}\label{rmk:characterisation} 
Bouajjani \textit{et al.}
give a characterisation of \kSous{k} executions similar to ours, but they use the word \emph{cycle} instead of SCC, and the subsequent developments of the paper suggest that they intended to say
\emph{Hamiltonian cycle} (i.e., a cyclic path that does not go twice
through the same vertex). It is not the case that a MSC is \kSous{k} if and only if
every Hamiltonian cycle in its
conflict graph is of size at most $k$ and if  no RS edge occurs on any
cyclic path. Indeed,
consider again Example~\ref{ex:counter-exmp-SCC}. This graph is not Hamiltonian,
and the largest Hamiltonian cycle indeed is of size 4 only. But as we already
discussed in Example~\ref{ex:counter-exmp-SCC}, the corresponding MSC is not
\kSous{4}.

As a consequence, the algorithm that is presented in \cite{DBLP:conf/cav/BouajjaniEJQ18}
for deciding whether a system is \kSable{k} is not correct as well: the MSC of Fig.~\ref{fig:ex-counter-exmp-SCC}c would
be considered  \kSous{4} according to this algorithm, but it is not.
\end{comp}


\section{Decidability of reachability for \kSable{k} systems}
\label{sec:reachability} 

We show  that the reachability problem is decidable for \kSable{k} systems.
While proving this result, we have to face several non-trivial aspects of
causal delivery that were missed in~\cite{DBLP:conf/cav/BouajjaniEJQ18}
and that require a completely new approach. 


\begin{definition}[\kSable{k} system]
A system $\system$ is \emph{\kSable{k}} if all its executions are \kSable{k}, 
i.e., $\sTr_k(\system) = \asTr(\system)$. 
\end{definition}

In other words, a system $\system$ is \kSable{k} if for every
execution $e$ of $\system$, $msc(e)$ may be divided into \kE{k}s. 
\begin{remark}
In particular, a system may be \kSable{k} even if
some of its executions fill the buffers with more than $k$ messages.
For instance, the only linearisation of the \kSous{1} MSC Fig.~\ref{fig:ex-counter-exmp-SCC}b that is an execution of the system needs buffers of size 2. 
\end{remark}

 For a \kSable{k} system, the reachability problem reduces to the rea\-cha\-bility through a  \kSable{k} execution. To show that
\kSous{k} reachability is decidable, we establish that
the set of \kSous{k} MSCs is regular.
More precisely, we want to define a finite state automaton that
accepts a sequence
$e_1\cdot e_2\cdots e_n$ of \kE{k}s if and only if they satisfy causal delivery.

We start by giving a graph-theoretic characterisation of causal delivery.
For this, we define the \emph{extended edges}
$v\cgedgeD{XY}v'$ of a given conflict graph.
The relation $\cgedgeD{XY}$ is
defined in Fig.~\ref{fig:conflict-graph-extension} with $X,Y \in \{S,R\}$. Intuitively,
$v \cgedgeD{XY}v'$ expresses that 
 event  $X$ of $v$
must happen before  event  $Y$ of $v'$ due to either their
order on the same machine (Rule 1), or the fact that a send
happens before its matching receive (Rule 2), or due to the mailbox
semantics (Rules 3 and 4), or because of a chain of such dependencies (Rule 5).
We observe that in the \emph{extended conflict graph}, obtained applying such rules, a cyclic dependency appears whenever causal delivery is not satisfied.

\begin{figure}[t]
\begin{center}
\begin{minipage}{.33\columnwidth}
\begin{center}
\AxiomC{\small $v_1 \cgedge{XY} v_2$}  
\LeftLabel{\small (Rule 1)}
\UnaryInfC{\small $v_1 \cgedgeD{XY} v_2$}
\DisplayProof
\end{center}
\end{minipage}
\begin{minipage}{.33\columnwidth}
\begin{center}
\AxiomC{\small$v \cap  R \neq \emptyset$}
\LeftLabel{\small(Rule 2)}
\UnaryInfC{\small$v \cgedgeD{SR} v$}
\DisplayProof
\end{center}
\end{minipage}
\begin{minipage}{.33\columnwidth}
\begin{center}
\AxiomC{\small$v_1 \cgedge{RR} v_2$}
\LeftLabel{\small(Rule 3)}
\UnaryInfC{\small$v_1 \cgedgeD{SS} v_2$}
\DisplayProof
\end{center}
\end{minipage}
\end{center}

\begin{center}
\begin{minipage}{.5\columnwidth}
\begin{center}
 \AxiomC{\small
    \begin{tabular}{c}
	$v_1 \cap R \neq \emptyset$
    \qquad
    $v_2 \cap R = \emptyset$
    \\
    $\receiver{v_1}=\receiver{v_2}$
  \end{tabular}}
  \LeftLabel{\small(Rule 4)}
\UnaryInfC{\small$v_1 \cgedgeD{SS} v_2$}
\DisplayProof
\end{center}
\end{minipage}
\begin{minipage}{.5\columnwidth}
\begin{center}
\AxiomC{\small$v_1 \cgedgeD{XY}\cgedgeD{YZ} v_2$}
\LeftLabel{\small(Rule 5)}
\UnaryInfC{\small$v_1 \cgedgeD{XZ} v_2$}
\DisplayProof
\end{center}
\end{minipage}
\end{center}

\vspace*{-0.5cm}
\caption{\label{fig:rules-cgedgeD}Deduction rules for extended dependency edges of the conflict graph \vspace*{-0.5cm}}
\label{fig:conflict-graph-extension}
\end{figure}


 \begin{example}\label{ex:causal-delivery} 
 Fig.~\ref{fig:exmp-invalid-execution-B}a and \ref{fig:exmp-invalid-execution-B}b 
  depict  an MSC and its associated conflict graph with
  some extended edges. 
  This MSC violates causal delivery and  there is a cyclic dependency
  $v_1\cgedgeD{SS}v_1$. \end{example}

\begin{theorem}[Graph-theoretic characterisation of causal delivery]\label{thm:causal-delivery-graphically}
  An MSC 
  satisfies causal delivery iff
  there is no cyclic causal dependency of the form $v\cgedgeD{SS}v$ for
  some vertex $v$ of its extended conflict graph.
\end{theorem}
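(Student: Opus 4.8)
The plan is to relate the extended edges $\cgedgeD{XY}$ to genuine ordering constraints on the send and receive events of a linearisation. For a message exchange $v$, write $S(v)$ and $R(v)$ for its send and (if matched) receive events. The soundness statement driving the forward direction is: in any causal delivery linearisation $e$, if $v_1 \cgedgeD{XY} v_2$ then the $X$-event of $v_1$ precedes the $Y$-event of $v_2$ in $e$. I would prove this by induction on the derivation. Rule~1 holds because a basic edge records program order on a common process; Rule~2 because a send precedes its own matched receive; Rule~5 by transitivity. The two cases where causal delivery is actually used are Rules~3 and~4: for Rule~3, if $R(v_1)$ precedes $R(v_2)$ on their common receiver but $S(v_2)$ preceded $S(v_1)$, the definition of causal delivery (two sends to the same $q$ are received in the order they were sent) would force $R(v_2)$ before $R(v_1)$, a contradiction; for Rule~4, if the unmatched send $S(v_2)$ preceded the matched send $S(v_1)$ to the same receiver, this is exactly the forbidden pattern and causal delivery is violated. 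Soundness then yields the forward direction at once: a self-loop $v \cgedgeD{SS} v$ would force $S(v)$ strictly before itself, so no causal delivery linearisation can exist.

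For the converse I would assume there is no self-loop $v\cgedgeD{SS}v$ and build a causal delivery linearisation explicitly. Consider the directed graph $G$ whose vertices are all send and receive events, with an edge from the $X$-event of $v_1$ to the $Y$-event of $v_2$ whenever $v_1 \cgedgeD{XY} v_2$. The plan is to show $G$ is acyclic, take any topological ordering $e$ of its vertices, and verify $e$ is the required linearisation. That $e$ respects $\prec_{po}$ and $\prec_{src}$ is immediate from Rules~1 and~2, so $e$ is a genuine linearisation of the MSC. Causal delivery then follows from Rules~3 and~4: if two matched sends target the same $q$, the basic $\cgedge{RR}$ edge between their receives yields through Rule~3 a $\cgedgeD{SS}$ edge forcing their sends into the same order as their receives in $e$; and if a matched and an unmatched send target the same $q$, Rule~4 forces the matched one first, so the forbidden ``unmatched-before-matched'' pattern never occurs in $e$.

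The crux is therefore the acyclicity of $G$, and this is the step I expect to be the main obstacle: an arbitrary cycle in $G$ need not pass through a send event, whereas we only assumed the absence of $\cgedgeD{SS}$ self-loops, so cycles visiting receive events only must be excluded by a structural argument. The key observation is that a cycle through an event $X(v)$ can, by composing its edges with Rule~5 around the loop, be collapsed to a self-loop $v \cgedgeD{XX} v$; hence a cycle visiting any send event already yields the contradiction $v \cgedgeD{SS} v$. It thus suffices to show every cycle visits a send. For this I would split the atomic generators of $\cgedgeD{}$ into those of Rules~1 and~2, which generate exactly the MSC causality $(\prec_{po}\cup\prec_{src})^+$ and are therefore acyclic, and the ``mailbox'' generators of Rules~3 and~4, which both produce $\cgedgeD{SS}$ edges, i.e.\ edges incident to send events. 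Expanding any cycle of $G$ into a closed walk over these atomic generators, either it uses only Rule~1/2 generators, contradicting that MSC causality is a strict partial order, or it uses at least one Rule~3/4 generator and hence visits a send event, collapsing to the forbidden self-loop $v\cgedgeD{SS}v$. Either way the hypothesis rules out cycles, $G$ is acyclic, and the construction goes through.
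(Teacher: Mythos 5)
Your proof is correct and follows essentially the same route as the paper's: the forward direction is the same induction on the derivation of $v\cgedgeD{XY}v'$ showing that any causal-delivery linearisation extends the extended edges, and the backward direction likewise obtains the linearisation as a topological order of the event-level relation induced by $\cgedgeD{XY}$ and reads causal delivery off Rules~3 and~4. The only (cosmetic) difference is in how acyclicity is organised: the paper rules out $\cgedgeD{RR}$ cycles separately and is terse about mixed cycles, whereas you decompose an arbitrary cycle into atomic generators and note that every Rule~3/4 generator is incident to a send event, which is a slightly more explicit rendering of the same argument.
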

\iflong
\begin{proof}
\begin{quote}
{\bf Theorem 2 }
  An MSC   
  satisfies causal delivery iff
  there is no cyclic causal dependency of the form $v\cgedgeD{SS}v$ for
  some vertex $v$ of the associated extended conflict graph.
\end{quote} 
 
 \begin{proof}
  $\Rightarrow$ 
  Assume that $msc$ satisfies causal delivery. Then there is
  a total order $\totord$ on the events that is a linearisation of
  $\prec=(\prec_{po}\cup \prec_{src})^+$ (cfr. Definition~\ref{def:msc}) with the property
  stated in Definition~\ref{def:causal-delivery}.
  We claim that if $v\cgedgeD{XY}v'$, $\{a_i\}=v\cap X$ and $\{a_j\}=v'\cap Y$,
  then $i\totord j$. The proof of this claim is by 
  induction on the derivation tree of $v\cgedgeD{XY}v'$:
  \begin{itemize}
  \item case of Rule 1 : $(i,j)\in \prec_{po}$, so $i\totord j$;
  \item case of Rule 2 : $(i,j)\in \prec_{src}$, so $i\totord j$;
  \item cases of Rules 3 and 4 : by definition of causal delivery;
  \item case of Rule 5 : there is $v_3$ such that $v_1\cgedgeD{XZ}v_3\cgedge{ZY}v_2$. Let $a_l$ be the $Z$ action of $v_3$. By inductive hypothesis,
    $i\totord{l}\totord{j}$, and by transitivity of $\totord$, $i\totord j$.
  \end{itemize}
  So we proved our claim, and $\totord$ extends $\cgedgeD{XY}$.
  As a consequence, there is no $\cgedgeD{SS}$ cycle.
  \\
  $\Leftarrow$ Assume that the extended dependency graph does not contain
  any $\cgedgeD{SS}$ cycle. Let us first show that it does not
  contain any $\cgedgeD{RR}$ cycle either. By contradiction assume there
  is some $v$ such that $v\cgedgeD{RR}v$. Since there is no $\cgedgeD{SS}$ cycle,
  there is no $v'$ on the cyclic path such that
  $v\cgedgeD{RS}v'\cgedgeD{SR}v$. So $v(\cgedge{RR})^*v$, and we reach a contradiction,
  as $\cgedge{RR}$ is included in $\prec_{po}$ which is acyclic.
  So $\cgedgeD{RR}$ is acyclic, and $\cgedgeD{XY}$ defines a
  partial order on actions. 
  Let us pick some linearisation of that order, and let $\totord$
  denote the associated order on indexes, i.e., $\totord$ is a total
  order such that for any $X$ action $a_i\in v_i$ and $Y$ action $a_j\in v_j$,
  $v_i\cgedgeD{XY}v_j$ implies $i\totord j$. We want to show that
  $\totord$ satisfies the property of Definition~\ref{def:causal-delivery}.
  Let $i\totord j$ with $a_i,a_j\in S$ and $\receiver{a_i}=\receiver{a_j}$,
  and let $v_i,v_j$ be the two vertices such that $a_i\in v_i$ and $a_j\in v_j$.
  Since $\totord$ extends $\cgedgeD{XY}$, either
  $v_i\cgedgeD{SS} v_j$ or $\neg(v_j\cgedgeD{SS}v_i)$.
  \begin{itemize}
  \item
    Assume  that $v_i\cgedgeD{SS}v_j$. If $v_i$ is unmatched, then $v_j$ must be unmatched otherwise by Rule 4 we would have $v_j\cgedgeD{SS} v_i$, which would
    violate the acyclicity hypothesis.
    On the other hand, if both $v_i$ and $v_j$ are matched, then $v_i\cgedge{RR} v_j$, otherwise we would have  $v_j\cgedge{RR} v_j$ and by Rule 3 $v_j\cgedgeD{SS} v_j$, which would
    violate the acyclicity hypothesis. So there are $i',j'$ such that $v_i=\{a_i,a_{i'}\}$, $v_j=\{a_j,a_{j'}\}$ and $i'\totord j'$, as required by
    Definition~\ref{def:causal-delivery}.
\item
  Assume  that $\neg(v_i\cgedgeD{SS}v_j)$
  and $\neg(v_j\cgedgeD{SS}v_i)$. 
  Then both sends are unmatched (because of Rules 3 and 4),
  therefore the property
  of Definition~\ref{def:causal-delivery} holds, concluding the proof.
  \end{itemize}\qed
\end{proof}

\end{proof}
\fi

Let us now come back to our initial problem: we want to recognise with  finite memory the
sequences $e_1, e_2 \dots e_n$ of \kE{k}s that composed give an MSC that satisfies causal delivery.
 We proceed by reading each \kE{k} one by one in sequence.
 This entails that, at each step, we have only a partial view of the global 
 conflict graph. Still, we want to determine whether the
acyclicity condition of
Theorem~\ref{thm:causal-delivery-graphically} is satisfied in the global conflict graph.
The crucial observation is
that  only the edges  generated by Rule 4  may ``go back in time''.
This means that we have to remember enough information 
from the previously examined \kE{k}s to determine whether the current \kE{k}
 contains a vertex $v$ that shares an edge with
 some unmatched vertex $v'$ seen in a previous \kE{k} and whether 
 this could participate in a cycle.
This is achieved by 
computing two sets of processes $C_{S,p}$ and $C_{R,p}$ that collect the following information:
  a process $q$ is in $C_{S,p}$ if it performs a send action
causally after an unmatched send to $p$, or it is the sender of the unmatched
send;
a process $q$ belongs to $C_{R,p}$ if
it receives a message that was sent after some unmatched message
directed to $p$.
More precisely, 
we  have:
\begin{align*}
C_{S,p} = &\  \{\procofactionv{S}{v} \mid v' \cgedgeD{SS} v \mbox{ \& } v' 
\mbox{ is unmatched \& } \procofactionv{R}{v'} = p  \} \\
C_{R,p} =  &\  \{\procofactionv{R}{v} \mid v' \cgedgeD{SS} v \mbox{ \& } v' \mbox{ is unmatched \& } \procofactionv{R}{v'} = p \mbox{ \& }  v \cap R \neq \emptyset \}
\end{align*}

These sets  abstract and carry from one \kE{k} to another the necessary information to detect violations of causal delivery. We want
to compute them in any local conflict graph of a \kE{k} incrementally, i.e., 
knowing what they were at the end of the previous \kE{k}, we want to
compute them at the end of the current one. More precisely, let $e=s_1\cdots s_m\cdot r_1\cdots r_{m'}$ be a  \kE{k},   $\cgraph{e}=(V,E)$  its conflict graph
 and  
 $B:\procSet\to(2^{\procSet}\times 2^{\procSet})$ the function that 
associates to each $p\in\procSet$ the two sets
$B(p)=(C_{S,p},C_{R,p})$. Then, the conflict graph $\cgraph{e,B}$ is the graph
$(V',E')$ with $V'=V\cup\{\lambdanode_p\mid p\in\procSet\}$ and $E'\supseteq E$ as 
defined below.
For each
process $p\in \procSet$,
the ``summary  node'' $\lambdanode_p$  shall account for
all past unmatched messages sent to $p$ that occurred in some \kE{k} before $e$.
$E'$ is the set $E$ of edges
$\cgedge{XY}$ among message exchanges of $e$, as in Definition~\ref{def:conflict-graph},
augmented with the following set of extra edges taking into
account the summary nodes.
\begin{align}
  & \{\lambdanode_p \cgedge{SX} v \mid\procofactionv{X}{v} \in C_{S,p} 
  \mbox{ \& } v\cap X\neq\emptyset \mbox{ for some }X\in\{S,R\}\}  \label{eq:SS}
  \\ \cup \  &
 \{\lambdanode_p \cgedge{SS} v\mid \procofactionv{X}{v} \in C_{R,p} 
 \mbox { \& } v \cap R \neq \emptyset \mbox{ for some }X\in\{S,R\}\}  \label{eq:RR}
  \\ \cup \ & 
\{ \lambdanode_p \cgedge{SS} v \mid  \procofactionv{R}{v} \in C_{R,p}  \mbox{ \& }  v \mbox{ is unmatched} \}   \label{eq:unmatchedSS}
  \\ \cup \ &
 \{ v \cgedge{SS} \lambdanode_p \mid \procofactionv{R}{v} = p \mbox{ \& }   v \cap R \neq \emptyset   \} 
\ \cup \
\{ \lambdanode_q \cgedge{SS} \lambdanode_p \mid p \in C_{R,q}  \}  \label{eq:psiSS}
\end{align}
These extra edges summarise/abstract the connections to and from previous \kE{k}s.
Equation (\ref{eq:SS}) considers connections   $\cgedge{SS}$ and $\cgedge{SR}$ that are due to two sends messages or, respectively, a send and a receive on the same process. 
Equations (\ref{eq:RR}) and (\ref{eq:unmatchedSS}) considers connections   $\cgedge{RR}$ and $\cgedge{RS}$ that are due to two received messages or, respectively, a receive and a subsequent send on the same process. Notice how the rules in Fig. \ref{fig:conflict-graph-extension} would then imply the existence of a connection $\cgedgeD{SS}$, in particular Equation (\ref{eq:unmatchedSS}) abstract the existence of an edge $\cgedgeD{SS}$ built because of Rule 4.
Equations in (\ref{eq:psiSS})
abstract edges that would connect the current \kE{k} to previous ones. 
As before those edges in the global conflict graph would correspond to extended edges added because of Rule 4 in Fig.~\ref{fig:conflict-graph-extension}. 
Once we have this enriched local view of the conflict graph, we take its extended version. 
Let  $\cgedgeD{XY}$ denote  the edges of the extended conflict graph as defined from rules in Fig.~\ref{fig:conflict-graph-extension} taking
into account the new vertices $\lambdanode_p$ and their edges.

\begin{figure}[t]
\begin{prooftree}
\AxiomC{\small \stackanchor{ 
$e=s_1\cdots s_m\cdot r_1\cdots r_{m'}\quad s_1\cdots s_m\in S^* \quad r_1\cdots r_{m'}\in R^* \quad  0\leq m'\leq m \leq k$}
		{ \stackanchor	{$(\globalstate{l},\B_0)\xRightarrow{e}(\globalstate{l'},\B) \mbox{ for some } \B$}
			{ \stackanchor	{$\mbox{ for all }p\in\procSet\quad B(p)=(C_{S,p},C_{R,p}) \mbox{ and } B'(p)=(C_{S,p}',C_{R,p}'), $} 
				{\stackanchor { $\mathsf{Unm}_p=\{\lambdanode_p\}\cup\{v\mid v  \mbox { is unmatched,  } \receiver{v}=p\}$}
					{\stackanchor { $C_{X,p}'=C_{X,p}
					 \cup  \{p \mid p  \in C_{X,q}, v \cgedgeD{SS} \lambdanode_q, (\procofaction{R}{v} = p \mbox{ or } v = \lambdanode_p)\} ~\cup$} 
		{\stackanchor 	{
		\stackanchor{$ \{\procofaction{X}{v} \mid v \in\mathsf{Unm}_p\cap V, X = S \}\cup  \{\procofactionv{X}{v'} \mid v\cgedgeD{SS} v', v\in\mathsf{Unm}_p, v \cap X \neq \emptyset\}
		$}}
 		{$\mbox{for all } p \in \procSet, p\not\in C_{R,p}'$  }	
}}}}}}
\UnaryInfC{{\small$ (\globalstate l,B)\transitionKE{e}{k}(\globalstate {l'},B')$}}
\end{prooftree}
\vspace*{-.5cm}
  \caption{\label{fig:kexchangePROPO} 
  Definition of the relation $\transitionKE{e}{k}$ 
 }
\end{figure}

Finally, let $\system$ be a system and  $\transitionKE{e}{k}$ be the transition relation given in Fig.~\ref{fig:kexchangePROPO}
among abstract configurations of the form $(\globalstate l,B)$.
$\globalstate l$ is a global control state of  $\system$
and $B:\procSet\to\big(2^{\procSet}\times 2^{\procSet}\big)$ is the function  defined above that 
associates to each process $p$ a pair of sets of processes
$B(p)=(C_{S,p},C_{R,p})$. Transition
$\transitionKE{e}{k}$ updates these sets
with respect to the current \kE{k} $e$. 
Causal delivery is verified by checking that for all $ p \in \procSet, p\not\in C_{R,p}'$ meaning that there is no cyclic dependency as stated in Theorem \ref{thm:causal-delivery-graphically}.
%
%
The initial state  is $(\globalstate l_0,B_0)$, where 
$B_0:\procSet\to(2^{\procSet}\times 2^{\procSet})$ denotes the function such that $B_0(p)=(\emptyset,\emptyset)$ for all $p\in \procSet$.

    \begin{figure}[t]
\begin{center}
\begin{tikzpicture}[scale = 0.8, every node/.style={scale=0.8}]
\begin{scope}[shift = {(6,-0.5)}]
	\node[draw] (a) at (0,0) {$v_1$};
	\node[draw] (b) at (0,-1.5) {$v_2$};
	\node[draw] (c) at (0,-3) {$v_3$};
	\node[draw] (d) at (0,-4.5) {$v_4$};	
	
	\draw[->] (a) -- (b)node[midway, left]{$SS$};
	\draw[->] (b) -- (c)node[midway, left]{$RR$};
	\draw[->] (c) -- (d)node[midway, left]{$SS$};
	\draw[->, dashed] (a) to[bend left = 50] (b);
	\draw[->, dashed] (b) to[bend left = 50] (c); 
	\draw[->, dashed] (c) to[bend left = 50] (d);
	\draw[->, dashed] (d) to[bend left = 90] (a); 
	
	\draw(-1.6,-2.25) node[left]{$SS$} ;
	\draw(0.45,-0.75) node[right]{$SS$}; 
	\draw(0.45,-2.25) node[right]{$SS$}; 
	\draw(0.45,-3.75) node[right]{$SS$}; 

	\draw(0,-4.8) node[below]{\textbf{(b)}};
\end{scope}  
\begin{scope}
  \coordinate(pa) at (0,-0.5) ;
  \coordinate (pb) at (0,-4.5) ;
  \coordinate (qa) at (1,-0.5) ;
  \coordinate (qb) at (1,-4.5) ;
  \coordinate (ra) at (2,-0.5) ;
  \coordinate (rb) at (2,-4.5) ;
  \coordinate (sa) at (3,-0.5) ;
  \coordinate (sb) at (3,-4.5) ;

  \draw (pa) node[above]{$p$} ;
  \draw (qa) node[above]{$q$} ;
  \draw (ra) node[above]{$r$} ;
  \draw (sa) node[above]{$s$} ;
  \draw (pa) -- (pb) ;
  \draw (qa) -- (qb) ;
  \draw (ra) -- (rb) ;
  \draw (sa) -- (sb) ;
  
  \coordinate (s1) at (1,-1); 
  \coordinate (m1) at (1.5, -1);
  \coordinate (r1) at (2,-1); 
  \draw[>=latex,->] (s1) -- (m1) node[pos=0.7, above] {$\amessage_1$};
	\draw[>=latex,->,dashed] (m1) -- (r1);
  
  \coordinate (s2) at (1, -2); 
  \coordinate (r2) at (3, -2); 
  \draw[>=latex, ->] (s2) -- (r2); 
  \draw (s2) node[above right]{$\amessage_2$};
 
  \coordinate (s3) at (0, -3); 
  \coordinate (r3) at (3, -3); 
  \draw[>=latex, ->] (s3) -- (r3); 
  \draw (s3) node[above right]{$\amessage_3$};

  \coordinate (s4) at (0, -4); 
  \coordinate (r4) at (2, -4); 
  \draw[>=latex, ->] (s4) -- (r4); 
  \draw (s4) node[above right]{$\amessage_4$};
  
		\coordinate (k2a) at (-0.5,-2.5); 
	\coordinate (k2b) at (3.5,-2.5); 
	\draw[dashed] (k2a) -- (k2b);

	\coordinate (acc1b) at (-0.5,-0.5); 
	\coordinate (acc1a) at (-0.3,-0.5); 
	\coordinate (acc1c) at (-0.5,-2.4); 
	\coordinate (acc1d) at (-0.3,-2.4); 
	\draw (acc1a) -- (acc1b) -- (acc1c) -- (acc1d);

	\coordinate (acc2b) at (-0.5,-2.6); 
	\coordinate (acc2a) at (-0.3,-2.6); 
	\coordinate (acc2c) at (-0.5,-4.5); 
	\coordinate (acc2d) at (-0.3,-4.5); 
	\draw (acc2a) -- (acc2b) -- (acc2c) -- (acc2d);

	\draw (-0.5, -1.5) node[left]{ $e_1$};
	\draw (-0.5, -3.5) node[left]{ $e_2$};

\draw(1.5, -5.3) node[below] {\textbf{(a)}};
	
\end{scope} 

\begin{scope}[shift = {(10,0)}]
	\node[draw] (a) at (0,-0.5) {$v_1$};
	\node[draw] (b) at (0,-2) {$v_2$};
	
	\draw[->] (a) -- (b)node[midway, left]{$SS$};
	\draw[->, dashed] (a) to[bend left = 60] node[midway, left]{$SS$} (b);


	\node[text width=4cm,text centered, scale=1.2] at(3.2,-1.2){
	  $\begin{array}{c@{\ =\ }c@{\quad}c@{\ =\ }c}
            C_{S,r} &   \emptyset   \\
             C_{R,r} & \emptyset \\
            C'_{S,r} & \{q\}  \\
            C'_{R,r} & \{s\} \\
	\end{array}$};


	\coordinate (k1a) at (-2,-2.5); 
	\coordinate (k1b) at (4,-2.5); 
	\draw[dashed] (k1a) -- (k1b);
\end{scope} 

\begin{scope}[shift = {(10,-4)}]
	\draw(1.25,-1.3) node[below] {\textbf{(c)}};

	\node[draw] (c) at (0,0.5) {$v_3$};
	\node[draw] (d) at (0,-1) {$v_4$};	
	\node[draw] (e) at (-1.5,1) {$\lambdanode_r$};
	
	\draw[->] (c) -- (d)node[midway, left]{$SS$};
	\draw[->, blue] (b) -- (c)node[midway, right] {$RR$};
	\draw[->, dashed, blue, bend left=60] (a) to node [midway, right] {$SS$}(c);
	\draw[->] (e) to node[midway, below left]{$SS$} (c);
	\draw[->, dashed] (e) to[bend right = 60] node [midway, below left] {$SS$}(d);


	\node[text width=4cm,text centered, , scale=1.2] at(3.3,0.2){
	  $\begin{array}{c@{\ =\ }c@{\quad}c@{\ =\ }c}
            C_{S,r} &   \{q\} \\
            C_{R,r} & \{s\} \\
            C'_{S,r} & \{p,q\} \\	
            C'_{R,r} & \{s,r\}\\
	\end{array}$};
\	
	\end{scope} 
\end{tikzpicture} 
\end{center}
\vspace*{-0.5cm}
\caption{ (a) an MSC 
	(b) its associated global conflict graph, (c) the conflict graphs of its \kE{k}s
	}
\label{fig:exmp-invalid-execution-B}
\end{figure}

\begin{example}[An invalid execution]   
Let $e = e_1 \cdot e_2$ with $e_1$ and $e_2$ the two \kE{2}s of this execution.
such that $e_1 =  \send{q}{r}{\amessage_1}\cdot \send{q}{s}{\amessage_2}\cdot \rec{q}{s}{\amessage_2}$ and $ e_2 =  \send{p}{s}{\amessage_3}\cdot \rec{p}{s}{\amessage_3}\cdot \send{p}{r}{\amessage_4}\cdot \rec{p}{r}{\amessage_4}$.
%
Fig.~\ref{fig:exmp-invalid-execution-B}a and \ref{fig:exmp-invalid-execution-B}c show the MSC and corresponding conflict graph of each of the \kE{2}s.
Note that two edges of the global graph (in blue) ``go across'' \kE{k}s. These edges do not belong to the local conflict graphs and are mimicked
by the incoming and outgoing edges of summary nodes. 
The values of sets $C_{S,r}$ and $C_{R,r}$ at the beginning and at the end of the \kE{k} are given on the right.
All other sets $C_{S,p}$ and $C_{R,p}$ for $p \neq r$ are empty, since there is only one unmatched message to process $r$.
Notice how at the end of the second \kE{k}, $r \in C'_{R,r}$ signalling that message $v_4$ violates causal delivery.
\end{example}

\begin{comp} 
In \cite{DBLP:conf/cav/BouajjaniEJQ18} the authors define
$\transitionKE{e}{k}$ in a rather different way:
they do not explicitly
give a graph-theoretic characterisation of causal delivery; instead they compute,
for every process $p$, the set $B(p)$ of processes that either
sent an unmatched message to $p$ or received a message
from a process in $B(p)$. They then make sure that any message sent to $p$ by
a process $q\in B(p)$ is unmatched.
According to that definition,
  the MSC of Fig.~\ref{fig:exmp-invalid-execution-B}b would
satisfy causal delivery and would be \kSous{1}.
However, this is not the case (this MSC does not satisfy causal delivery) as we have shown in Example~\ref{ex:causal-delivery}. 
Due to to the above errors, we had to propose a considerably different approach. 
The extended edges of the conflict graph, and the
graph-theoretic characterisation of causal delivery as well as summary nodes, have no equivalent in
\cite{DBLP:conf/cav/BouajjaniEJQ18}.
\end{comp} 

%

Next lemma proves that Fig.~\ref{fig:kexchangePROPO} properly characterises causal delivery.

\begin{lemma}\label{lem:causal}
  An MSC $msc$ is \kSous{k} iff
  there is $e~=~e_1\cdots e_n$  a linearisation  
  such that
  $(\globalstate{l_0},B_0)\transitionKE{e_1}{k}\cdots\transitionKE{e_n}{k}(\globalstate{l'},B')$ for some global state $\globalstate{l'}$ and
  some $B':\procSet\to(2^{\procSet}\times 2^{\procSet})$.
\end{lemma}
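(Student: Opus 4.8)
The plan is to read $\transitionKE{e}{k}$ as a finite-memory implementation of the three defining conditions of a \kSous{k} MSC, handling causal delivery through Theorem~\ref{thm:causal-delivery-graphically}. The central device is an invariant tying the finite datum $B_i$ produced after reading $e_1\cdots e_i$ to the (unbounded) conflict graph accumulated so far; I would isolate and prove this invariant first, and only afterwards assemble the equivalence in both directions.

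\emph{Step 1 (summary invariant).} I would prove by induction on $i$ that if $(\globalstate{l_0},B_0)\transitionKE{e_1}{k}\cdots\transitionKE{e_i}{k}(\globalstate{l'},B_i)$, then for every process $p$ the pair $B_i(p)=(C_{S,p},C_{R,p})$ agrees with the two sets defined by the displayed equations preceding Fig.~\ref{fig:kexchangePROPO}, evaluated in the extended conflict graph of the prefix $msc(e_1\cdots e_i)$. This invariant is exactly what licenses replacing the growing conflict graph by $B_i$. The inductive step is where the summary nodes earn their keep: I would show that $\cgraph{e_i,B_{i-1}}$, once saturated by the rules of Fig.~\ref{fig:conflict-graph-extension}, reproduces on the vertices of $e_i$ together with the nodes $\lambdanode_p$ precisely the $\cgedgeD{SS}$-reachability from earlier unmatched sends that the true prefix graph carries. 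Concretely I would verify that each family of edges is sound and complete for one shape of boundary-crossing dependency — (\ref{eq:SS}) for an $\cgedge{SS}$ or $\cgedge{SR}$ continuation on the same process, (\ref{eq:RR}) and (\ref{eq:unmatchedSS}) for the $\cgedge{RR}$/$\cgedge{RS}$ situation that Rule~4 promotes to an $\cgedgeD{SS}$ edge, and (\ref{eq:psiSS}) for dependencies that remain among earlier \kE{k}s — and that the update formula for $C_{X,p}'$ propagates them along $e_i$.

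\emph{Step 2 (causal delivery as a local check).} Using the invariant I would show that the side condition ``$p\notin C_{R,p}'$ for all $p$'' survives every step iff $msc(e_1\cdots e_n)$ has no $\cgedgeD{SS}$ cycle, which by Theorem~\ref{thm:causal-delivery-graphically} is causal delivery. One direction is immediate from Rule~4: if $p\in C_{R,p}$ there are an unmatched $v'$ with $\receiver{v'}=p$ and a matched $v$ with $\procofactionv{R}{v}=p$ and $v'\cgedgeD{SS}v$; since $v$ is matched, $v'$ unmatched and $\receiver{v}=\receiver{v'}=p$, Rule~4 gives $v\cgedgeD{SS}v'$, closing $v'\cgedgeD{SS}v'$. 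For the converse I would exploit that the only edges able to ``go back in time'' are the Rule~4 edges, so any $\cgedgeD{SS}$ cycle must traverse some $v\cgedgeD{SS}v'$ with $v$ matched and $v'$ unmatched to a common $p$; the remaining forward path $v'\cgedgeD{SS}v$ then, by the invariant, inserts $\procofactionv{R}{v}=p$ into $C_{R,p}$ at the step reading $v$.

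\emph{Step 3 (assembling the equivalence).} For ($\Rightarrow$), from a \kSous{k} MSC I take the linearisation $e=e_1\cdots e_n$ into \kE{k}s supplied by the definition. Each $e_i$ satisfies the syntactic constraints of Fig.~\ref{fig:kexchangePROPO}, and since matched pairs lie within a single \kE{k}, running $e_i$ from the empty buffers $\B_0$ is a genuine execution whose receptions match the sends of $e_i$; causal delivery together with Step~2 ensures the side condition never trips, so the run is enabled throughout. For ($\Leftarrow$), a valid run forces each $e_i\in\sendSet^{\leq k}\cdot\receiveSet^{\leq k}$, while the premise $(\globalstate{l},\B_0)\xRightarrow{e_i}(\globalstate{l'},\B)$ makes every reception of $e_i$ match a send of $e_i$, so no matched pair straddles two \kE{k}s, and Step~2 delivers causal delivery; the three defining conditions are met, hence $msc$ is \kSous{k}. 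I expect the main obstacle to be Step~1: establishing that the fixed, finite edge sets (\ref{eq:SS})--(\ref{eq:psiSS}) are simultaneously sound and complete for the $\cgedgeD{SS}$-dependencies crossing \kE{k} boundaries, since it is precisely a weak abstraction here that would reintroduce the false positives and false negatives this paper sets out to repair.
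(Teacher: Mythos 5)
Your proposal is correct and follows essentially the same route as the paper's proof: the paper likewise establishes (as its inductive hypothesis) that $B_i$ realises the displayed characterisation of $C_{S,p}$ and $C_{R,p}$ in the global extended conflict graph, verifies case by case that the summary-node edges (\ref{eq:SS})--(\ref{eq:psiSS}) are faithful to the boundary-crossing $\cgedgeD{SS}$-dependencies, and converts the local check $p\notin C_{R,p}'$ into the $\cgedgeD{SS}$-cycle criterion of Theorem~\ref{thm:causal-delivery-graphically} via Rule~4. The only difference is presentational: you factor the summary invariant out as a standalone step and then derive both implications from it, whereas the paper interleaves it with the induction of the forward direction and treats the converse more tersely.
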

\iflong
\begin{proof}
\begin{quote}
{\bf Lemma 1}
  An MSC $msc$ is \kSous{k} iff
  there is a linearisation  
  $e~=~e_1\cdots e_n$ such that
  $(\globalstate{l_0},B_0)\transitionKE{e_1}{k}\cdots\transitionKE{e_n}{k}(\globalstate{l'},B')$ for some global state $\globalstate{l'}$ and
  some $B':\procSet\to(2^{\procSet}\times 2^{\procSet})$.
\end{quote} 
 
 \begin{proof}
 $\Rightarrow$ 
 Since  $msc$ is  \kSous{k} then  $\exists e=e_1\cdots e_n$ such that $e$ is a linearisation of $msc$. The proof proceeds by induction on $n$. 
 \begin{description}
 \item [Base case]
 If $n=1$ then $e = e_1$. Thus there is only one \kE{k} and the local  conflict graph $\cgraph{e,B'}$ is the same as the complete global one $\cgraph{e}$.
 By hypothesis, 
 	as $msc$ satisfies causal delivery 
 	we have that  for some $\B$, $(\globalstate{l}_0, B_0) \xRightarrow{e} (\globalstate{l'}, B')$.
   
  By contradiction, suppose that $\exists p \in \procSet$ such that $p \in C'_{R,p}$. 
  Whence there	exists $ v'$ matched, such that $p = \receiver{v'}$ and $v \cgedgeD{SS} v'$ with $v \in \mathsf{Unm}_p$. By Rule 4 (Fig. \ref{fig:conflict-graph-extension}),  an edge $ v' \cgedgeD{SS} v$ has been added to the extended conflict graph.  Thus, there is a cycle $\cgedgeD{SS}$ from $v$ to $v$ and this violates Theorem \ref{thm:causal-delivery-graphically}, which is a contradiction. 
 
  \item [Inductive step] If $n >1$, by inductive hypothesis, we have 
  $$(\globalstate{l_0}, B_0) \transitionKE{e_1}{k} \cdots \transitionKE{e_{n-1}}{k} (\globalstate{l_{n-1}}, B)$$ with $B= (C_{S,p},C_{R,p})_{p \in \procSet}$. 
Since receptions correspond to sends in the current \kE{k} and all sends precede all receptions, we have $(\globalstate{l_{n-1}}, \B_0) \xRightarrow{e} (\globalstate{l_{n}}, \B)$ for some $\B$. 

By inductive hypothesis we have that 
\begin{align*}
C_{S,p} = &\  \{\procofactionv{S}{v'} \mid v \cgedgeD{SS} v' \mbox{ \& } v \mbox{ not matched} \mbox{ \& } \procofactionv{S}{v} = p  \} \\
C_{R,p} =  &\  \{\procofactionv{R}{v'} \mid v \cgedgeD{SS} v' \mbox{ \& } v \mbox{ not matched} \mbox{ \& } \procofactionv{R}{v}= p \mbox{ \& }  v' \cap R \neq \emptyset \}  
\end{align*} 
  By contradiction, suppose that there is a process $ p \in C'_{R,p}$. Then by construction  there exist two nodes $v$ and $v'$ such that $v \cgedgeD{SS} v'$, $v \in \mathsf{Unm}_p$, $v'$ matched and $\receiver{v'} = p$. We can have the following situations:  
  \begin{enumerate}
  \item $v\in V$, then both message exchanges $v$ and $v'$ with $v$ unmatched and $v'$ matched are in the current \kE{k} then we can easily reach a contradiction and the proof proceeds as in the base case.  
  \item $v=\lambdanode_p$, then by inductive hypothesis there exists a non-matched message $v_p \in V$ belonging to a previous \kE{k}. We want to show that if this is the case we can reconstruct a cyclic path in the  extended conflict graph, 
  which is a contradiction.


We assume that by inductive hypothesis 
$\cgraph{e}$ has been reconstructed from the local conflict graphs considering actions in $e_1 \dots e_{n-1}$. We now analyse the last \kE{k} and  describe to what each edge corresponds in $\cgraph{e}$
. There are four cases:
\begin{enumerate}
\item $v_1 \cgedge{XY} v_2$ with $v_1, v_2 \in V$, this edge exists also in $\cgraph{e}$
\item $\lambdanode_q \cgedge{SX} v_1$ with $v_1 \in V$. Then in 
$\cgraph{e}$ there exists an unmatched message $v_q$ and this extra edge has been constructed from  Equations \ref{eq:SS},  \ref{eq:RR} or  \ref{eq:unmatchedSS}:

If $\lambdanode_q \cgedge{SR} v_1 $ then $\procofactionv{R}{v_1} \in C_{S,q}$ thus by inductive hypothesis there exists $v \in V$ in 
$\cgraph{e}$ such that $v_q \cgedgeD{SS} v $ and $\procofactionv{S}{v} = \procofactionv{R}{v_1}$. Whence there exists an edge   $v \cgedgeD{SR} v_1 $ in 
$\cgraph{e}$.
If the edge $\lambdanode_q \cgedge{SS} v_1$ has been added as $\procofactionv{S}{v_1} \in C_{S,q}$ then by inductive hypothesis there exists, in 
$\cgraph{e}$, a node $v$ reachable with an edge $\cgedgeD{SS}$ from $v_q$ such that $\procofactionv{S}{v} = \procofactionv{S}{v_1}$. Thus an edge $v \cgedgeD{SS} v_1 $ exists in 
$\cgraph{e}$.

If the edge $\lambdanode_q \cgedge{SS} v_1$ has been added as $\procofactionv{R}{v_1} \in C_{R,q}$ and $v_1$ is a matched send. Then by inductive hypothesis there exists, in 
$\cgraph{e}$, a node $v$ reachable with an edge $\cgedgeD{SS}$ from $v$ such that $\procofactionv{R}{v} = \procofactionv{R}{v_1}$.
Whence in 
$\cgraph{e}$ there exists an edge $v \cgedge{RR} v_1$. 

If the edge $\lambdanode_q \cgedge{SS} v_1$ has been added as $\procofactionv{R}{v_1} \in C_{R,q}$ and $v_1$ is an unmatched send. Then by inductive hypothesis there exists, in 
$\cgraph{e}$, a node $v$ reachable with an edge $\cgedgeD{SS}$ from $v$ such that $\procofactionv{R}{v} = \procofactionv{R}{v_1}$.
Whence in 
$\cgraph{e}$, because of Rule  (4) in Fig. \ref{fig:conflict-graph-extension} there exists an edge $v \cgedgeD{SS} v_1$. 

If the edge $\lambdanode_q \cgedge{SS} v_1$ has been added as $\procofactionv{S}{v_1} \in C_{R,q}$. Then by inductive hypothesis there exists, in 
$\cgraph{e}$, a node $v$ reachable with an edge $\cgedgeD{SS}$ from $v$ such that $\procofactionv{R}{v} = \procofactionv{S}{v_1}$.
Whence in 
$\cgraph{e}$,  there exists an edge $v \cgedge{RS} v_1$. 

\item $v_1 \cgedge{SS} \lambdanode_q$ with $v_1 \in V$ and $v_1$ matched, then we know $\procofactionv{R}{v_1} = q$ and because of Rule  (4) in Fig. \ref{fig:conflict-graph-extension} in 
$\cgraph{e}$ there exists an edge $v_1 \cgedgeD{SS} v_q$.

\item $\lambdanode_q \cgedge{SS} \lambdanode_r$, thus $r \in C_{R,q}$. This means that there exists a matched message $v$ such that $v_q \cgedge{SS} v$ and $\procofactionv{R}{v} = r $. Thus, in 
$\cgraph{e}$,  we can add,  because of Rule  (4) in Fig. \ref{fig:conflict-graph-extension}, the edge $v \cgedgeD{SS} v_r$.

\end{enumerate}

Then it follows that if there exists an edge  $\lambdanode_p \cgedgeD{SS} v'$ it means that an edge $v_p \cgedgeD{SS} v'$ exists in the global extended conflict graph and thus by applying Rule (4) in Fig. \ref{fig:conflict-graph-extension} we can reach a contradiction, as we have a cycle.
\end{enumerate}

  \end{description}

  $\Leftarrow$
If $e = e_1 \cdots e_n$, where each $e_i$ corresponds to a valid \kE{k}. Suppose by contradiction that  $msc(e)$ 
violates causal delivery. By Theorem \ref{thm:causal-delivery-graphically} then the global extended conflict graph must contain an edge 
 $v \cgedgeD{SS} v$. This means that there is an unmatched message $v_p$ to process $p$ that is causally followed by a matched message $v$ to the same process $p$.
 Since each $e_i$ is a valid \kE{k} we know that such an edge cannot appear in any of the local conflict graphs. 
Indeed, if such an edge existed then there should be an edge $\cgedgeD{SS}$ from  $v_p$ or $\lambdanode_p$ (if the two messages belong to two different \kE{k}s) to $v$. But in this case we would have $p \in C'_{R,P}$ which is a contradiction.\qed
\end{proof}
\end{proof}
\fi

  Note that there are only finitely many abstract configurations of the form
  $(\globalstate{l},B)$ with $\globalstate{l}$ a tuple of control states and
  $B:\procSet \to (2^{\procSet}\times 2^{\procSet})$.
  Moreover, since $\paylodSet$ is  finite, the alphabet over the possible \kE{k} for a given $k$ is also finite.
  Therefore $\transitionKE{e}{k}$ is a relation
  on a finite set, 
and the set $sTr_k(\system)$ of \kSous{k} MSCs of a   
  system $\system$ forms a regular language.  It follows that  it is decidable
  whether a given abstract configuration of the form $(\globalstate{l}, B)$ is reachable from the initial configuration following a \kSable{k} execution.

\begin{theorem}\label{thm:reachability-is-decidable}
  Let $\system$ be a \kSable{k} system and $\globalstate{l}$ a global
  control state of $\system$. The problem whether there
  exists $e\in\asEx(\system)$ and $\B$ such that
  $(\globalstate{l_0},\B_0)\xRightarrow{e}(\globalstate{l},\B)$
  is decidable.  
\end{theorem}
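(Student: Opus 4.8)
The plan is to reduce the reachability question for $\system$ to a reachability question in the \emph{finite} abstract transition system $\transitionKE{\cdot}{k}$ of Fig.~\ref{fig:kexchangePROPO}, and then to invoke the finiteness observation recorded just before the statement. Concretely, I claim that $\globalstate{l}$ is reachable by some $e\in\asEx(\system)$ if and only if there exist a sequence $e_1,\dots,e_n$ of \kE{k}s and a function $B':\procSet\to(2^{\procSet}\times 2^{\procSet})$ with $(\globalstate{l_0},B_0)\transitionKE{e_1}{k}\cdots\transitionKE{e_n}{k}(\globalstate{l},B')$. Since there are only finitely many abstract configurations $(\globalstate{l},B)$ and only finitely many \kE{k}s over the finite set $\paylodSet$, the relation $\transitionKE{\cdot}{k}$ lives on a finite set; hence the existence of such a run, for some $B'$ ranging over the finitely many candidates, is an ordinary finite graph-reachability question and is therefore decidable. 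The whole theorem then follows from this equivalence.

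For the forward direction I would use the hypothesis that $\system$ is \kSable{k}. If $(\globalstate{l_0},\B_0)\xRightarrow{e}(\globalstate{l},\B)$, then $e$ is \kSable{k}, i.e.\ $msc(e)$ is \kSous{k}, so Lemma~\ref{lem:causal} yields a linearisation $e'=e_1\cdots e_n$ of $msc(e)$ together with an abstract run $(\globalstate{l_0},B_0)\transitionKE{e_1}{k}\cdots\transitionKE{e_n}{k}(\globalstate{l''},B')$. It then remains to check that $\globalstate{l''}=\globalstate{l}$. This is where the argument becomes delicate: the control state reached along an abstract run is exactly the one obtained by executing the per-process projections of $e_1\cdots e_n$ (each abstract step carries the premiss $(\globalstate{l},\B_0)\xRightarrow{e_i}(\globalstate{l'},\B)$, which chains the control states through local transitions). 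Since $e'$ and $e$ are linearisations of the \emph{same} MSC, they have identical projections onto every process $p$, because $\prec_{po}$ totally orders the events of $p$; and the final local state of each automaton depends only on that projection. Hence $\globalstate{l''}=\globalstate{l}$.

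For the converse, an abstract run reaching $(\globalstate{l},B')$ certifies, by Lemma~\ref{lem:causal}, that $msc=msc(e_1\cdots e_n)$ is \kSous{k}, and in particular satisfies causal delivery. By Proposition~\ref{prop:cau} some linearisation $e''$ of $msc$ is a genuine execution, i.e.\ it respects the mailbox buffer semantics. Moreover, the control-state premiss built into each abstract step guarantees that the per-process projections of $e_1\cdots e_n$ trace valid paths in the automata $\delta_p$ from $\globalstate{l_0}$ to $\globalstate{l}$; since $e''$ shares these projections, $e''$ is in fact an execution of $\system$, and by the same projection argument as above it reaches $\globalstate{l}$. Note that \kSability is used only in the forward direction, to ensure that every reachable state is witnessed by a \kSous{k} MSC.

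The main obstacle is precisely this control-state bookkeeping that underlies both directions: one must show that reordering an execution into a \kE{k}-structured linearisation, and conversely recovering a genuine execution from an abstract run, never alters the global control state reached. Both reduce to the invariance of per-process projections under relinearisation of a fixed MSC, combined with the fact that $\transitionKE{\cdot}{k}$ faithfully tracks the executability of each \kE{k} in $\system$. Once this invariance is established, decidability is immediate from the finiteness of the abstract system.
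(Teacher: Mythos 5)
Your proposal is correct and follows essentially the same route as the paper: reduce reachability to a finite-graph reachability question in the abstract system $\transitionKE{\cdot}{k}$ via Lemma~\ref{lem:causal}, using \kSity{k} only to guarantee that every reachable control state is witnessed by a \kSous{k} MSC. The paper's own proof is a terse restatement of the finiteness observation; your explicit treatment of the control-state bookkeeping (invariance of per-process projections under relinearisation of a fixed MSC, and the executability premiss built into each abstract step) supplies detail the paper leaves implicit but does not change the argument.
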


%

\begin{remark}
  Deadlock-freedom, unspecified receptions, and
  absence of orphan messages are other properties that become
  decidable for a \kSable{k} system because 
  of the regularity of
the set of \kSous{k} MSCs. 
\end{remark}


\section{Decidability of \kSity{k} for mailbox systems }
\label{sec:dec} 
 
We establish, here, the decidability of \kSity{k};
 our approach is similar to the one of 
Bouajjani~\emph{et al.} based on the notion of borderline violation, but we adjust it to adapt to the new characterisation of \kSable{k} executions (Theorem \ref{thm:k-sity-scc}).

\begin{definition}[Borderline violation]
  A non \kSable{k} execution $e$ 
  is a borderline violation if $e=e'\cdot r$, $r$ is a reception and  $e'$ is \kSable{k}.
\end{definition}
 
Note that a system $\system$ that is not \kSable{k} always admits at least
one borderline violation $e'\cdot r\in \asEx(\system)$ with $r\in\receiveSet$:
indeed, there is at least one execution
$e\in\asEx(\system)$ 
which contains a unique
minimal prefix of the form $e'\cdot r$ that is not \kSable{k}; moreover
since $e'$ is \kSable{k}, $r$ cannot be a \kE{k} of just one send action,
therefore it must be a receive action.
In order to find such a borderline violation, Bouajjani~\emph{et al.} introduced an 
instrumented system $\system'$ that behaves like $\system$,
except that it contains an extra process $\pi$,
and such that a non-deterministically chosen  message that  should
have been sent from a process $p$ to a process $q$ may now be sent from $p$
to $\pi$, and later forwarded by $\pi$ to $q$.
In $\system'$, each process $p$ has the possibility, instead of sending a message
$\amessage$ to $q$, to
deviate this message to $\pi$; if it does so, $p$ continues its execution
as if it really had sent it to $q$. Note also that the message sent to
$\pi$ get tagged with the original destination process $q$.
Similarly, for each possible reception, a process has the possibility
to receive a given message not from the initial sender but from $\pi$. 
The process $\pi$ has an initial state from which it can receive any messages from the system. Each reception makes it go into a different state. From this state, it is able to send the message back to the original recipient. Once a message is forwarded, $\pi$ reaches its final state and remains idle. The precise definition of the instrumented system can be found in  Appendix~\ref{sec:additional}. The following example illustrates how the instrumented system works.
 \begin{example}[A deviated message]
 
\hspace*{-0.63cm} 
\begin{minipage}[b]{8.5cm}
Let $e_1$, $e_2$ be two executions of a system $\system$ with MSCs respectively $msc(e_1)$ and $msc(e_2)$. $e_1$  is not \kSable{1}. It is   borderline in $\system$. If we delete the last reception,
it becomes indeed \kSable{1}. $msc(e_2)$ is the MSC obtained from the
instrumented system $\system'$ where the message $\amessage_1$ is first deviated to $\pi$ and then sent back to $q$ from $\pi$.

Note that
$msc(e_2)$ is \kSous{1}.  In this case, the instrumented system $\system'$
in the \kSous{1} semantics ``reveals'' the existence of
a borderline violation of $\system$.
 \end{minipage}
\hfill
\begin{minipage}[b]{4.5cm}

\begin{tikzpicture}[scale = 0.8 ]
\begin{scope}
  \draw(0.5,-3.5) node[above]{$msc(e_1)$};
  \coordinate (pa) at (0,0) ;
  \coordinate (pb) at (0,-2.5) ;
  \coordinate (qa) at (1,0) ;
  \coordinate (qb) at (1,-2.5) ;
  \draw (0,0.3) node{$p$} ;
  \draw (1,0.3) node{$q$} ;
  \draw (pa) -- (pb) ;
  \draw (qa) -- (qb) ;
  
  \coordinate (sa) at (0,-0.5); 
  \coordinate (ra) at (1,-2); 
  \draw[>=latex, ->] (sa) to node[above, sloped, pos = 0.1, scale = 0.9]{$\amessage_1$} (ra); 
  
  \coordinate (sb) at (1, -1.25); 
  \coordinate (rb) at (0, -1.25); 
  \draw[>=latex, ->] (sb) to node[above,pos = 0.3, scale = 0.9]{$\amessage_2$} (rb); 

\end{scope}
\begin{scope}[shift = {(2,0)}]
  \draw(1,-3.5) node[above]{$msc(e_2)$};
  \coordinate(pa) at (0,0) ;
  \coordinate (pb) at (0,-2.5) ;
  \coordinate (qa) at (1,0) ;
  \coordinate (qb) at (1,-2.5) ;
  \coordinate (pia) at (2,0) ;
  \coordinate (pib) at (2,-2.5) ;
  \draw (0,0.3) node{$p$} ;
  \draw (1,0.3) node{$q$} ;
  \draw (2,0.3) node{$\pi$} ;
  \draw (pa) -- (pb) ;
  \draw (qa) -- (qb) ;
  \draw (pia) -- (pib) ;
  
  \coordinate (sa) at (0,-0.5); 
  \coordinate (ra) at (2,-0.5); 
  \draw[>=latex, ->] (sa) to node[above, pos = 0.25, scale = 0.85]{$(q,\amessage_1)$} (ra); 
  \draw (sa) ;
  
  \coordinate (sb) at (1, -1.25); 
  \coordinate (rb) at (0, -1.25); 
  \draw[>=latex, ->] (sb) to node[above,pos = 0.3, scale = 0.9]{$\amessage_2$} (rb);

  \coordinate (se) at (2, -2); 
  \coordinate (re) at (1, -2); 
  \draw[>=latex, ->] (se) -- (re); 
  \draw (se) node[above left, scale = 0.9]{$\amessage_1$};
 
\end{scope}
\end{tikzpicture}

\end{minipage}
\end{example}


%

For each execution $e\cdot r\in \asEx(\system)$ that ends with a reception,
there exists an execution $\deviate{e\cdot r}\in\asEx(\system')$ where
the message exchange associated with the reception $r$ has been deviated to
$\pi$; formally, if $e\cdot r = e_1\cdot s\cdot e_2\cdot r$ with
$r=\rec{p}{q}{\amessage}$ and $s \matches r$, then
$$
\deviate{e\cdot r}=
e_1\cdot\send{p}{\pi}{(q,\amessage)}\cdot \rec{p}{\pi}{(q,\amessage)}\cdot
e_2 \cdot \send{\pi}{q}{(\amessage)}\cdot \rec{\pi}{q}{\amessage}.
$$

\begin{definition}[Feasible execution, bad execution]
  A \kSable{k} exe\-cution $e'$ of $\system'$ is \emph{feasible} if there
  is an execution $e\cdot r\in\asEx(\system)$ such that
  $\deviate{e\cdot r}=e'$.
  A feasible execution $e' = \deviate{e\cdot r}$ of $\system'$ is \emph{bad} if execution $e\cdot r$ is not \kSable{k} in $\system$. 
\end{definition}


\hspace*{-0.63cm}
\begin{minipage}[b]{8.5cm}
\begin{example}[A non-feasible execution]

  Let $e'$ be an execution
  such that $msc(e')$ is as depicted on the right.
  Clearly, this MSC satisfies causal delivery and could be the execution
  of some instrumented system $\system'$.
  However, the sequence $e\cdot r$
  such that $\deviate{e\cdot r}=e'$ does not satisfy causal delivery,
  therefore it cannot be an execution of the original system $\system$.
  In other words, the execution $e'$ is not feasible. 
    \end{example}

    \end{minipage}
   \hfill
\begin{minipage}[t]{4.5cm}
        \begin{tikzpicture}[scale = 0.8]
          \begin{scope}
	    \draw(1,-3.5) node[above]{$msc(e')$};
	    \coordinate (pa) at (0,0) ;
	    \coordinate (pb) at (0,-2.5) ;
	    \coordinate (qa) at (1,0) ;
	    \coordinate (qb) at (1,-2.5) ;
	    \coordinate (ra) at (2,0) ;
	    \coordinate (rb) at (2,-2.5) ;
	    
	      \draw (0,0.3) node{$p$} ;
 		 \draw (1,0.3) node{$q$} ;
	     \draw (2,0.3) node{$\pi$} ;

	    \draw (ra) -- (rb) ;
	    \draw (pa) -- (pb) ;
	    \draw (qa) -- (qb) ;
	    
	    \coordinate (sa) at (0,-0.5); 
	    \coordinate (ra) at (2,-0.5); 
	    \draw[>=latex, ->] (sa) to node[sloped, above, scale = 0.85, pos= 0.25]{$(q,\amessage_1)$} (ra); 
	    
	    \coordinate (sb) at (0, -1.25); 
	    \coordinate (rb) at (1, -1.25); 
	    \draw[>=latex, ->] (sb) to node[sloped, above, pos = 0.25, scale = 0.9]{$\amessage_2$}  (rb); 
	    
	    \coordinate (sc) at (2, -2); 
	    \coordinate (rc) at (1, -2); 
	\draw[>=latex, ->] (sc) -- (rc); 
	\draw (sc) node[above left, scale = 0.9]{$\amessage_1$};
          \end{scope}
          \begin{scope}[shift = {(3,0)}]
	    \draw(0.5,-3.5) node[above]{$msc(e\cdot r)$};
	    \coordinate(pa) at (0,0) ;
	    \coordinate (pb) at (0,-2.5) ;
	    \coordinate (qa) at (1,0) ;
	    \coordinate (qb) at (1,-2.5) ;
	
  \draw (0,0.3) node{$p$} ;
  \draw (1,0.3) node{$q$} ;
	    \draw (pa) -- (pb) ;
	    \draw (qa) -- (qb) ;
	    
	    \coordinate (sa) at (0,-0.5); 
	    \coordinate (ra) at (1,-2); 
	    \draw[>=latex, ->] (sa) to node[pos=0.15, sloped, above, scale = 0.9]{$\amessage_1$}  (ra); 
	    
	    \coordinate (sb) at (0, -1.25); 
	    \coordinate (rb) at (1, -1.25); 
	    \draw[>=latex, ->] (sb) to node[pos=0.25, sloped, below, scale = 0.9]{$\amessage_2$} (rb); 
	        
          \end{scope}
        \end{tikzpicture} 
    \end{minipage}

\begin{lemma}
\label{lem:bad-is-bad}
  A system $\system$ is not \kSable{k} iff there is a \kSable{k}
  execution $e'$\! of $\system'$ that is feasible and bad.
\end{lemma}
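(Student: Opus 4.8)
The plan is to prove the two implications separately, keeping the reverse direction trivial and concentrating all the work on the forward one. For $(\Leftarrow)$, suppose $e'$ is a feasible, bad, \kSable{k} execution of $\system'$. Feasibility gives an execution $e\cdot r\in\asEx(\system)$ with $\deviate{e\cdot r}=e'$, and badness says that $e\cdot r$ is not \kSable{k}. Thus $\system$ admits a non-\kSable{k} execution, so $\system$ is not \kSable{k}. No further argument is needed here.

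For $(\Rightarrow)$, I would start from the observation recorded right after the definition of borderline violation: a non-\kSable{k} system always admits a borderline violation $f\cdot r$ with $r\in\receiveSet$, $f$ \kSable{k}, and $f\cdot r$ not \kSable{k}. Since $f\cdot r$ is an execution and $r$ is a reception, $r$ is matched; write $f\cdot r=e_1\cdot s\cdot e_2\cdot r$ with $s\matches r$, and set $e'=\deviate{f\cdot r}$. By construction $e'\in\asEx(\system')$. The nontrivial point is that $e'$ is \kSable{k}; once this is established, $e'$ is \emph{feasible} (witnessed by $f\cdot r$ itself) and \emph{bad} (since $f\cdot r$ is not \kSable{k}), which is exactly what we want.

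To show $e'$ is \kSable{k} I would compare the conflict graph of $e'$ with that of $f$ and invoke Theorem~\ref{thm:k-sity-scc}. The deviation replaces the matched exchange $\{s,r\}$ by two fresh exchanges, $v_1=\{\send{p}{\pi}{(q,\amessage)},\rec{p}{\pi}{(q,\amessage)}\}$ and $v_2=\{\send{\pi}{q}{\amessage},\rec{\pi}{q}{\amessage}\}$, while leaving $e_1$ and $e_2$ untouched. The key computations are: (i) the send of $v_1$ sits at the position of $s$ and is performed by the same process $p$, so in the conflict graph $v_1$ carries exactly the SS/SR edges that the unmatched exchange $\{s\}$ carries in $f$; (ii) the receive of $v_1$ is the first event of the fresh process $\pi$, hence it contributes the single outgoing edge $v_1\cgedge{RS}v_2$ and no incoming edge; (iii) $v_2$ is a sink, because its send is the last event of $\pi$ and its receive is the terminal event of the whole execution (hence the last event of $q$), so $v_2$ has no outgoing edge at all. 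Consequently the conflict graph of $e'$, restricted to every node other than $v_2$, coincides with the conflict graph of $f$ under the identification $s\mapsto v_1$, while $v_2$ only receives the new $\cgedge{RS}$ edge from $v_1$ together with some incoming $\cgedge{SR}$/$\cgedge{RR}$ edges from $q$-exchanges.

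Because $v_2$ is a sink it forms its own size-$1$ SCC and lies on no cycle, so no cycle of the new graph can pass through $v_2$; in particular the added $v_1\cgedge{RS}v_2$ edge is on no cyclic path. Every SCC and every cycle of the new graph is therefore an SCC/cycle already present in the conflict graph of $f$, which, $f$ being \kSable{k}, has all SCCs of size at most $k$ and no RS edge on a cyclic path by Theorem~\ref{thm:k-sity-scc}. Causal delivery of $e'$ is automatic, since $e'\in\asEx(\system')$ and every execution satisfies causal delivery (Proposition~\ref{prop:cau}). Hence $e'$ satisfies the criterion of Theorem~\ref{thm:k-sity-scc} and is \kSable{k}, completing the forward direction. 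I expect the main obstacle to be the careful bookkeeping in (i)--(iii): one must argue precisely that rerouting $s$ through $\pi$ and appending $v_2$ neither enlarges any existing SCC nor places an RS edge on a cycle, and this rests entirely on $\pi$ being a fresh process whose two events are extremal (first receive, last send) and on $v_2$'s receive being the terminal event of $e'$.
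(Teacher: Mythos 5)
Your proof is correct, and the reverse direction and the overall skeleton (start from a borderline violation $f\cdot r$, deviate it, check the three required properties of $e'=\deviate{f\cdot r}$) match the paper. Where you genuinely diverge is in the key step, showing that $e'$ is \kSable{k}. The paper argues directly on a \kE{k} decomposition: it takes a linearisation $e_1\cdots e_n$ of $msc(f)$ into \kE{k}s, replaces $s$ by $\send{p}{\pi}{(q,\amessage)}$ inside its block, appends $\rec{p}{\pi}{(q,\amessage)}$ at the end of that same block (which stays within the $S^{\leq k}R^{\leq k}$ format because $s$ was unmatched in $f$, so the block had at most $k-1$ receives), and adds a final \kE{k} containing $\send{\pi}{q}{\amessage}\cdot\rec{\pi}{q}{\amessage}$. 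You instead go through Theorem~\ref{thm:k-sity-scc}: you show $\cgraph{e'}$ is $\cgraph{f}$ with the unmatched vertex $\{s\}$ renamed to $v_1$, plus the sink $v_2$ and the edge $v_1\cgedge{RS}v_2$, so no SCC grows and no new RS edge lies on a cycle, while causal delivery is inherited from $e'\in\asEx(\system')$ via Proposition~\ref{prop:cau}. Your route buys a cleaner argument -- it avoids the block-level bookkeeping about where the extra receive fits and whether the modified linearisation still witnesses all three clauses of \kSousness{k} -- at the cost of the careful edge-by-edge analysis in your points (i)--(iii), which you do carry out correctly ($\pi$ is fresh with exactly two extremal events, and the final receive of $v_2$ is terminal on $q$). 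Note that both your proof and the paper's lean on the same unproved assertion, stated in the text before the lemma, that $\deviate{f\cdot r}$ is literally an execution of $\system'$ (i.e., that rerouting $\amessage$ does not block the final reception at $q$'s mailbox); since the paper takes this for granted, your doing so is not a gap relative to it.
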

 \iflong
\begin{proof}

\begin{quote}
{\bf Lemma 2}
  A system $\system$ is not \kSable{k} iff there is a \kSable{k}
  execution $e'$\! of $\system'$ that is feasible and bad.
\end{quote}

\begin{proof}
$\Rightarrow$
Let $\system$ be not \kSable{k} then there exists an execution that is not \kSable{k} which contains a unique minimal prefix of the form $e \cdot r$ with $e$ \kSable{k} and $r = \rec{p}{q}{\amessage}$ a receive action. Thus $e$ is bad and there exists an $e' = \deviate{e\cdot r} \in \asEx(\system')$.

Since $e$ is \kSable{k}, 
$msc(e)$ is \kSous{k} and there exists a linearisation $e''$ such that 
$e''=e_1\dots e_n$ and there exists a \kE{k} $e_i$ containing the send action  $\send{p}{q}{\amessage}$. Now if we replace this action with $\send{p}{\pi}{(q,\amessage)}$ and we add at the end of the same \kE{k} the action $\rec{p}{\pi}{(q,\amessage)}$. The execution in $\asEx(\system')$ remains \kSable{k}. 
Finally if we add to $e''$ a new \kE{k} with the actions $\send{\pi}{q}{\amessage}$ and $\rec{\pi}{q}{\amessage}$ the execution remains \kSable{k}, hence $e'$ is feasible.

$\Leftarrow$
If there is a \kSable{k}
  execution $e'$ of $\system'$ that is feasible and bad, then by construction $e' = \deviate{e\cdot r}$ and $e\cdot r$ is not \kSable{k}. Whence $\system$ is not \kSable{k} and this concludes the proof. \qed
  
  \end{proof}
\end{proof}
\fi

As we have already noted, the set
of \kSous{k} MSCs of $\system'$ is regular. 
The decision procedure
for \kSity{k} follows from the fact that the set of
MSCs that have as linearisation a feasible bad execution
as we will see,
is regular as well, and that it can be
recognised by an (effectively computable)
non-deterministic finite state automaton. The decidability of \kSity{k}
follows then from Lemma~\ref{lem:bad-is-bad} and the decidability of the
emptiness problem for non-deterministic finite state automata.

\paragraph{\bf Recognition of feasible executions.} 
We start with 
the automaton that recognises feasible executions; for this,
we revisit the construction we just used for recognising sequences of
\kE{k}s that satisfy causal delivery.

In the remainder, we assume an execution $e'\in \asEx(\system')$
that contains exactly one send of the form
$\send{p}{\pi}{(q,\amessage)}$ and one reception 
of the form $\rec{\pi}{q}{\amessage}$, this reception being the last action of $e'$.
Let $(V,\{\cgedge{XY}\}_{X,Y\in\{R,S\}})$
be the conflict graph of $e'$.  There are two uniquely determined
vertices $\vstart,\vstop\in V$ such that $\receiver{\vstart}=\pi$
and $\sender{\vstop}=\pi$ that correspond, respectively, to the first
and last message exchanges of the deviation. The conflict graph
of $e\cdot r$ is then obtained by merging these two nodes.

\begin{lemma}\label{lem:graph-carac-feas}
The execution $e'$ is not feasible iff
  there is a vertex $v$ in
      the conflict graph of $e'$ such that
      $\vstart\cgedgeD{SS} v\cgedge{RR}\vstop$.
\end{lemma}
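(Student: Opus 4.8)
The plan is to reduce feasibility to causal delivery of the \emph{un-deviated} execution, and then to read off the stated pattern from the graph-theoretic characterisation of Theorem~\ref{thm:causal-delivery-graphically}. First I would observe that the sequence $e\cdot r$ obtained from $e'$ by merging the deviation through $\pi$ has exactly the same local run on every process of $\system$: un-deviation only relabels the destination of one send of $p$ and the source of one receive of $q$, leaving all control states untouched. Hence the only possible obstruction to $e\cdot r\in\asEx(\system)$ is the mailbox discipline, so that, using Proposition~\ref{prop:cau} (causal delivery is exactly what distinguishes realisable MSCs) together with the validity of the local runs, $e'$ is feasible iff $msc(e\cdot r)$ satisfies causal delivery. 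As recalled just above the statement, the conflict graph of $e\cdot r$ is the conflict graph of $e'$ in which $\vstart$ and $\vstop$ are merged into a single matched node $\bar v$ with $\sender{\bar v}=p$ and $\receiver{\bar v}=q$, whose send action is that of $\vstart$ and whose receive action is that of $\vstop$. By Theorem~\ref{thm:causal-delivery-graphically}, $e'$ is therefore \emph{not} feasible iff the extended conflict graph of $e\cdot r$ contains a cycle $w\cgedgeD{SS}w$.

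Since $e'\in\asEx(\system')$, its own extended conflict graph contains no $\cgedgeD{SS}$ cycle, so any such cycle must be created by the merge. The key step is a claim describing precisely which extended edges are new. Going through Rules~1--5 for every edge incident to $\bar v$, I would show that its send-side edges are inherited from $\vstart$ (process $p$) and its receive-side edges from $\vstop$ (process $q$), and that all edges out of $\bar v\cap S$ are already derivable in $e'$ via the chain $\vstart\cgedgeD{SS}\vstop$; the \emph{only} genuinely new edges are those produced by Rule~3 with target $\bar v$: for every $v$ with $v\cgedge{RR}\vstop$ one now gets $v\cgedgeD{SS}\bar v$, whereas in $e'$ the same $\cgedge{RR}$ edge only yielded $v\cgedgeD{SS}\vstop$. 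In particular Rule~5 cannot ``turn the corner'' from the receive endpoint of $\bar v$ to its send endpoint, because it preserves the middle endpoint; hence these Rule~3 seed edges, closed under Rule~5, are the complete list of new edges.

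Granting this claim, both implications are short. For $(\Leftarrow)$, from $\vstart\cgedgeD{SS}v\cgedge{RR}\vstop$ one obtains $\bar v\cgedgeD{SS}v$ and $v\cgedge{RR}\bar v$ in the merged graph, hence $v\cgedgeD{SS}\bar v$ by Rule~3 and the cycle $\bar v\cgedgeD{SS}\bar v$ by Rule~5, so $msc(e\cdot r)$ violates causal delivery and $e'$ is not feasible. For $(\Rightarrow)$, any cycle $w\cgedgeD{SS}w$ must use at least one new Rule~3 edge, and all such edges end at the send endpoint of $\bar v$; hence the cycle passes through that endpoint and can be rotated into a cycle $\bar v\cgedgeD{SS}\bar v$. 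Taking the last new edge $v\cgedgeD{SS}\bar v$ along this cycle gives $v\cgedge{RR}\vstop$, while the preceding new-edge-free sub-path from $\bar v\cap S=\vstart\cap S$ to the send of $v$ lives entirely in the conflict graph of $e'$ and witnesses $\vstart\cgedgeD{SS}v$; this $v$ is the required vertex.

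The main obstacle is the edge-analysis claim of the second paragraph. One has to argue carefully that merging two nodes, which alters the matched/unmatched status and the receiver attached to the send endpoint, does not silently create any extended edge beyond the Rule~3 shortcuts: in particular that no new \emph{outgoing} edge of $\bar v\cap S$ appears (all candidates being absorbed by the $\vstart\cgedgeD{SS}\vstop$ chain) and that Rule~5 cannot route a path backwards through $\bar v$. Once this claim is isolated and proved by induction on the derivation of extended edges, extracting the minimal cyclic witness and reading off the pattern $\vstart\cgedgeD{SS}v\cgedge{RR}\vstop$ is routine; the two implications of the lemma are immediate corollaries.
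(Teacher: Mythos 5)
Your proposal is correct and shares the paper's overall strategy — reduce feasibility to causal delivery of the un-deviated sequence $e\cdot r$, and relate $\cgraph{e\cdot r}$ to $\cgraph{e'}$ through the merge of $\vstart$ and $\vstop$ into a single matched node — but you execute the forward direction quite differently. The paper appeals directly to Definition~\ref{def:causal-delivery}: from the violation of causal delivery it asserts that there are $v_d$ (the deviated exchange) and $v$ with $v_d\cgedgeD{SS}v\cgedge{RR}v_d$ in $\cgraph{e\cdot r}$, and then splits $v_d$ back into $\vstart,\vstop$; the step from ``causal delivery fails'' to ``the cycle has exactly this shape, anchored at the deviated message'' is essentially asserted. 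You instead invoke Theorem~\ref{thm:causal-delivery-graphically} to get an abstract $\cgedgeD{SS}$ cycle in the merged graph and then prove, by analysing Rules~1--5 under the merge, that the only genuinely new extended edges are the Rule-3 seeds $v\cgedgeD{SS}\bar v$ for $v\cgedge{RR}\vstop$, so any cycle must route through the send endpoint of $\bar v$ and decomposes into an old path $\vstart\cgedgeD{SS}v$ followed by one seed. This buys a justification the paper omits: since $e'\in\asEx(\system')$ already satisfies causal delivery, a cycle can only arise from merge-created edges, which forces the witness pattern. The cost is that your central edge-accounting claim (no new outgoing edges from $\bar v\cap S$ beyond those absorbed by $\vstart\cgedgeD{SS}\vstop$, and no Rule-5 ``corner turn'' at $\bar v$) still needs the promised induction on derivations to be written out — in particular the observation that $\bar v$'s receive is the last action of $e\cdot r$, so no $RS$ edge leaves $\bar v$, should be made explicit, and when you take ``the last new edge'' on the cycle you should first restrict to the final segment after the last visit to $\bar v\cap S$ so that the remaining prefix really uses only old edges. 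With those details filled in, the argument is sound and, if anything, tighter than the published one.
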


\iflong
\begin{proof}

\begin{quote}
{\bf Lemma 3}
  The execution $e'$ is not feasible iff
  there is a vertex $v$ in
      the conflict graph of $e'$ such that
      $\vstart\cgedgeD{SS} v\cgedge{RR}\vstop$.
\end{quote}

\begin{proof}

$\Leftarrow$ 
If there is $v$ such that $\vstart\cgedgeD{SS}v\cgedge{RR}\vstop$,
  this means that a message sent after the deviated message is received before
  it: hence, it violates causal delivery.

  $\Rightarrow$ Assume now $e'$ not feasible this entails that $e$ is \kSable{k} and $e\cdot r$ violates causal
  delivery. Whence there exists an
  unmatched message that becomes matched and because of  Definition~\ref{def:causal-delivery}  
  there are $i',j'$ such that $r=a_{i'}$ $a_i\matches a_{i'}$, $a_j\matches a_{j'}$, and $i \prec j$ and $j'\prec i'$.
  So the conflict graph $\cgraph{e\cdot r}$ contains two vertices
  $v_d=\{a_i, a_{i'}\}$ and $v\{a_j, a_{j'}\}$ such that $v_d\cgedgeD{SS}v\cgedge{RR}v_d$. Because of the deviation  node $v_d$ is split in nodes $\vstart$
  and $\vstop$ in $\cgraph{e'}$, and therefore we conclude that 
  $\vstart\cgedgeD{SS}v\cgedge{RR}\vstop$. \qed
\end{proof}
\end{proof}
\fi

In order to decide whether an execution $e'$ is feasible, we want 
to forbid that a send action $\send{p'}{q}{\amessage'}$
that happens causally after  $\vstart$ 
is
matched by a receive $\rec{p'}{q}{\amessage'}$ that happens causally
before the reception $\vstop$. 
As a matter of fact, this boils down to deal with the deviated send action as an unmatched send.
So we will consider sets
of processes $C_{S}^{\pi}$ and $C_R^{\pi}$ similar to the ones
used for $\transitionKE{e}{k}$, but with the goal of computing which
actions happen causally after the send to $\pi$. We also introduce a summary
node $\pinode$ and the extra edges following the same principles
as in the previous section.
Formally, let $B:\procSet\to(2^{\procSet}\times 2^{\procSet})$,
$C_{S}^{\pi},C_R^{\pi}\subseteq \procSet$ and $e\in S^{\leq k}R^{\leq k}$ be fixed, and let $\cgraph{e,B}=(V',E')$ be the constraint graph with summary nodes
for unmatched sent messages as defined in the
previous section. The local constraint graph $\cgraph{e,B,C_{S}^{\pi},C_R^{\pi}}$ is defined
as the graph $(V'',E'')$ where $V''=V'\cup\{\pinode\}$ and $E''$ is
$E'$ augmented with
\begin{align*}
  & \{\pinode \cgedge{SX} v \mid\procofactionv{X}{v} \in C_{S}^{\pi} 
  \mbox{ \& } v\cap X\neq\emptyset \mbox{ for some }X\in\{S,R\}\} 
  \\ \cup \  &
 \{\pinode \cgedge{SS} v\mid \procofactionv{X}{v} \in C_{R}^{\pi} 
 \mbox { \& } v \cap R \neq \emptyset \mbox{ for some }X\in\{S,R\}\} 
  \\ \cup \ & 
  \{ \pinode \cgedge{SS} v \mid  \procofactionv{R}{v} \in C_{R}^{\pi}
  \mbox{ \& }  v \mbox{ is unmatched} \}  
\  \cup \ 
\{ \pinode \cgedge{SS} \lambdanode_p \mid p \in C_{R}^{\pi}  \}  
\end{align*}

As before, we consider the ``closure''
$\cgedgeD{XY}$ of these edges by the rules of Fig.~\ref{fig:rules-cgedgeD}.
The transition relation $\transitionCD{e}{k}$
is defined in Fig.~\ref{fig:causalkexchge}.
It relates abstract configurations of the form
$(\globalstate{l},B,\vec C,\destpi)$ with $\vec C=(C_{S,\pi},C_{R,\pi})$ and
$\destpi\in\procSet\cup\{\bot\}$ storing to whom the
message deviated to $\pi$ was supposed to be delivered.
Thus, the initial abstract configuration is
$(l_0, B_0, (\emptyset,\emptyset),\bot)$, where $\bot$ means that the processus
$\destpi$ has not been determined yet. It will be set as soon as the send to process $\pi$ is encountered.

%
 
\begin{figure}[t]
  \begin{prooftree}
\AxiomC{\small
		{\stackanchor
			{  $ (\globalstate{l},B) \transitionKE{e}{k} (\globalstate{l'},B') \qquad e =a_1\cdots a_n \qquad (\forall v)\ \procofaction{S}{v}\neq\pi$}
			{\stackanchor
				{  $(\forall v,v')\ \receiver{v}=\receiver{v'}=\pi\implies v=v'\wedge\destpi=\bot$}
				{\stackanchor 
					{ $ (\forall v)\ v\ni\send{p}{\pi}{(q,\amessage)} \implies \finddest'=q \quad  \finddest\neq\bot \implies \finddest'=\finddest
$}
					{\stackanchor
						{ ${C_{X}^{\pi}}'=  C_{X}^{\pi}\cup
    \{\procofactionv{X}{v'} \mid v \cgedgeD{SS} v' \mbox{ \& } v'\cap X\neq\emptyset\mbox{ \& }(\procofactionv{R}{v}=\pi\mbox{ or } v=\pinode)\}
    $}
    				  	{\stackanchor
    				  		{\stackanchor
    				  			{ $\cup ~ \{\procofactionv{S}{v}\mid\procofactionv{R}{v}=\pi\mbox{ \& } X=S\}$}
    				  			{\stackanchor{ $\cup~\{p\mid p\in C_{X,q}\mbox{ \& } v\cgedgeD{SS}\lambdanode_{q}  \mbox{ \& }(\procofactionv{R}{v}=\pi\mbox{ or } v=\pinode)$\}}
    				  			{ $  \finddest'\not\in {C_{R}^{\pi}}'$}
    				  			}
    				  		}{}
  							}}}}}}
 \UnaryInfC{{\small $  (\globalstate{l},B,C_S^{\pi},C_R^{\pi},\destpi)
  \transitionCD{e}{k}
  (\globalstate{l'},B',{C_S^{\pi}}',{C_R^{\pi}}',\destpi')$
}}
  \end{prooftree}
  \vspace*{-0.5cm}
  \caption{\label{fig:causalkexchge}Definition of the relation $\transitionCD{e}{k}$ \vspace*{-0.5cm}}
\end{figure}

%

\begin{lemma}\label{lem:feasible-is-regular}

Let $e'$ be an execution of $\system'$. Then $e'$ is a \kSable{k} feasible execution iff there are $e'' = e_1\cdots e_n \cdot \send{\pi}{q}{\amessage}\cdot \rec{\pi}{q}{\amessage}$ with
  $e_1,\ldots ,e_n\in S^{\leq k}R^{\leq k}$,
$B':\procSet\to2^{\procSet}$, 
  $\vec C'\in (2^{\procSet})^2$,
and a tuple of control states $\globalstate{l'}$ such that $msc(e') = msc(e'')$, $\pi\not\in C_{R,q}$ (with $B'(q)=(C_{S,q},C_{R,q})$), and
$$
(\globalstate{l_0}, B_0, (\emptyset,\emptyset),\bot)
\transitionCD{e_1}{k}\dots\transitionCD{e_n}{k}
(\globalstate{l'}, B', \vec{C'}, q).
$$
\end{lemma}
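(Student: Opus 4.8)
The plan is to prove the two implications by separating the two properties that the right-hand side simultaneously certifies: that $msc(e')$ is \kSous{k}, and that $e'$ is feasible. The key observation is that $\transitionCD{e}{k}$ is, by its first premise in Fig.~\ref{fig:causalkexchge}, the relation $\transitionKE{e}{k}$ augmented with the bookkeeping of the summary node $\pinode$ and the sets $C_S^{\pi},C_R^{\pi},\destpi$. I would therefore reuse Lemma~\ref{lem:causal} for the \kSous{k} part, reuse Lemma~\ref{lem:graph-carac-feas} for the feasibility part, and show that the extra $\pi$-bookkeeping computes exactly the causal successors of the deviated send $\vstart$. Throughout, recall that $e'$, being an execution of $\system'$, already satisfies causal delivery (Proposition~\ref{prop:cau}).

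For the direction $\Rightarrow$, assume $e'$ is \kSable{k} and feasible. By construction of $\system'$ the forwarding send $\send{\pi}{q}{\amessage}$ is the unique send of $\pi$ and $\rec{\pi}{q}{\amessage}$ is the last action of $e'$; hence the exchange $\vstop$ has no outgoing edge in the conflict graph and forms a singleton SCC. It can thus be placed last in the topological order of SCCs used in the proofs of Theorem~\ref{thm:k-sity-scc} and Lemma~\ref{lem:causal}, producing a linearisation $e''=e_1\cdots e_n\cdot\send{\pi}{q}{\amessage}\cdot\rec{\pi}{q}{\amessage}$ with $msc(e'')=msc(e')$, in which the return exchange is alone in the last \kE{k} and no $e_i$ contains a $\pi$-send. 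Lemma~\ref{lem:causal}, applied to the \kSous{k} sub-MSC $msc(e_1\cdots e_n)$, yields a run $(\globalstate{l_0},B_0)\transitionKE{e_1}{k}\cdots\transitionKE{e_n}{k}(\globalstate{l'},B')$; the $\pi$-bookkeeping and the side conditions of Fig.~\ref{fig:causalkexchge} are then forced by the structure of $e'$ (the unique $\pi$-reception sets $\destpi=q$, and $\pi$ forwards only once). Finally, $\pi\notin C_{R,q}$ expresses that no unmatched send to $q$ causally precedes $\vstart$, which is exactly the condition under which appending the matched return exchange to $q$ preserves causal delivery; in this direction it holds because $e'$ satisfies causal delivery, as such an unmatched send would also precede the matched $\vstop$ and violate it.

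Feasibility is handled through Lemma~\ref{lem:graph-carac-feas}: $e'$ is feasible iff there is no $v$ with $\vstart\cgedgeD{SS}v\cgedge{RR}\vstop$. Since $\receiver{\vstop}=q$ and the reception of $\vstop$ is the last action, an edge $v\cgedge{RR}\vstop$ exists iff $\receiver{v}=q$ and $v\cap R\neq\emptyset$; hence the forbidden pattern exists iff $q\in C_R^{\pi}$, and feasibility is equivalent to the side condition $\destpi'\notin{C_R^{\pi}}'$ never failing along the run. For the direction $\Leftarrow$, a successful run $(\globalstate{l_0},B_0,(\emptyset,\emptyset),\bot)\transitionCD{e_1}{k}\cdots\transitionCD{e_n}{k}(\globalstate{l'},B',\vec{C'},q)$ contains, via the first premise, a run along $\transitionKE{e_1}{k}\cdots\transitionKE{e_n}{k}$; together with $\pi\notin C_{R,q}$ (discharging causal delivery of the appended return exchange) and $msc(e')=msc(e'')$, Lemma~\ref{lem:causal} gives that $msc(e')$ is \kSous{k}, i.e.\ $e'$ is \kSable{k}, while the maintained checks $\destpi'\notin{C_R^{\pi}}'$ give, through Lemma~\ref{lem:graph-carac-feas}, that $e'$ is feasible.

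The main obstacle is the invariant that makes the feasibility arguments sound, namely that the incrementally computed sets $C_S^{\pi}$ and $C_R^{\pi}$ collect precisely the senders, respectively the receivers (of exchanges having a reception), of the exchanges $v$ of the global conflict graph of $e'$ with $\vstart\cgedgeD{SS}v$. This must be established by induction on $n$, and the difficulty is the exact analogue of the inductive step of Lemma~\ref{lem:causal}: one has to show that the edges attached to $\pinode$, together with the propagation of memberships through the summary nodes $\lambdanode_q$ (the clause adding $p\in C_{X,q}$ to $C_X^{\pi}$ when $v\cgedgeD{SS}\lambdanode_q$ with $\receiver{v}=\pi$ or $v=\pinode$), faithfully reconstruct the extended edges $\vstart\cgedgeD{SS}v$ even when the witnessing causal chain crosses several \kE{k} boundaries and threads through several summary nodes. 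I would carry this out by a case analysis on the shape of each edge incident to $\pinode$ and to the $\lambdanode_q$, mirroring the four-case analysis already performed for $\lambdanode_p$ in Lemma~\ref{lem:causal} and reusing its conclusion about the $\lambdanode_q$-bookkeeping. Once this invariant is in place, the remaining steps are routine.
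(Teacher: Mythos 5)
Your proposal is correct and follows essentially the same route as the paper: it isolates the same key invariant (that $C_S^{\pi}$ and $C_R^{\pi}$ record exactly the $X$-processes of the exchanges $v$ with $\vstart\cgedgeD{SS}v$ in the global conflict graph, proved by induction on the number of \kE{k}s with a case analysis on the edges incident to $\pinode$ and the $\lambdanode_q$), and then combines Lemma~\ref{lem:causal} for the \kSous{k} part with Lemma~\ref{lem:graph-carac-feas} for feasibility, reading the forbidden pattern $\vstart\cgedgeD{SS}v\cgedge{RR}\vstop$ off the condition $\destpi\in C_R^{\pi}$ exactly as the paper does. The only difference is one of completeness rather than method: you defer the detailed edge-by-edge verification of the invariant, which is precisely the bulk of the paper's proof.
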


\iflong
\begin{proof}
  \begin{quote}
{\bf Lemma 4}
	Let $e'$ an execution of $\system'$. Then $e'$ is a \kSable{k} feasible execution iff there are $e'' = e_1\cdots e_n \cdot \send{\pi}{q}{\amessage}\cdot \rec{\pi}{q}{\amessage}$ with
  $e_1,\ldots ,e_n\in S^{\leq k}R^{\leq k}$,
%
$B':\procSet\to2^{\procSet}$, 
  $\vec C'\in (2^{\procSet})^2$,
and a tuple of control states $\globalstate{l'}$ such that $msc(e') = msc(e'')$, $\pi\not\in C_{R,q}$ (with $B'(q)=(C_{S,q},C_{R,q})$), and
$$
(\globalstate{l_0}, B_0, (\emptyset,\emptyset),\bot)
\transitionCD{e_1}{k}\dots\transitionCD{e_n}{k}
(\globalstate{l'}, B', \vec{C'}, q).
$$
\end{quote}

\begin{proof}
Let us first state what are the properties of the variables $B, \vec C$ and $\destpi$.

Let $e'$ a \kSable{k} execution of $\system'$ and $e''= e_1\cdots e_n$ such that $msc(e') = msc(e'')$ be fixed, and assume 
that there are $B,\vec C,\destpi$ such that
$$
(\globalstate{l_0}, B_0, \emptyset,\emptyset, \bot)
\transitionCD{e_1}{k}\dots\transitionCD{e_n}{k}
(\globalstate{l}, B, C_S^{\pi},C_R^{\pi},\finddest).
$$
Notice that $\cgraph{e'} = \cgraph{e''}$.
By induction on $n$, we want to establish that
\begin{enumerate}
\item $\finddest=q$ if and only if a message of the form $(q,\amessage)$ was
  sent to $\pi$ in $e'$;
\item there is at most one message sent to $\pi$ in $e'$;
\item let $\vstart$ denote the unique vertex in $\cgraph{e'}$ (if it exists)
  such that $\procofactionv{R}{\vstart}=\pi$; for all $X\in\{S,R\}$,
  $$
  C_X^{\pi}=\{\procofactionv{X}{v}\mid (v\cap X\neq\emptyset \mbox{ \& } 
  \vstart\cgedgeD{SS}v\mbox{ in }\cgraph{e'}) \mbox{ or } (v,X)=(\vstart,S)\}.
  $$
\end{enumerate}
The  first two points easily follow from the definition of
$\transitionCD{e}{k}$. Let us focus on the last point.
The case $n=1$ is immediate. Let us assume that
$$
(\globalstate{l_0}, B_0, (\emptyset,\emptyset),\bot)
\transitionCD{e_1}{k}\dots\transitionCD{e_{n-1}}{k}
(\globalstate{l}, B, \vec{C}, \destpi)
\transitionCD{e_n}{k}
(\globalstate{l'}, B', \vec{C'}, \destpi')
$$
with $C_X^{\pi}=\{\procofactionv{X}{v}\mid (v\cap X\neq\emptyset \mbox{ \& } \vstart\cgedgeD{SS}v\mbox{ in }\cgraph{e_1\cdots e_{n-1}})\mbox{ or } (v,X)=(\vstart,S)\}$ and let us
show that ${C_X^{\pi}}'=\{\procofactionv{X}{v}\mid (v\cap X\neq\emptyset \mbox{ \& } \vstart\cgedgeD{SS}v \mbox{ in } \cgraph{e_1\cdots e_n}\mbox{ or } (v,X)=(\vstart,S)\}$.
\begin{itemize}
\item Let $X\in\{S,R\}$ and $p\in {C_X^{\pi}}'$ and let us show that there is
  some $v$ such that $p=\procofactionv{X}{v}$ and either
  $\vstart\cgedgeD{SX}v$ in $\cgraph{e_1\cdots e_n}$ or $(v,X)=(\vstart,S)$. We reason
  by case analysis on the reason why $p\in{C_X^{\pi}}'$, according
  to the definition of ${C_X^{\pi}}'$ in Fig.~\ref{fig:causalkexchge}.
  \begin{itemize}
  \item $p\in {C_X^{\pi}}$. Then by induction hypothesis there is
    $v$ such that $p=\procofactionv{X}{v}$, and
    $\vstart\cgedge{SS}v$ in $\cgraph{e_1\cdots e_{n-1}}$, and therefore also
    in $\cgraph{e_1\cdots e_{n}}$, or $(v,X)=(\vstart,S)$.
  \item $p=\procofactionv{X}{v'}$, $v\cgedgeD{SS}v'$, $v'\cap X\neq\emptyset$,
    and
    $\procofactionv{R}{v}=\pi$,
    for some message exchanges $v,v'$ of $e_n$.
    Since $\procofactionv{R}{v}=\pi$, $v=\vstart$.
    This shows this case.
  \item $p=\procofactionv{X}{v'}$, $\pinode\cgedgeD{SS}v'$,
    $v'\cap X\neq\emptyset$, for some message exchange $v'$ of $e_n$.
    It remains to show that $\vstart\cgedgeD{SS}v'$.
    From $\pinode\cgedgeD{SS}v'$, there are some $v,Y$ such that $\pinode\cgedge{SY}v$
    in $\cgraph{e_n,B,\vec C}$, $v\cap Y\neq\emptyset$
    and either $v\cgedgeD{YS}v'$ or $(v,Y)=(v',S)$.
    We reason by case analysis on the construction of the edge
    $\pinode\cgedge{SY} v$.
    \begin{itemize}
    \item $\procofactionv{Y}{v}\in C_S^{\pi}$ and $v\cap Y\neq\emptyset$.
      Let $q=\procofactionv{Y}{v}$. Since $q\in C_S^{\pi}$,
      by induction hypothesis there is $v_1$ in a previous \kE{k} such that
      $\vstart\cgedgeD{SS}v_1$ in $\cgraph{e_1\cdots e_{n-1}}$ or $v_1=\vstart$.
      Since
      $\procofactionv{S}{v_1}=\procofactionv{Y}{v}$, there is an edge
      $v_1\cgedge{SY}v$ in $\cgraph{e_1\cdots e_n}$.
      By hypothesis, we also have either $v\cgedgeD{YS}v'$ or $(v,Y)=(v',S)$.
      So in both cases we get $\vstart\cgedgeD{SS}v_1\cgedge{SS}v'$,
      or $\vstart\cgedgeD{SS}v'$ when $v_1=\vstart$. 
    \item $\pinode\cgedge{SS}v$, $\procofactionv{Y}{v}\in C_R^{\pi}$ and $v\cap R\neq\emptyset$.
      Again by induction hypothesis, we have $v_1$ such that $\vstart\cgedgeD{SS}v_1\cgedge{RY}v$,
      therefore $\vstart\cgedgeD{SS}v'$.
    \item $\pinode\cgedge{SS}v$, $\procofactionv{R}{v}\in C_R^{\pi}$ and $v$ unmatched.
      Again by induction hypothesis, we have $v_1$ such that $\vstart\cgedgeD{SS}v_1\cgedge{RS}v$.
      If $v=v'$, we have $\vstart\cgedgeD{SS} v'$, which closes the case. Otherwise, from
      $v\cgedgeD{YS}v'$ and $v$ unmatched we deduce $v\cgedge{SS} v'$; finally
      we  $\vstart\cgedgeD{SS}v_1\cgedge{RS}v\cgedgeD{SS}v'$, so $\vstart\cgedgeD{SS}v'$, which
      closes the case as well. 
    \item $v=\lambdanode_q$ for some $q\in C_R^{\pi}$. Since $\lambdanode_q$ does not have outgoing edges
      of the form $RS$, $\lambdanode_q\cgedgeD{SS} v'$.
      From $q\in C_R^{\pi}$, we get by induction
      hypothesis some node $v_1$ such that $\vstart\cgedgeD{SS}v_1$ and
      $\procofactionv{R}{v_1}=q$.
      As seen in the proof of Lemma~\ref{lem:causal},
      $\lambdanode_q\cgedgeD{SS}v'$ implies that
      there is a vertex $v_2$ from a previous \kE{k} that is an unmatched send to $q$ such
      that $v_2\cgedgeD{SS} v'$ in $\cgraph{e_1\cdots e_n}$. Since $v_1$ is a matched send to $q$ and $v_2$ is an unmatched
      send to $q$, by rule 4 in Fig.~\ref{fig:rules-cgedgeD}
      $v_1\cgedgeD{SS}v_2$. All together,
      $\vstart\cgedgeD{SS}v_1\cgedgeD{SS}v_2\cgedgeD{SS}v'$, which closes this case.
    \end{itemize}    
  \item $p=\procofactionv{X}{v}$, $\procofactionv{R}{v}=\pi$, and $X=S$. Then $v=\vstart$,
    which closes this case.
  \end{itemize}

\item Conversely, let us show that
  for all $X\in\{S,R\}$ and
  $v$ such that $\vstart\cgedgeD{SS}v$ in $\cgraph{e_1\cdots e_n}$, $\procofaction{S}{v}\neq\pi$, and $v\cap X\neq\emptyset$,
  it holds that $\procofactionv{X}{v}\in {C_X^{\pi}}'$ (the corner case to be proved, $(v,X)=(\vstart, S)$, is treated in the last item).
  Again, we reason by induction on the number $n$ of \kE{k}s. If $n=0$, it is immediate as there are no such $v,X$.
  Let us assume that the property holds for all choices of $v_1,X_1$ such that
  $\vstart\cgedgeD{SS}v_1$ in $\cgraph{e_1\cdots e_{n-1}}$, $\procofactionv{S}{v_1}\neq\pi$, and $v_1\cap X_1\neq\emptyset$. Let $v,X$ be fixed with
  $\vstart\cgedgeD{SS}v$ in $\cgraph{e'_1\cdots e'_n}$, and $v\cap X\neq\emptyset$, and let us show that
  $\procofactionv{X}{v}\in {C_X^{\pi}}'$. We reason by case analysis on the occurrence in $e_n$, or not, of both $\vstart$ and $v$.
  \begin{itemize}
  \item $\vstart$ and $v$ are in $e_n$.
    Then from  $\vstart\cgedgeD{SS}v$ in $\cgraph{e_1\cdots e_n}$
    and the proof of Lemma~\ref{lem:causal}, we get that $\vstart\cgedgeD{SS}v$ in $\cgraph{e_n,B}$.
    By definition of  ${C_S^{\pi}}'$ (first line), it contains $\procofactionv{X}{v}$ 
  \item $\vstart$ in $e_n$ and $v$ in $e_1\cdots e_{n-1}$. Then there are $v_1,v_2,q$ such that
    \begin{itemize}
    \item $v_1$ is in $e_n$, and either $\vstart\cgedgeD{SS} v_1$ in $\cgraph{e_1\cdots e_n}$ or $v_1=\vstart$,
    \item $v_2$ is in $e_1\cdots e_{n-1}$,
      $v_1\cgedgeD{SS}v_2$ by rule 4 of Fig.~\ref{fig:rules-cgedgeD}, i.e., $v_1$ is a matched send to $q$ and $v_2$ is an unmatched send to $q$ 
    \item either $v_2\cgedgeD{XS}v$ in $\cgraph{e_1\cdots e_{n-1}}$, or $v_2=v$
    \end{itemize}
    From the first item, by the proof of  Lemma~\ref{lem:causal}, we get either $\vstart\cgedgeD{SS}v_1$ in $\cgraph{e_n,B}$ or $v=v_1$.
    From the second item, we get $v_1\cgedge{SS}\lambdanode_q$ in $\cgraph{e_n,B}$.
    From these two, we get $\pinode\cgedgeD{SS}\lambdanode_p$ in $\cgraph{e_n,B,\vec C}$.
    By definition of $C_X^{\pi}$, we therefore have $C_{X,q}\subseteq C_X^{\pi}$.
    From the third item, we get $\procofaction{X}{v}\in C_{X,q}$. So finally $\procofaction{X}{v}\in C_X^{\pi}$.
  \item $\vstart$ in $e_1\cdots e_{n-1}$ and $v$ in $e_n$.
    Then there are $v_1,v_2,Y,Z$ such that
    \begin{itemize}
    \item either $\vstart \cgedgeD{SY} v_1$ in $\cgraph{e_1\cdots e_{n-1}}$, or $(v,S)=(v_1,Y)$
    \item $v_1\cgedge{YZ}v_2$
    \item either $v_2\cgedgeD{ZS} v$ in $\cgraph{e_1\cdots e_n}$, with both $v_2$ and $v$ in $e_n$, or $(v_2,Z)=(v,S)$
    \end{itemize}
    From the first item, by induction hypothesis, we get
    $\procofaction{Y}{v_1}\in C_X^{\pi}$.
    From the second item, we get $\procofaction{Y}{v_1}=\procofaction{Z}{v_2}$, and
    from the definition of outgoing edges of $\pinode$, we get $\pinode\cgedge{SZ} v_2$ in $\cgraph{e_n,B,\vec C}$.
    From the third item and the proof of  Lemma~\ref{lem:causal}, we get either $v_2\cgedgeD{ZS} v$ in $\cgraph{e_n,B}$ or
    $(v_2,Z)=(v,S)$. All together, we get $\pinode\cgedgeD{SS}v$ in $\cgraph{e_n,B,\vec C}$.
    By definition of  ${C_S^{\pi}}'$ (first line), it contains $\procofactionv{X}{v}$ .
  \item $\vstart$ and $v$ in $e_1\cdots e_{n-1}$.
    If $\vstart\cgedgeD{SS} v$ in $\cgraph{e_1\cdots e_{n-1}}$, then $\procofaction{X}{v}\in C_R^{\pi}$ holds immediately by induction hypothesis.
    Otherwise, there are $v_1,v_2,v_3,v_4,Y,Z,q$ such that
    \begin{itemize}
    \item either $\vstart \cgedgeD{SY} v_1$ in $\cgraph{e_1\cdots e_{n-1}}$, or $(v,S)=(v_1,Y)$
    \item $v_1\cgedge{YZ}v_2$
    \item either $v_2\cgedgeD{ZS} v_3$ in $\cgraph{e_1\cdots e_n}$, with both $v_2$ and $v_3$ in $e_n$, or $(v_2,Z)=(v_3,S)$
    \item $v_3\cgedgeD{SS}v_4$ due to rule 4 in Fig.~\ref{fig:rules-cgedgeD}, i.e., $v_3$ is a matched send to $q$ and $v_4$ is an unmatched send to $q$ 
    \item either $v_4 \cgedgeD{SS} v$ in $\cgraph{e_1\cdots e_{n-1}}$, or $(v_4,T)=(v,S)$
    \end{itemize}
    From the first item, by induction hypothesis, we get
    $\procofaction{Y}{v_1}\in C_X^{\pi}$.
    From the second item, we get $\procofaction{Y}{v_1}=\procofaction{Z}{v_2}$, and
    from the definition of outgoing edges of $\pinode$, we get $\pinode\cgedge{SZ} v_2$ in $\cgraph{e_n,B,\vec C}$.
    From the third item and the proof of  Lemma~\ref{lem:causal}, we get either $v_2\cgedgeD{ZS} v_3$ in $\cgraph{e_n,B}$ or
    $(v_2,Z)=(v_3,S)$.
    From the fourth item, we get $v_3\cgedgeD{SS}\lambdanode_q$ in $\cgraph{e_n,B}$.
    To sum up, we have $\pinode\cgedge{SS}\lambdanode_q$ in $\cgraph{e_n,B,\vec C}$.
    By definition of $C_X^{\pi}$, we therefore have $C_{X,q}\subseteq C_X^{\pi}$.
    From the fifth item, we get by the proof of  Lemma~\ref{lem:causal} that $\procofaction{X}{v}\in C_{X,q}$, which ends this case.
  \end{itemize}

\item Finally, let us finish the proof of the converse implication, and
  show the remaining case, i.e., let us show that $\procofactionv{S}{\vstart}\in C_S^{\pi}$.
  This is immediate from the definition of ${C_S^{\pi}}'$ (cfr. the set
  $\{\procofactionv{S}{v}\mid\procofactionv{R}{v}=\pi\mbox{ \& } X=S\}$).
\end{itemize}

We are done with proving that $C_X^{\pi}=\{\procofactionv{X}{v}\mid (v\cap X\neq\emptyset \mbox{ \& } 
\vstart\cgedgeD{SS}v\mbox{ in }\cgraph{e'}) \mbox{ or } (v,X)=(\vstart,S)\}$.
It is now time to conclude with the proof of Lemma~\ref{lem:feasible-is-regular} itself.

Let $e'$ and $e'' =e_1\cdots e_n\cdot \send{\pi}{q}{\amessage}\cdot \rec{\pi}{q}{\amessage}$ with
$e_1,\cdots e_n\in S^{\leq k}R^{\leq k}$ be fixed such that $msc(e') = msc(e'')$.

$\Leftarrow$ Let us assume that $e'$ is a \kSable{k} feasible execution of $\system'$ and let us show that
$$
(\globalstate{l_0}, B_0, (\emptyset,\emptyset),\bot)
\transitionCD{e_1}{k}\dots\transitionCD{e_n}{k}
(\globalstate{l'}, B', \vec{C'}, \destpi).
$$
for some $B',\vec C',\destpi$ with $\pi\not\in C_{R,\destpi}$.
By definition of $\transitionCD{e}{k}$, $B'$, $\vec C'$ and $\destpi$ are uniquely determined, and
it is enough to prove that $\destpi\not\in C_R^{\pi}$. Let us assume by absurd that
$\destpi\in C_R^{\pi}$. Then, by the property we just proved, there is $v$ such that 
$\procofactionv{R}{v}=\destpi$, $v\cap R\neq\emptyset$, and
$\vstart\cgedgeD{SS}v$ in $\cgraph{e_1'\cdots e_n'}$. So we get $\vstart\cgedgeD{SS}v\cgedge{RR}\vstop$ in
$\cgraph{e'}$, and by Lemma~\ref{lem:graph-carac-feas}, $e'$ should not be feasible: contradiction.
Finally, $\pi\not\in C_{R,\pi}$ because $e'$, as an execution of $\system'$, $msc(e')$ satisfies causal delivery.

$\Rightarrow$ Let us assume that
$$
(\globalstate{l_0}, B_0, (\emptyset,\emptyset),\bot)
\transitionCD{e_1}{k}\dots\transitionCD{e_n}{k}
(\globalstate{l'}, B', \vec{C'}, \destpi).
$$
for some $B',\vec C',\destpi$ with $\pi\not\in C_{R,\destpi}$, and let us show that $e'$ is a \kSable{k} feasible execution of $\system'$.
From the definition of $\transitionCD{e}{k}$, we get
$$
(\globalstate{l_0}, B_0)
\transitionKE{e_1}{k}\dots\transitionKE{e_n}{k}
(\globalstate{l'}, B')
$$
and from Lemma~\ref{lem:causal}, $msc(e')$ is \kSous{k}. 
Since the last two actions

\noindent $\send{\pi}{q}{\amessage}~\cdot~\rec{\pi}{q}{\amessage}$ can be placed in a new \kE{k}, and since they
do not break causal delivery (because $\pi\not\in C_{R,\destpi}$), $e'$ is a \kSable{k} execution of $\system'$.
It remains to show that $e'$ is feasible. Again, let us reason by contradiction and assume that $e'$ is not
feasible. By Lemma~\ref{lem:graph-carac-feas}, there is $v$ such that $\vstart\cgedgeD{SS}v\cgedge{RR}\vstop$ in
$\cgraph{e'}$. In other words, 
$\procofactionv{R}{v}=\destpi$, $v\cap R\neq\emptyset$, and
$\vstart\cgedgeD{SS}v$ in 
$\cgraph{e'}$. 
So, by the property we just proved, $\destpi\in C_{R}^{\pi}$,
Hence the contradiction. \qed
\end{proof}

\end{proof}
\fi

\begin{comp}
In \cite{DBLP:conf/cav/BouajjaniEJQ18} the authors verify that an execution is feasible with a \emph{monitor} which reviews the actions of the execution and adds processes that no longer are allowed to send a message to the receiver of $\pi$. Unfortunately, we have here a similar problem that the one mentioned in the previous comparison paragraph. According to their monitor, the following execution $e' = \deviate{e \cdot r} $ (see its MSC in Fig.~\ref{fig:problematic-exe}a in Appendix~\ref{sec:additional}) is feasible, i.e., is runnable in $\system'$ and $e \cdot r$ is runnable in $\system$. 
\begin{align*}
e' =~  & \send{q}{\pi}{(r,\amessage_1)}\cdot\rec{q}{\pi}{(r,\amessage_1)}\cdot
\send{q}{s}{\amessage_2}\cdot\rec{q}{s}{\amessage_2}
\cdot \\ 
& \send{p}{s}{\amessage_3}\cdot\rec{p}{s}{\amessage_3}\cdot \send{p}{r}{\amessage_4}\cdot\rec{p}{r}{\amessage_4} \cdot  \\
& \send{\pi}{r}{\amessage_1}\cdot\rec{\pi}{r}{\amessage_4}
\end{align*}

However,  this execution is not feasible because there is a causal dependency between $\amessage_1$ and $\amessage_3$. 
In \cite{DBLP:conf/cav/BouajjaniEJQ18} this execution would then be considered as feasible and therefore would belong to set $sTr_k(\system')$. Yet there is no corresponding execution in $asTr(\system)$, the comparison and therefore the \kSity{k}, could be distorted and appear as a false negative.  
\end{comp}

\paragraph{\bf Recognition of bad executions.}
Finally, we define a non-deterministic finite state
automaton that recognizes MSCs of bad executions, i.e., feasible executions $e'=\deviate{e\cdot r}$ such that $e\cdot r$ is not \kSable{k}. 
We come back to the ``non-extended'' conflict graph, without edges of the form $\cgedgeD{XY}$.
Let $\succs{v}=\{v'\in V\mid v\to^* v'\}$ be the set of vertices
reachable from $v$, 
and let
$\preds{v}=\{v'\in V\mid v'\to^* v\}$ be the set of vertices co-reachable
from $v$. For a set of vertices
$U\subseteq V$, let $\succs{U}=\bigcup\{\succs{v}\mid v\in U\}$,
and $\preds{U}=\bigcup\{\preds{v}\mid v\in U\}$.

\begin{lemma}\label{lem:bad-characterization}
The feasible execution $e'$ is bad iff  one of the two
holds
\begin{enumerate}
\item  $\vstart\cgedge{}^*\cgedge{RS}\cgedge{}^*\vstop$, or 
\item the size of the set $\succs{\vstart}\cap\preds{\vstop}$ is greater or equal to $k+2$.
\end{enumerate}
\end{lemma}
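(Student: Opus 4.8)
The plan is to reduce the statement to Theorem~\ref{thm:k-sity-scc} applied to the conflict graph of $e\cdot r$. Since $e'$ is feasible there is an execution $e\cdot r\in\asEx(\system)$ with $\deviate{e\cdot r}=e'$; by Proposition~\ref{prop:cau} its MSC satisfies causal delivery, so Theorem~\ref{thm:k-sity-scc} characterises when it is \kSous{k}. By definition $e'$ is bad iff $e\cdot r$ is not \kSous{k}, i.e. iff $\cgraph{e\cdot r}$ contains an SCC of size $>k$ or an $\cgedge{RS}$ edge on a cyclic path. The whole work is to re-express these two defects of $\cgraph{e\cdot r}$ as conditions~(1) and~(2), which are stated on $\cgraph{e'}$.

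First I would describe precisely how $\cgraph{e\cdot r}$ arises from $\cgraph{e'}$ by collapsing $\vstart$ and $\vstop$ into one vertex $v_d$. The send of $v_d$ sits on $\sender{\vstart}$ and its receive, being the very last action of $e\cdot r$, is maximal on $\receiver{\vstop}$; hence $v_d$ has no receive-outgoing edges, its outgoing edges are exactly the send-outgoing edges of $\vstart$, and its incoming edges are the send-incoming edges of $\vstart$ together with the receive-incoming edges of $\vstop$. Two observations drive the rest: $\vstop$ is a sink of $\cgraph{e'}$, and the unique edge of $\cgraph{e'}$ created through the auxiliary process $\pi$, namely $\vstart\cgedge{RS}\vstop$, joins the two $\pi$-actions and therefore disappears in the collapse. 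Consequently a cycle through $v_d$ in $\cgraph{e\cdot r}$ unfolds into $\cgraph{e'}$ either as a path $\vstart\to^*\vstop$ that leaves $\vstart$ by a send edge and reaches $\vstop$ by a receive edge, or as a cycle through $\vstart$ alone; and every cycle of $\cgraph{e\cdot r}$ that avoids $v_d$ is already present in $\cgraph{e'}$.

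Since $e'$ is \kSable{k}, Theorem~\ref{thm:k-sity-scc} tells us that $\cgraph{e'}$ has no SCC larger than $k$ and no $\cgedge{RS}$ edge on a cycle, so every new defect of $\cgraph{e\cdot r}$ must run through $v_d$. For the size defect, the merge picture shows that the SCC of $v_d$ consists of $v_d$ together with the vertices $w\ne\vstart,\vstop$ lying on some path from $\vstart$ to $\vstop$, i.e. of $\succs{\vstart}\cap\preds{\vstop}$ with $\vstart,\vstop$ identified; its size is therefore $|\succs{\vstart}\cap\preds{\vstop}|-1$, which exceeds $k$ exactly when $|\succs{\vstart}\cap\preds{\vstop}|\ge k+2$, giving condition~(2) (in the degenerate case with no path from $\vstart$ to $\vstop$ both sides are vacuously false). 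For the $\cgedge{RS}$ defect, an $\cgedge{RS}$ edge on a cycle through $v_d$ can be neither the first edge (which is send-labelled) nor an edge entering $v_d$ through $\vstop$ (which is receive-ending); and it cannot close a cycle back at $\vstart$, for such a cycle is a genuine cycle of $\cgraph{e'}$ and would contradict that $e'$ is \kSable{k}. Hence the edge is interior to a path $\vstart\to^*\vstop$, which is precisely $\vstart\cgedge{}^*\cgedge{RS}\cgedge{}^*\vstop$, i.e. condition~(1); conversely, any such path closes into a cycle through $v_d$ carrying an $\cgedge{RS}$ edge, making $e\cdot r$ not \kSous{k}.

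The main obstacle is the spurious edge $\vstart\cgedge{RS}\vstop$ coming from $\pi$: taken at face value it would make condition~(1) hold for every feasible execution, so the argument has to discard it consistently and reason with the edges that survive the collapse (equivalently, with the split of $\cgraph{e\cdot r}$). The two remaining delicate points are the endpoint bookkeeping — keeping straight that $\vstart$ carries the outgoing edges while $\vstop$ carries the extra incoming ones, which is what produces the $-1$ in the SCC count and hence the threshold $k+2$ rather than $k+1$ — and the systematic use of the fact that $e'$ is \kSable{k} to rule out cycles and $\cgedge{RS}$ edges that already live in $\cgraph{e'}$. Once these are settled, both directions of the equivalence are routine translations through Theorem~\ref{thm:k-sity-scc}.
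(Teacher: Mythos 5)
Your proof is correct and follows essentially the same route as the paper's: both reduce badness to Theorem~\ref{thm:k-sity-scc} applied to $\cgraph{e\cdot r}$, obtained from $\cgraph{e'}$ by merging $\vstart$ and $\vstop$ into a single vertex, both argue that any defect must pass through that merged vertex because $e'$ itself is \kSable{k}, and both derive the threshold $k+2$ from the merged SCC having one fewer vertex than $\succs{\vstart}\cap\preds{\vstop}$. Your explicit handling of the spurious $\vstart\cgedge{RS}\vstop$ edge contributed by $\pi$'s own timeline (which must be discarded lest condition~(1) hold vacuously) is a welcome precision that the paper's proof leaves implicit.
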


\iflong
\begin{proof}
  \begin{quote}
{\bf Lemma 5}
The feasible execution $e'$ is bad iff  one of the two
holds
\begin{enumerate}
\item 
 $\vstart\cgedge{}^*\cgedge{RS}\cgedge{}^*\vstop$, or 
\item 
the size of the set $\succs{\vstart}\cap\preds{\vstop}$ is greater or equal to $k+2$.
\end{enumerate}
\end{quote}

\begin{proof}

Since $msc(e')$ is \kSous{k} and $e' = \deviate{e\cdot r}$, $msc(e)$ (without the last reception $r$)
is \kSous{k}.
By Theorem~\ref{thm:k-sity-scc}, $e'$ is bad if and only if
$\cgraph{e\cdot r}$ contains either a cyclic path
with an RS edge, or a SCC with of size $\geq k+1$.
This cyclic path (resp. SCC) must contain
the vertex associated with the last receive $r$ of $e\cdot r$. In $\cgraph{e'}$, this cyclic (resp. SCC) corresponds to a path
from $\vstart$ to $\vstop$ (resp. the set of vertices that
are both reachable from $\vstart$ and co-reachable from $\vstop$). Since
the $\vstart$ and $\vstop$ account for the same node in the conflict graph
of $e\cdot r$, the size of the SCC is one less than the size of the set
$\succs{\vstart}\cap\preds{\vstop}$. \qed
\end{proof}
\end{proof}
\fi

In order to determine whether a given message exchange
$v$ of $\cgraph{e'}$ should
be counted as reachable (resp. co-reachable),
we will compute at the entry and exit of every \kE{k} of $e'$
which processes are ``reachable'' or ``co-reachable''.

\begin{example}\label{ex:reach-co-reach}
(Reachable and co-reachable processes)\\
\begin{minipage}[b]{8.5cm}
\vspace*{0.2cm}
Consider the MSC  on the right composed of five \kE{1}s.
While sending
message $(s,\amessage_0)$ that corresponds to $\vstart$,
process $r$ becomes ``reachable'': any subsequent message exchange that
involves $r$ corresponds to a vertex of the conflict graph
that is reachable from $\vstart$. While sending $\amessage_2$, process $s$ becomes ``reachable'',
because process $r$ will be reachable when it will receive message
$\amessage_2$.
Similary, $q$ becomes reachable after recei\-ving $\amessage_3$
because $r$ was reachable when it sent $\amessage_3$, and $p$ becomes
reachable after receiving  $\amessage_4$ because $q$ was rea-

\end{minipage}
   \hfill
\begin{minipage}[t]{4cm}
\begin{center}
\begin{tikzpicture}[scale = 0.8]
\draw (2,-4.3) node[above] {$msc(e)$};
  \draw (0,0.3) node{$p$} ;
  \draw (1,0.3) node{$q$} ;
    \draw (2,0.3) node{$r$} ;
  \draw (3,0.3) node{$s$} ;
    \draw (4,0.3) node{$\pi$} ;

\draw (0,0) edge (0,-3.5); 
\draw (1,0) edge (1,-3.5);
\draw (2,0) edge (2,-3.5); 
\draw (3,0) edge (3,-3.5); 
\draw (4,0) edge (4,-3.5);
\draw[>=latex, ->] (2,-0.5) to node [pos=0.25,above, scale = 0.85] {$(s,\amessage_0)$} (4,-0.5);
\draw[>=latex, ->] (0,-1) to node [above, sloped, pos = 0.2, scale = 0.9] {$\amessage_1$} (3,-1);
\draw[>=latex, ->] (3,-1.5) to node [above, scale = 0.9] {$\amessage_2$} (2,-1.5);
\draw[>=latex, ->] (2,-2) to node [above, scale = 0.9] {$\amessage_3$} (1,-2);
\draw[>=latex, ->] (1,-2.5) to node [above, scale = 0.9] {$\amessage_4$} (0,-2.5);
\draw[>=latex, ->] (4,-3) to node [above, scale = 0.9] {$\amessage_0$} (3,-3);
\end{tikzpicture}
\end{center}
\end{minipage}
\\ chable  when it sent
it. 
Co-reachability works similarly, but reasoning backwards on the timelines.
For instance, process $s$ stops being ``co-reachable'' while it receives
$\amessage_0$, process $r$ stops being co-reachable after it receives $\amessage_2$,
and process $p$ stops being co-reachable by sending $\amessage_1$.
The only message that is sent by a process being both reachable
and co-reachable at the instant of the sending is $\amessage_2$, therefore
it is the only message that will be counted as contributing to the SCC.
\end{example}

More formally, let
$e$ be sequence of actions, 
$\cgraph{e}$  its conflict graph and $P, Q$ two sets of processes, 
$
\succse{e}{P}=\mathsf{Post}^*\Big(\{v\mid \procs{v}\cap P\neq\emptyset\}\Big)$
 and 
$\predse{e}{Q}=\mathsf{Pre}^*\Big(\{v\mid \procs{v}\cap Q\neq\emptyset\}\Big)
$
are introduced to represent 
the local view through  \kE{k}s of $\succs{\vstart}$ and $\preds{\vstop}$.
For instance, for $e$ as in Example~\ref{ex:reach-co-reach},
we get $\succse{e}{\{\pi\}} = \{(s,\amessage_0), \amessage_2, \amessage_3,\amessage_4,\amessage_0\}$ and 
$\predse{e}{\{\pi\}} = \{\amessage_0,\amessage_2,\amessage_1,(s,\amessage_0)\}$.
In each \kE{k} $e_i$ the size of the intersection between $\succse{e_i}{P}$ and $\predse{e_i}{Q}$ will give the local contribution of the current \kE{k} to the calculation of the size of the global SCC. In the transition relation $\transitionBV{e}{k}$ this value is stored in variable $\cont$.
The last ingredient to consider is to recognise if an edge RS belongs to the SCC.
To this aim, we use a function $\lastisRec: \procSet \to \{ \cttrue, \ctfalse \}$ that for each process stores the information whether the last action in the previous \kE{k} was a reception or not.  Then depending on the value of this variable and if a node is in the  current SCC or not the value of $\sawRS$ is set accordingly.

\begin{figure}[t]
 \begin{prooftree}
\AxiomC {\small
	\stackanchor{ \stackanchor{$P'=\procs{\succse{e}{P}} \qquad Q=\procs{\predse{e}{Q'}}$}{ $SCC_e = \succse{e}{P}\cap\predse{e}{Q' }$}}
		{ \stackanchor {$\cont' = \mathsf{min}(k+2,\cont + n) \quad \mbox{where } n= | SCC_e |$}
			{\stackanchor {\stackanchor{$
\lastisRec'(q) \Leftrightarrow (\exists v\in SCC_e. \procofaction{R}{v}=q\wedge v\cap R\neq\emptyset) \vee $}{$(\lastisRec(q) \wedge\! \not \exists v \in V.
\procofaction{S}{v}=q )
$}}
				{\stackanchor { \stackanchor{$
\sawRS' = \sawRS \vee $}{$(\exists v\in SCC_e)(\exists p\in \procSet\setminus\{\pi\})\
\procofactionv{S}{v} = p \wedge \lastisRec(p) \wedge p\in P \cap Q 
$}}}}}}
\UnaryInfC{ \small $
(P,Q, \cont, \sawRS, \lastisRec)
\transitionBV{e}{k}
(P', Q', \cont', \sawRS', \lastisRec')
$}
\end{prooftree}
\vspace*{-0.5cm}
\caption{Definition of the relation $\transitionBV{e}{k}$
}
 \label{fig:BVrel}
 \end{figure}


The transition relation $\transitionBV{e}{k}$  defined in Fig.~\ref{fig:BVrel} deals with abstract confi\-gurations of the form
$(P,Q,\cont,\sawRS, {\lastisRec'})$ 
where $P,Q\subseteq \procSet$, $\sawRS$ is a boolean value,
and $\cont$ is a counter bounded by $k+2$. We denote by $\lastisRec_0$ the function where all $\lastisRec(p)=\ctfalse$ for all $p\in \procSet$.

\begin{lemma}\label{lem:transitionBV}
Let $e'$ be a feasible \kSable{k} execution of $\system'$. Then $e'$ is a bad execution iff there are $e'' =e_1\cdots e_n \cdot \send{\pi}{q}{\amessage}\cdot \rec{\pi}{q}{\amessage}$ with
  $e_1,\ldots ,e_n\in S^{\leq k}R^{\leq k}$ and $msc(e')=msc(e'')$, $P',Q\subseteq \procSet$, $\sawRS\in\{\cttrue,\ctfalse\}$, $\cont\in\{0,\dots,k+2\}$,
such that
$$
(\{\pi\},Q,0,\ctfalse,\lastisRec_0)
\transitionBV{e_1}{k}\dots\transitionBV{e_n}{k}
(P',\{\pi\},\cont,\sawRS,\lastisRec)
$$
and at least one of the two holds: either $\sawRS=\cttrue$, or $\cont=k+2$.
\end{lemma}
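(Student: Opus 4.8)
The plan is to reduce the statement to the graph-theoretic characterisation of bad executions given in Lemma~\ref{lem:bad-characterization}, by showing that the relation $\transitionBV{e}{k}$ faithfully computes, one \kE{k} at a time, the quantities $\succs{\vstart}$, $\preds{\vstop}$, the size of $\succs{\vstart}\cap\preds{\vstop}$, and the presence of an RS edge on a path from $\vstart$ to $\vstop$, all read off the global conflict graph $\cgraph{e'}=\cgraph{e''}$. Concretely, I would fix $e''=e_1\cdots e_n\cdot\send{\pi}{q}{\amessage}\cdot\rec{\pi}{q}{\amessage}$ with $msc(e')=msc(e'')$ and prove, by induction on the number of \kE{k}s read, a bundle of invariants tying each component of the reached abstract configuration to $\cgraph{e'}$; the two acceptance conditions $\sawRS=\cttrue$ and $\cont=k+2$ then match exactly conditions~(1) and~(2) of Lemma~\ref{lem:bad-characterization}.

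The invariants I would maintain are as follows. For the forward thread, $P$ after $e_1\cdots e_i$ is the set of processes $p$ such that some vertex $v$ of $e_1\cdots e_i$ with $p\in\procs{v}$ lies in $\succs{\vstart}$; the seed $P=\{\pi\}$ is sound because, within $e_1,\dots,e_n$, the process $\pi$ acts only in $\vstart$ (its send-back in $\vstop$ sits in the trailing \kE{k}), so $\pi\in P$ merely designates $\vstart$ as the source of forward reachability. Symmetrically, $Q$ entering $e_i$ is the set of processes touched by a vertex of $e_i\cdots e_n$ lying in $\preds{\vstop}$, with $\pi$ seeded at the end to designate $\vstop$ as the sink of backward reachability. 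The local set $SCC_e=\succse{e}{P}\cap\predse{e}{Q'}$ is then exactly the restriction to the current \kE{k} of $\succs{\vstart}\cap\preds{\vstop}$, so that $\cont$ accumulates the size of this global set (capped at $k+2$). Finally $\lastisRec(p)$ records that the last action of $p$ in the prefix already read is a receive belonging to this reachable-and-co-reachable region, and $\sawRS$ records that an RS edge has been found on a path from $\vstart$ to $\vstop$.

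The heart of the induction is showing that the process sets $P$ and $Q$ suffice to recover reachability across \kE{k} boundaries: this is the same difficulty solved for causal delivery in the proof of Lemma~\ref{lem:causal}, and here $P,Q$ play the role that the summary nodes $\lambdanode_p$ played there. For the RS flag I would first observe that no RS edge can sit inside a single \kE{k}, since in $s_1\cdots s_m r_1\cdots r_{m'}$ every send precedes every receive on each process; hence an RS edge $w\cgedge{RS}w'$ always crosses a boundary, and its send endpoint $w'$ is the first send performed by $\procofactionv{S}{w'}$ after the receive of $w$. This is precisely the situation detected by the disjunct $\procofactionv{S}{v}=p\wedge\lastisRec(p)\wedge p\in P\cap Q$ in the update of $\sawRS'$, and conversely any firing of that disjunct exhibits such an edge on a path from $\vstart$ to $\vstop$. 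The restriction $p\in\procSet\setminus\{\pi\}$ is essential: it discards the single artificial edge $\vstart\cgedge{RS}\vstop$ created by routing the message through $\pi$, which is absent from $\cgraph{e\cdot r}$ and must not be mistaken for a genuine RS edge on a cycle.

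I expect the main obstacle to be the careful bookkeeping around the two endpoints $\vstart$ and $\vstop$, which arise from splitting the single merged vertex $v_d$ of $\cgraph{e\cdot r}$: one must check that $\cont$ counts each of them the right number of times and that the seeds $P=\{\pi\}$ and $Q=\{\pi\}$ inject, respectively, $\vstart$ into the reachable region and $\vstop$ into the co-reachable region, so that the threshold $k+2$ on $|\succs{\vstart}\cap\preds{\vstop}|$ faithfully encodes the SCC-size bound of Theorem~\ref{thm:k-sity-scc} used in Lemma~\ref{lem:bad-characterization}. Once the invariants are established, both directions follow: if $e'$ is bad, Lemma~\ref{lem:bad-characterization} yields condition~(1) or~(2), which the invariants turn into $\sawRS=\cttrue$ or $\cont=k+2$ along the run determined by any splitting of $msc(e')$ into \kE{k}s; conversely, an accepting run forces one of the two graph conditions, hence badness.
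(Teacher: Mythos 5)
Your proposal is correct and follows essentially the same route as the paper's proof: both reduce badness to the graph-theoretic characterisation of Lemma~\ref{lem:bad-characterization} and then argue by induction on the $k$-exchanges that $P$ and $Q$ track the processes touched by $\succs{\vstart}$ and $\preds{\vstop}$ across exchange boundaries, that $\cont$ accumulates $|\succs{\vstart}\cap\preds{\vstop}|$ capped at $k{+}2$, and that $\sawRS$ fires exactly when an RS edge (which necessarily crosses a boundary, since no RS edge lives inside a single $k$-exchange) lies on a path from $\vstart$ to $\vstop$. The invariants you state are the ones the paper's induction implicitly maintains, so nothing further is needed.
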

\iflong
\begin{proof}
  
 
\begin{quote}
{\bf Lemma 6}
	Let $e'$ a feasible \kSable{k} execution of $\system'$. Then $e'$ is a bad execution iff there are $e'' =e_1\cdots e_n \cdot \send{\pi}{q}{\amessage}\cdot \rec{\pi}{q}{\amessage}$ with
  $e_1,\ldots ,e_n\in S^{\leq k}R^{\leq k}$ and $msc(e')=msc(e'')$, $P',Q\subseteq \procSet$, $\sawRS\in\{\cttrue,\ctfalse\}$, $\cont\in\{0,\dots,k+2\}$,
such that
$$
(\{\pi\},Q,0,\ctfalse,\lastisRec_0)
\transitionBV{e_1}{k}\dots\transitionBV{e_n}{k}
(P',\{\pi\},\cont,\sawRS,\lastisRec)
$$
and at least one of the two holds: either $\sawRS=\cttrue$, or $\cont=k+2$.
%
%
\end{quote}

 \begin{proof}
 
 $\Rightarrow$ 
Let us suppose $e'$ be a \kSable{k} bad and feasible execution such that $e'' =e_1\cdots e_n\cdot \send{\pi}{q}{\amessage}\cdot \rec{\pi}{q}{\amessage}$ with $msc(e')=msc(e'')$.
We show that $$
(\{\pi\},Q,\ctfalse,0)
\transitionBV{e_1}{k}\dots\transitionBV{e_n}{k}
(P',\{\pi\},\sawRS,\cont)
$$
for some $P'$,   $Q$
and 
either $\sawRS=\cttrue$, or $\cont=k+2$.

We proceed by induction on $n$.
\begin{description}
\item[Base n=2]
 Notice that,  for a feasible execution, there are at least two \kE{k}s as the deviation cannot fit a single \kE{k}: the send from process $\pi$ to the original recipient must follow the reception of the deviated message, thus it has to belong to a subsequent \kE{k}.
 Then $e'' = e_1 \cdot e_2$ and 
 we show $(\{\pi\},Q,\ctfalse,0)
\transitionBV{e_1}{k} (P',Q',\sawRS',\cont)  \transitionBV{e_2}{k}
(P'',\{\pi\},\sawRS,\cont')$.
 
 By Lemma \ref{lem:bad-characterization}, we have that either $\vstart\cgedge{}^*\cgedge{RS}\cgedge{}^*\vstop$, or 
the size of the set $\succs{\vstart}\cap\preds{\vstop}$ is greater or equal to $k+2$.

If $\vstart\cgedge{}^*\cgedge{RS}\cgedge{}^*\vstop$, then since a label RS cannot exists in a local conflict graph,
there exist two paths $\vstart\cgedge{}^*v_1$ in $\cgraph{e_1}$ and $v_2\cgedge{}^*\vstop$ in $\cgraph{e_2}$, with $\procofactionv{R}{v_1}=\procofactionv{S}{v_2}$.
We have that $v_2 \in \predse{e_2}{\pi}$ and $\lastisRec(\procofactionv{S}{v_2})$ is $\cttrue$, thus $\sawRS$ becomes $\cttrue$ concluding this part of the proof.

Now suppose that  the size of the set $\succs{\vstart}\cap\preds{\vstop}$ is greater or equal to $k+2$. We show that all nodes in $\succs{\vstart}\cap\preds{\vstop}$ have been counted either in the first or in the second \kE{k}.
Take $v \in \succs{\vstart}\cap\preds{\vstop}$ then there exists a path $ \vstart \cgedge{}^*v \cgedge{}^*\vstop$ and 
 $v$ is an exchange that belongs either to the first or the second \kE{k}. 
If $v$ belongs to the first one then we can divide previous path in two parts such that $ \vstart \cgedge{}^*v \cgedge{}^*v_1$ is in $\cgraph{e_1}$,  $ v_2 \cgedge{}^*\vstop$
is in $\cgraph{e_2}$ and $\procs{v_1} \cup \procs{v_2} = \{p\} \neq \emptyset$. 
From this it follows that process $p \in Q'$ and thus $v \in \predse{e_1}{Q'}$. Moreover, $v \in \succse{e_1}{\pi}$ and therefore the node $v$ is counted in the first \kE{k}.

Similarly, if $v$ belongs to the second \kE{k}, we can divide the path into two parts such that $ \vstart \cgedge{}^*v_1$ is in $\cgraph{e_1}$,  $ v_2 \cgedge{}^*v \cgedge{}^*\vstop$
is in $\cgraph{e_2}$ and $\procs{v_1} \cup \procs{v_2} = \{p\} \neq \emptyset$.
From this it follows that process $p \in P'$ and thus $v \in \succse{e_2}{P'}$. Moreover, $v \in \predse{e_2}{\pi}$ and therefore the node $v$ is counted in the second \kE{k}.

Thus all nodes in $\succs{\vstart}\cap\preds{\vstop}$ are considered and if $\succs{\vstart}\cap\preds{\vstop} \geq k+2$ so is variable $\cont'$, concluding this part of the proof.

\item[Inductive step]
It is an easy generalisation of what has been said in the previous part of the proof. By considering that by inductive hypothesis sets $\succse{e_i}{P}$ and $\predse{e_i}{Q} $ contains respectively all the processes that are reachable from the exchange to process $\pi$ and are co-reachable from the exchange from process $\pi$.
\end{description}

$\Leftarrow$
Let $e'$ a \kSable{k} feasible execution,  $e'' =e_1\cdots e_n\cdot \send{\pi}{q}{\amessage}\cdot \rec{\pi}{q}{\amessage}$ with $msc(e')=msc(e'')$, and 
$P'$, $Q \subseteq \procSet$, $\sawRS\in\{\cttrue,\ctfalse\}$, $\cont\in\{0,\dots,k+2\}$ such that
$$
(\{\pi\},Q,\ctfalse,0)
\transitionBV{e_1}{k}\dots\transitionBV{e_n}{k}
(P',\{\pi\},\sawRS,\cont)
$$
We have either $\sawRS=\cttrue$ or $\cont=k+2$. 

\begin{enumerate}

\item We suppose that $\sawRS=\cttrue$. 
If $\sawRS=\cttrue$ then $\exists e_i$ where $\sawRS = \ctfalse$ and $\sawRS' = \cttrue$. In this \kE{k}, $\exists p \in \procSet$ such that $p \in P$, $\lastisRec(p) = \cttrue$ and $\exists v$ such that $\procofactionv{S}{v}= p$ and $v \in \predse{e'}{Q'}$. 
Since $p \in P$, then there is a path $\vstart \cgedge{}^* \cgedge{RS} v$ in $\cgraph{e'}$. On the other hand, since $v \in \predse{e'}{Q'}$ then $v \in \preds{\vstop}$ and there is a path $v \cgedge{}^* \vstop$ in $\cgraph{e'}$.
Therefore, there is a path $\vstart \cgedge{}^* \cgedge{RS} v \cgedge{}^* \vstop$ in $\cgraph{e'}$ and so $e'$ is bad. 

\item We suppose that $\cont=k+2$. As previously, $e'$ is feasible by Lemma~\ref{lem:feasible-is-regular}.
Each $v$ belongs to $\succse{e_i}{P_i} \cap \predse{e_i}{Q'_i} \setminus \vstart$ also belongs to $\succs{\vstart} \cap \preds{\vstop}$ then $\mid \succs{\vstart} \cap \preds{\vstop}  \mid \geq k+2$. Therefore, $e'$ is bad. 

\end{enumerate}

Therefore, in  both cases, $e'$ is feasible and bad, concluding the proof. 
\qed
 \end{proof}

\end{proof}
\fi

\begin{comp} 
As for the notion of feasibility, to determine if an execution is bad,  in \cite{DBLP:conf/cav/BouajjaniEJQ18} the authors use a monitor that builds a path between the send to process $\pi$ and the send from $\pi$. 
In addition to the problems related to the wrong characterisation of \kSity{k}, this monitor not only
can detect  an $RS$ edge  when there should be none, but also it can miss them when they exist. In general, the problem arises because the path is constructed by considering only an endpoint at the time (see Example~\ref{ex:counter-bad-def} in Appendix~\ref{sec:additional_comparison} for more explanations). 
\end{comp}
%
%

We can finally conclude that:
\begin{theorem}\label{thm:ksity-decidability}
The \kSity{k} of a system $\system$ is decidable for  $k\geq 1 $. 
\end{theorem}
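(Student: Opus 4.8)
The plan is to reduce \kSity{k} to the emptiness problem of a non-deterministic finite state automaton, by assembling the three characterisation lemmas already proved. First I would fix the instrumented system $\system'$, which is effectively computable from $\system$ (it adds the process $\pi$ over the finite tagged message alphabet), and invoke Lemma~\ref{lem:bad-is-bad}: the system $\system$ is \emph{not} \kSable{k} if and only if $\system'$ admits a \kSable{k} execution that is both feasible and bad. Hence it suffices to decide whether such an execution of $\system'$ exists.

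Next I would build a single automaton $\mathcal{A}$ over the finite alphabet $\Sigma_k=S^{\leq k}R^{\leq k}$ of \kE{k}s (finite because $\procSet$, $\paylodSet$, $\pi$ and $k$ are fixed). Its states are the products of the configuration spaces of $\transitionCD{e}{k}$ and $\transitionBV{e}{k}$, i.e.\ tuples $(\globalstate{l},B,C_S^{\pi},C_R^{\pi},\destpi,P,Q,\cont,\sawRS,\lastisRec)$, and a transition on a letter $e\in\Sigma_k$ is present exactly when both $\transitionCD{e}{k}$ and $\transitionBV{e}{k}$ fire on the \emph{same} $e$. The initial state glues $(\globalstate{l_0},B_0,(\emptyset,\emptyset),\bot)$ with $(\{\pi\},Q,0,\ctfalse,\lastisRec_0)$, where the backward-computed set $Q$ is guessed nondeterministically. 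A run is accepting when the word is closed by the canonical pair $\send{\pi}{q}{\amessage}\cdot\rec{\pi}{q}{\amessage}$ and the reached product configuration satisfies both $\destpi=q$ together with $\pi\notin C_{R,q}$ (feasibility, by Lemma~\ref{lem:feasible-is-regular}, where $B(q)=(C_{S,q},C_{R,q})$) and $Q=\{\pi\}$ with $\sawRS=\cttrue$ or $\cont=k+2$ (badness, by Lemma~\ref{lem:transitionBV}); the component $P'$ is left unconstrained. Since $\globalstate{l}$ ranges over the finitely many global control states of $\system'$, $B$ over $\procSet\to(2^{\procSet}\times2^{\procSet})$, each of $C_S^{\pi},C_R^{\pi},P,Q$ over $2^{\procSet}$, $\destpi$ over $\procSet\cup\{\bot\}$, $\cont$ over $\{0,\dots,k+2\}$, $\sawRS$ over booleans, and $\lastisRec$ over $\procSet\to\{\cttrue,\ctfalse\}$, the state space is finite and $\mathcal{A}$ is a genuine, effectively constructible NFA.

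Then I would prove correctness: $L(\mathcal{A})\neq\emptyset$ iff $\system'$ has a \kSable{k} feasible bad execution. For one direction, an accepted word yields a single \kE{k} decomposition $e''=e_1\cdots e_n\cdot\send{\pi}{q}{\amessage}\cdot\rec{\pi}{q}{\amessage}$ that simultaneously drives the $\transitionCD{e}{k}$-chain and the $\transitionBV{e}{k}$-chain, so by Lemmas~\ref{lem:feasible-is-regular} and~\ref{lem:transitionBV} the corresponding $e'$ (with $msc(e')=msc(e'')$) is \kSable{k}, feasible and bad. Conversely, if $\system'$ has such an $e'$, then $msc(e')$ is \kSous{k}; choosing any valid \kE{k} decomposition, the proofs of those two lemmas show that the $C$-sets and the counters $\cont,\sawRS$ computed along that decomposition are exactly those forced by the acceptance condition. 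Combining with Lemma~\ref{lem:bad-is-bad}, non-emptiness of $\mathcal{A}$ is equivalent to $\system$ not being \kSable{k}. As emptiness of an NFA is decidable, \kSity{k} is decidable, which is the claim.

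The main obstacle I anticipate is the synchronisation argument: one must certify feasibility and badness by a \emph{common} \kE{k} decomposition rather than two different ones, and reconcile the forward reading of $\transitionCD{e}{k}$ (whose $C^{\pi}$-sets grow left-to-right from $\vstart$) with the backward co-reachability sets $Q$ of $\transitionBV{e}{k}$ (computed right-to-left towards $\vstop$). Both are resolved by nondeterminism together with the fact, implicit in the proofs of Lemmas~\ref{lem:feasible-is-regular} and~\ref{lem:transitionBV}, that these quantities are functionally determined by any fixed valid decomposition of the underlying \kSous{k} MSC; consequently the product automaton recognises exactly the regular set of MSCs admitting a feasible bad linearisation.
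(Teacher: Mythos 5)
Your proposal is correct and follows essentially the same route as the paper: both reduce non-\kSity{k}, via Lemma~\ref{lem:bad-is-bad}, to the existence of a common sequence of \kE{k}s that simultaneously drives the finite-state relations $\transitionCD{e}{k}$ and $\transitionBV{e}{k}$ to configurations satisfying the acceptance conditions of Lemmas~\ref{lem:feasible-is-regular} and~\ref{lem:transitionBV}, and then conclude by decidability of emptiness for the resulting finite-state (product) construction. Your write-up merely makes explicit the product automaton and the synchronisation on a single decomposition, which the paper leaves implicit in the phrase ``since both relations are finite state.''
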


\iflong
\begin{proof}
  \begin{quote}
{\bf Theorem 4}
The \kSity{k} of a system $\system$ is decidable for $k \geq 1$. 
\end{quote} 
 
 \begin{proof}
  Let $\system$ be fixed.
  By Lemmata~\ref{lem:bad-is-bad}, \ref{lem:feasible-is-regular}, and \ref{lem:transitionBV},
  $\system$ is not \kSable{k} if and only if there is a sequence of actions
  $e'=e_1'\cdots e_{n}'\cdot s\cdot{}r$
  such that $e_i\in S^{\leq k}R^{\leq k}$,   $s=\send{\pi}{q}{\amessage}$, $r=\rec{\pi}{q}{\amessage}$,
  $$
  (\globalstate{l_0}, B_0, (\emptyset,\emptyset), \bot)\transitionCD{e_1'}{k}\dots\transitionCD{e_{n}'}{k}
  (\globalstate{l'}, B', \vec C',q)
  $$
  and
  $$
  (\{\pi\},Q,\ctfalse,0)
  \transitionBV{e_1'}{k}\dots\transitionBV{e_{n}'}{k}\transitionBV{s\cdot{}r}{k}
  (P',\{\pi\},\sawRS,\cont)
  $$
  for some $\globalstate{l'},B',\vec C',Q,P'$ with $\pi\not\in C_{R,q}$.
  Since both relations $\transitionCD{e}{k}$ and $\transitionBV{e}{k}$ are finite state, the existence of such a sequence of actions is decidable. \qed
 \end{proof}

\end{proof}
\fi


\section{\kSity{k} for peer-to-peer systems} \label{section:p2p}

In this section, we will apply \kSity{k} to peer-to-peer systems. A peer-to-peer system is a composition of communicating automata where each  pair of machines exchange messages via two private FIFO buffers, one per direction of communication. 
Precise formal definitions, lemmata and theorems can be found in  Appendix~\ref{sec:additional_p2p}. Here we only give a quick insight on what changes w.r.t. the mailbox setting.


Causal delivery reveals the order imposed by FIFO buffers. Definition \ref{def:causal-delivery}
 must then be  adapted to account for peer-to-peer communication. For instance, two messages that are sent to a same process $p$ by two different processes can be received by $p$ in any order, regardless of any causal dependency between the two sends. Thus, checking causal delivery in peer-to-peer systems is easier than in the mailbox setting, as we do not have to carry information on causal dependencies.  

Within a peer-to-peer architecture, MSCs and conflict graphs are defined as within a mailbox communication. Indeed, they represents dependencies over machines, i.e., the order in which the actions can be done on a given machine, and over the send and the reception of a same message, and they do not depend on   the type of communication. The notion of \kE{k} remains also unchanged.

\paragraph{\bf Decidability of reachability for \kSable{k} peer-to-peer systems.}

To establish the decidability of reachability for \kSable{k} peer-to-peer systems, 
we define a transition relation $\ptransitionKE{e}{k}$ (see Fig.~\ref{fig:transitionCDp2p} in Appendix~\ref{sec:additional_p2p}) for a sequence of action $e$ describing a \kE{k}. 
As for mailbox systems,  if a send action is unmatched  in the current \kE{k}, it will stay orphan forever.  Moreover, 
after a process $p$ sent an orphan message to a process $q$, $p$ is forbidden to send any matched message to $q$. 
Nonetheless, as a consequence of the simpler definition of causal delivery,  , we no longer need to work on the conflict graph. Summary nodes and extended edges are not needed and all the necessary information is in  function $B$ that solely  contains all the forbidden senders for process $p$. 

The characterisation of a \kSable{k} execution is the same as for  mailbox systems as the type of communication is not relevant.
%
We can thus conclude, as within mailbox communication, that  reachability is decidable. 

%
\begin{theorem}\label{thm:reachability-dec-p2p}
Let $\system$ be a \kSable{k} system and $\globalstate{l}$ a global control state of $\system$. The problem whether there exists $e \in asEx(\system)$ and $\B$ such that $(\globalstate{l_0}, \B_0) \xRightarrow{e} (\globalstate{l}, \B)$ is decidable. 
\end{theorem}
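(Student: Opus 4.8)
The plan is to replay the argument behind Theorem~\ref{thm:reachability-is-decidable}, exploiting that for a \kSable{k} system $\system$ every execution is \kSable{k}, so $\sTr_k(\system)=\asTr(\system)$ and reaching a global control state $\globalstate{l}$ through \emph{some} execution coincides with reaching it through a \kSable{k} one. It therefore suffices to show that the set of \kSous{k} MSCs of a peer-to-peer system is regular, recognised by a finite-state device reading an execution one \kE{k} at a time: once this is established, deciding whether an abstract configuration whose control component is $\globalstate{l}$ is reachable from the initial one is immediate, exactly as in the mailbox case. Note that the decomposition into \kE{k}s and the characterisation of \kSable{k} executions (Theorem~\ref{thm:k-sity-scc}) are insensitive to the communication architecture, so only the causal-delivery recogniser has to be redesigned.

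First I would take as abstract configurations the pairs $(\globalstate{l},B)$ where $\globalstate{l}$ is a global control state of $\system$ and $B:\procSet\to 2^{\procSet}$ assigns to each process $p$ its set of \emph{forbidden senders}, i.e.\ the processes that issued an unmatched message to $p$ in some earlier \kE{k}. The recogniser reads the execution \kE{k} by \kE{k} using the relation $\ptransitionKE{e}{k}$ of Fig.~\ref{fig:transitionCDp2p}. Each transition first checks that the current \kE{k} $e$ is locally executable ($(\globalstate{l},\B_0)\xRightarrow{e}(\globalstate{l'},\B)$ for some $\B$) and that it contains no matched send from a forbidden sender $q\in B(p)$ to $p$; it then updates $B$ by adding to $B(p)$ every process that performs a fresh unmatched send to $p$ within $e$. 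The initial configuration is $(\globalstate{l_0},B_0)$ with $B_0$ the map sending every process to $\emptyset$.

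The heart of the proof is the peer-to-peer analogue of Lemma~\ref{lem:causal}: an MSC is \kSous{k} iff it admits a linearisation $e=e_1\cdots e_n$ with each $e_i\in\sendSet^{\leq k}\cdot\receiveSet^{\leq k}$ and $(\globalstate{l_0},B_0)\ptransitionKE{e_1}{k}\cdots\ptransitionKE{e_n}{k}(\globalstate{l'},B')$. I would prove it by induction on $n$, showing that $\ptransitionKE{e}{k}$ accepts exactly the sequences of \kE{k}s whose concatenation satisfies peer-to-peer causal delivery while keeping every matched pair inside a single \kE{k}. The decisive simplification over the mailbox setting is that peer-to-peer causal delivery constrains only sends sharing \emph{both} sender and receiver, i.e.\ it is FIFO per directed channel: a message causally following an unmatched send to $p$ is harmless unless it is itself a $q$-to-$p$ send for the very process $q$ that is forbidden, in which case FIFO on the channel $q\to p$ forces it to remain unmatched too. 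Consequently, no conflict graph, no extended edges $\cgedgeD{XY}$, and no summary nodes are required: since condition~3 of \kSity{k} keeps each matched pair within one \kE{k}, the intra-\kE{k} executability check already enforces FIFO for matched messages, and the only information that must survive across \kE{k}s is the flat forbidden-sender set $B(p)$.

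Finally, since $\procSet$ and $\paylodSet$ are finite there are only finitely many configurations $(\globalstate{l},B)$ and finitely many possible \kE{k}s over the induced alphabet; hence $\ptransitionKE{e}{k}$ is a relation on a finite state space, the set of \kSous{k} MSCs is regular, and decidability of reachability follows as in Theorem~\ref{thm:reachability-is-decidable}. The main obstacle I anticipate is not this finite-state packaging but the correctness lemma: one must verify carefully that incrementally maintaining $B$ faithfully captures the FIFO-based causal-delivery condition, and in particular that discarding the causal-closure machinery of the mailbox proof loses no violation. I expect this to be a genuine but comparatively mild argument, precisely because the absence of cross-sender dependencies eliminates the subtlety that forced the extended-edge construction in the mailbox case.
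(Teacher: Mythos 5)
Your proposal matches the paper's own argument: the paper likewise replaces the conflict-graph machinery by the flat forbidden-sender map $B:\procSet\to 2^{\procSet}$ updated by $\ptransitionKE{e}{k}$, proves the peer-to-peer analogue of Lemma~\ref{lem:causal} by induction on the number of \kE{k}s, and concludes by finiteness of the abstract configuration space exactly as in Theorem~\ref{thm:reachability-is-decidable}. Your observation that intra-\kE{k} executability already enforces per-channel FIFO for matched messages, so only unmatched sends need to be remembered across \kE{k}s, is precisely the simplification the paper exploits.
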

\paragraph{\bf Decidability of \kSity{k} for peer-to-peer systems.}

As in mailbox system, the detection of a borderline execution determines whether a system is \kSable{k}.
%

The relation transition $\ptransitionCD{e}{k}$ allowing to obtain feasible executions can be found in Fig.~\ref{fig:transitionfeasp2p} in Appendix~\ref{sec:additional_p2p}.
%
%
Differently from the mailbox setting, we  need to save not only the recipient $\finddest$ but also the sender of the delayed message (information stored in variable $\findexp$). The transition rule then  checks that there is no message that is violating causal delivery, \ie  there is no message sent by $\findexp$ to $\finddest$ after the deviation.
%
%
Finally the recognition of bad execution, works in the same way as for mailbox systems.
The characterisation of a bad execution and the definition of $\ptransitionBV{e}{k}$ are, therefore, the same.

As for mailbox systems, we can, thus, conclude that for a given $k$,  \kSity{k} is decidable.

\begin{theorem}\label{thm:ksity-decidability-p2p}
The \kSity{k} of a system $\system$ is decidable for  $k\geq 1 $. 
\end{theorem}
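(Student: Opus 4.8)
The plan is to replay, \emph{mutatis mutandis}, the decidability argument for the mailbox case (Theorem~\ref{thm:ksity-decidability}), exploiting the fact that the three ingredients it relies on all admit peer-to-peer counterparts. Concretely, I would reduce non-\kSity{k} to the existence of a \kSable{k} execution of the instrumented system $\system'$ that is simultaneously \emph{feasible} and \emph{bad}, then recognise feasibility by the finite-state relation $\ptransitionCD{e}{k}$ and badness by $\ptransitionBV{e}{k}$, and finally decide the question by testing emptiness of the synchronous product of the two resulting non-deterministic finite automata.

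First I would establish the peer-to-peer analogue of the borderline-violation lemma (Lemma~\ref{lem:bad-is-bad}). This transfers essentially verbatim: the argument only uses that a non-\kSable{k} system admits a minimal non-\kSable{k} prefix ending in a reception, and that the deviation $\deviate{\,\cdot\,}$ through the auxiliary process $\pi$ both preserves \kSity{k} and yields a feasible execution — none of which depends on the communication medium. The only adjustment is in the bookkeeping of $\system'$: when a message is deviated to $\pi$ we now record \emph{both} its intended sender $\findexp$ and its intended recipient $\finddest$, since in the peer-to-peer setting causal delivery is governed by the per-channel FIFO discipline rather than by the per-recipient mailbox.

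The heart of the proof, and the step I expect to be the main obstacle, is the correctness of the feasibility relation $\ptransitionCD{e}{k}$ (Fig.~\ref{fig:transitionfeasp2p}), i.e. the peer-to-peer analogue of Lemma~\ref{lem:feasible-is-regular}. Here the situation is actually simpler than for mailboxes: because two messages addressed to the same process by \emph{different} senders may be received in either order, a deviated message from $\findexp$ to $\finddest$ can only be overtaken by another message travelling on the same channel, namely one sent by $\findexp$ to $\finddest$. Hence feasibility fails precisely when some matched send from $\findexp$ to $\finddest$ occurs causally after the deviated send, and I no longer need summary nodes, extended edges $\cgedgeD{XY}$, or any propagation of causal dependencies: it suffices to track the pair $(\findexp,\finddest)$ and forbid a subsequent matched send on that channel. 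The main work is to prove that this per-channel test faithfully characterises feasibility across a decomposition into \kE{k}s, reusing the graph-theoretic characterisation of the deviation (a vertex $v$ with $\vstart\cgedgeD{SS}v\cgedge{RR}\vstop$) specialised to the weaker peer-to-peer causal-delivery notion.

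Finally I would observe that the recognition of bad executions carries over unchanged. The characterisation of a bad execution (Lemma~\ref{lem:bad-characterization}) is stated purely in terms of the conflict graph — existence of an RS edge on a cyclic path, or an SCC of size $\ge k+2$ — and both the conflict graph and the \kSity{k} characterisation (Theorem~\ref{thm:k-sity-scc}) are independent of whether communication is mailbox or peer-to-peer; consequently $\ptransitionBV{e}{k}$ coincides with $\transitionBV{e}{k}$ and its correctness (the analogue of Lemma~\ref{lem:transitionBV}) is proved exactly as before. Putting the pieces together as in Theorem~\ref{thm:ksity-decidability}: $\system$ is not \kSable{k} iff there is a sequence $e' = e_1'\cdots e_n'\cdot\send{\pi}{q}{\amessage}\cdot\rec{\pi}{q}{\amessage}$ accepted by the $\ptransitionCD{e}{k}$-run (certifying feasibility) and by the $\ptransitionBV{e}{k}$-run ending with $\sawRS=\cttrue$ or $\cont=k+2$ (certifying badness). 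Since both relations operate over finite sets of abstract configurations, their product is a finite automaton whose emptiness is decidable, which yields the decidability of \kSity{k}.
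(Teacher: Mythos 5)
Your proposal is correct and follows essentially the same route as the paper: reduce non-\kSity{k} to the existence of a feasible and bad \kSable{k} execution of the instrumented system, recognise feasibility with the simplified per-channel relation $\ptransitionCD{e}{k}$ that tracks the pair $(\findexp,\finddest)$ and forbids a later matched send on that channel (no summary nodes or extended edges needed), keep the bad-execution recognition $\ptransitionBV{e}{k}$ identical to the mailbox case, and conclude by emptiness of the finite product automaton. This matches the paper's Lemmata~\ref{lem:bad-is-bad-p}, \ref{lem:feasible-is-regular-p} and \ref{lem:transitionBV-p} and its proof of Theorem~\ref{thm:ksity-decidability-p2p}.
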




\section{Concluding remarks and related works}
\label{sec:conc}

In this paper we have studied  \kSity{k} for mailbox and peer-to-peer systems. 
We have corrected the reachability and decidability  proofs 
  given in~\cite{DBLP:conf/cav/BouajjaniEJQ18}. The flaws in \cite{DBLP:conf/cav/BouajjaniEJQ18}
 concern fundamental points and  we had to propose a considerably different approach. 
The extended edges of the conflict graph, and the
graph-theoretic characterisation of causal delivery as well as summary nodes, have no equivalent in
\cite{DBLP:conf/cav/BouajjaniEJQ18}.
Transition relations $\transitionCD{e}{k}$ and $\transitionBV{e}{k}$ building on the graph-theoretic characterisations of causal delivery and \kSity{k},  depart considerably from  the proposal in \cite{DBLP:conf/cav/BouajjaniEJQ18}.

We conclude by commenting on some other related works. 
The idea of ``communication layers''  
is present in the early works of
Elrad and Francez~\cite{ElradF82} or Chou and Gafni~\cite{ChouG88}.
More recently, Chaouch-Saad~\emph{et al.}~\cite{Chaouch-SaadCM09}
verified some consensus algorithms using the Heard-Of Model
that proceeds by ``communication-closed rounds''.
The concept that an asynchronous system may have 
an ``equivalent'' synchronous
counterpart has also been widely studied.
Lipton's reduction~\cite{Lipton75} res\-chedules an execution
so as to move the receive actions as close as possible from their corresponding
send. Reduction recently received
an increasing interest for verification purpose, e.g.
by Kragl~\emph{et al.}~\cite{KraglQH18}, or
Gleissenthal~\emph{et al.}~\cite{GleissenthallKB19}.

Existentially bounded communication systems have been studied
by Ge\-nest~\emph{et al.}~\cite{DBLP:journals/fuin/GenestKM07,DBLP:conf/lata/Muscholl10}: a system is existentially $k$-bounded if any execution can be rescheduled
in order to become $k$-bounded. This approach targets a broader class of systems than
$k$-synchronizability, because it does not require that the execution
can be chopped in communication-closed rounds. In the perspective of the current
work, an interesting result is the decidability of
existential $k$-boundedness for deadlock-free  systems of communicating machines
with peer-to-peer channels. Despite the more general definition,
these older results are incomparable with the present ones,
that deal with systems communicating with mailboxes, and
not peer-to-peer channels. 



Basu and Bultan studied a notion they also called synchronizability,
but it differs from the notion studied in the present work; synchronizability
and \kSity{k} define incomparable
classes of communicating systems. The proofs of the decidability
of synchronizability~\cite{BasuBO12,BasuB16} were shown to have
flaws by Finkel and Lozes~\cite{DBLP:conf/icalp/FinkelL17}. A question
left open in their paper is whether synchroni\-zability  is decidable
for mailbox communications, as originally claimed by Basu and Bultan. 
Akroun and Sala\"un defined also a property they called
stability~\cite{DBLP:journals/fmsd/AkrounS18} and that shares many similarities
with the synchronizability notion in \cite{BasuB16}.

Context-bounded model-checking is yet another approach for the automatic
verification of concurrent systems. La Torre~\emph{et al.} studied systems of
communicating machines extended with a calling stack, and showed that
under some conditions on the interplay between stack
actions and communications, context-bounded reachability
was decidable~\cite{DBLP:conf/tacas/TorreMP08}.
A context-switch is found in an execution each time two consecutive actions are performed by a different participant. Thus,  while \kSity{k} limits the number of consecutive sendings, bounded context-switch analysis limits the number 
of times two consecutive actions are performed by two different processes.

As for future work, it would be interesting
to explore how both context-boundedness and communication-closed rounds
could be composed. Moreover refinements of the definition of \kSity{k}
 can also be considered. For instance, we conjecture that the current development can be greatly simplified if we forbid linearisation that do not correspond to actual executions.



\newpage

\bibliographystyle{splncs04}
 \bibliography{biblio}

\newpage

\newpage
\appendix


\section{Comparison with \cite{DBLP:conf/cav/BouajjaniEJQ18} \\(Examples and additional material)}\label{sec:additional_comparison}

Let $po$ and $src$ be the partial orders on the set of actions obtained respectively from  $\prec_{po}$ and $\prec_{src}$ by assuming that  if $i \prec_{po} j$ then $\lambda(i) < \lambda(j) \in po$ and if $i \prec_{src} j$ then $\lambda(i) < \lambda(j) \in src$.


\begin{example}\label{ex:counter-bad-def}[Problems with the development in \cite{DBLP:conf/cav/BouajjaniEJQ18}]
Fig.~\ref{fig:problematic-exe}b depicts the MSC associated with a feasible execution feasible that does not contain label $RS$. The monitor in \cite{DBLP:conf/cav/BouajjaniEJQ18} considers the reception of $\amessage_2$ followed by the send of $\amessage_3$. A label $RS$ is thus wrongly detected. 

Fig.~\ref{fig:problematic-exe}c, instead, depicts the MSC associated with an execution feasible but bad. With the monitor  in \cite{DBLP:conf/cav/BouajjaniEJQ18}, the action seen after the send of $\amessage_3$ is the send of $\amessage_4$ and so the existing label $RS$ is ignored at the profit of a non existing  label $SS$.  
\end{example}

\begin{figure}[t]

\begin{center}

\begin{tikzpicture}[scale = 0.8,every node/.style={scale=0.8}]

\begin{scope}
	\coordinate(Aa) at (0,0) ;
	\coordinate (Ab) at (0,-3) ;
	\coordinate (Ba) at (1,0) ;
	\coordinate (Bb) at (1,-3) ;
	\coordinate (Ca) at (2,0) ;
	\coordinate (Cb) at (2,-3) ;
	\coordinate (Pa) at (3,0) ;
	\coordinate (Pb) at (3,-3) ;
	\coordinate (Da) at (4,0) ;
	\coordinate (Db) at (4,-3) ;
	\draw (Aa) node[above]{$p$} ;
	\draw (Ba) node[above]{$q$} ;
	\draw (Ca) node[above]{$r$} ;
	\draw (Pa) node[above]{$s$} ;
	\draw (Da) node[above]{$\pi$} ;
	\draw (Aa) -- (Ab) ;
	\draw (Ba) -- (Bb) ;
	\draw (Ca) -- (Cb) ;
	\draw (Pa) -- (Pb) ;
	\draw (Da) -- (Db) ;
	\coordinate (sa) at (1,-0.5); 
	\coordinate (ra) at (4,-0.5); 
	\draw[>=latex,->] (sa) to node[above, sloped, pos = 0.17]{$(r,\amessage_1)$}(ra);
	
	\coordinate (sb) at (1, -1); 
	\coordinate (rb) at (3, -1); 
	\draw[>=latex,->] (sb) -- (rb); 
	\draw (sb) node[above right]{$\amessage_2$} ;
	
	\coordinate (sc) at (0, -1.5); 
	\coordinate (rc) at (3, -1.5); 
	\draw[>=latex,->] (sc) -- (rc); 
	\draw (sc) node[above right]{$\amessage_3$};

	\coordinate (sd) at (0, -2); 
	\coordinate (rd) at (2, -2); 
	\draw[>=latex,->] (sd) -- (rd); 
	\draw (sd) node[above right]{$\amessage_4$};
	
	\coordinate (sd) at (4, -2.5); 
	\coordinate (rd) at (2, -2.5); 
	\draw[>=latex,->] (sd) -- (rd); 
	\draw (sd) node[above left]{$\amessage_1$};
	
	\draw (2,-4) node[above]{\textbf{(a)}};
\end{scope}

\begin{scope}[shift = {(6,0)}]
	\coordinate(Aa) at (0,0) ;
	\coordinate (Ab) at (0,-3) ;
	\coordinate (Ba) at (1,0) ;
	\coordinate (Bb) at (1,-3) ;
	\coordinate (Ca) at (2,0) ;
	\coordinate (Cb) at (2,-3) ;
	\coordinate (Pa) at (3,0) ;
	\coordinate (Pb) at (3,-3) ;
	\draw (Aa) node[above]{$p$} ;
	\draw (Ba) node[above]{$q$} ;
	\draw (Ca) node[above]{$r$} ;
	\draw (Pa) node[above]{$\pi$} ;
	\draw (Aa) -- (Ab) ;
	\draw (Ba) -- (Bb) ;
	\draw (Ca) -- (Cb) ;
	\draw (Pa) -- (Pb) ;
	\coordinate (sa) at (0,-0.5); 
	\coordinate (ra) at (3,-0.5); 
	\draw[>=latex,->] (sa) to node[above, sloped, pos = 0.17]{$(r,\amessage_1)$} (ra); 
	
	\coordinate (sb) at (0, -1); 
	\coordinate (rb) at (1, -1); 
	\draw[>=latex,->] (sb) -- (rb); 
	\draw (sb) node[above right]{$\amessage_2$};
	
	\coordinate (sc) at (2, -1.5); 
	\coordinate (rc) at (1, -1.5); 
	\draw[>=latex,->] (sc) -- (rc); 
	\draw (sc) node[above left]{$\amessage_3$};

	\coordinate (sd) at (3, -2); 
	\coordinate (rd) at (2, -2); 
	\draw[>=latex,->] (sd) -- (rd); 
	\draw (sd) node[above left]{$\amessage_1$};
	
	\draw (1.5,-4) node[above]{\textbf{(b)}};
\end{scope}

\begin{scope}[shift = {(11,0)}]
	\coordinate(Aa) at (0,0) ;
	\coordinate (Ab) at (0,-3) ;
	\coordinate (Ba) at (1,0) ;
	\coordinate (Bb) at (1,-3) ;
	\coordinate (Ca) at (2,0) ;
	\coordinate (Cb) at (2,-3) ;
	\coordinate (Pa) at (3,0) ;
	\coordinate (Pb) at (3,-3) ;
	\draw (Aa) node[above]{$p$} ;
	\draw (Ba) node[above]{$q$} ;
	\draw (Ca) node[above]{$r$} ;
	\draw (Pa) node[above]{$\pi$} ;
	\draw (Aa) -- (Ab) ;
	\draw (Ba) -- (Bb) ;
	\draw (Ca) -- (Cb) ;
	\draw (Pa) -- (Pb) ;
	\coordinate (sa) at (0,-0.5); 
	\coordinate (ra) at (3,-0.5); 
	\draw[>=latex,->] (sa) to node[above, sloped, pos = 0.17]{$(r,\amessage_1)$} (ra); 
	
	\coordinate (sb) at (0, -1); 
	\coordinate (rb) at (1, -1); 
	\draw[>=latex,->] (sb) -- (rb); 
	\draw (sb) node[above right]{$\amessage_2$};
	
	\coordinate (sc) at (1, -1.5); 
	\coordinate (rc) at (0, -1.5); 
	\draw[>=latex,->] (sc) to node[above,pos=0.8]{$\amessage_3$} (0.65,-1.5); 
	\draw[>=latex, ->, dashed] (0.65,-1.5)--(rc); 
	
	\coordinate (sd) at (2, -2); 
	\coordinate (rd) at (1, -2); 
	\draw[>=latex,->] (sd) -- (rd); 
	\draw (sd) node[above left]{$\amessage_4$};

	\coordinate (se) at (3, -2.5); 
	\coordinate (re) at (2, -2.5); 
	\draw[>=latex,->] (se) -- (re); 
	\draw (se) node[above left]{$\amessage_1$};
	
		\draw (1.5,-4) node[above]{\textbf{(c)}};

\end{scope}
\end{tikzpicture}
\caption{MSCs of problematic executions}\label{fig:problematic-exe}
\end{center}

\end{figure}

\paragraph{\bf Other differences with \cite{DBLP:conf/cav/BouajjaniEJQ18}. }
Our definition of causal delivery slightly differs from the one in \cite{DBLP:conf/cav/BouajjaniEJQ18}. Indeed our Property \ref{prop:cau}  does not hold for the
definition in \cite{DBLP:conf/cav/BouajjaniEJQ18}.  The two  examples  below  stress where the definition of causal delivery in \cite{DBLP:conf/cav/BouajjaniEJQ18} fails.
Nonetheless here we have merely fixed a typo, as in the subsequent development in  \cite{DBLP:conf/cav/BouajjaniEJQ18}, causal delivery is used as intended by our definition.

\begin{example}\label{ex:counter-causal}
Let $e_1$ be a sequence of actions such that its $msc(e_1)$ is the one depicted in Fig.~\ref{fig:counter-exmp-causal}a. According to the definition in \cite{DBLP:conf/cav/BouajjaniEJQ18}, causal delivery is satisfied when pairs  of message exchanges with 
identical receivers have  sends that are causally related. Thus we have:
\begin{itemize}
\item $\send{p}{q}{\amessage_1} < \send{p}{r}{\amessage_2} < \rec{p}{r}{\amessage_2} < \send{r}{q}{\amessage_3} \in po \cup src$ and $\rec{r}{q}{\amessage_3} < \rec{p}{q}{\amessage_1} \notin po$ 
\item $ \send{r}{s}{\amessage_4} < \send{r}{t}{\amessage_5} < \rec{r}{t}{\amessage_5} < \send{t}{s}{\amessage} \in po \cup src$ and $\rec{t}{s}{\amessage_6} < \rec{r}{s}{\amessage_4} \notin po$
\end{itemize}   
 This entails that $msc(e_1)$ satisfies causal delivery. However, there is no  execution corresponding to this MSC as it is impossible to find a linearisation of  $msc(e_1)$. In our Definition \ref{def:causal-delivery}, instead, we add the requirement that a linearisation of $msc(e_1)$ must exist. Thus $msc(e_1)$ does not satisfy causal delivery.
\end{example}

\begin{example}\label{ex:counter-causal2}
Let $e_2$ be a sequence of actions and $msc(e_2)$ as depicted in Fig.~\ref{fig:counter-exmp-causal}b. As in the previous example, according to the definition in \cite{DBLP:conf/cav/BouajjaniEJQ18}, $msc(e_2)$ satisfies causal delivery, indeed we  check only messages with identical receiver and whose sends are causally dependent: 
\begin{itemize}
\item $\send{p}{q}{\amessage_3} < \send{p}{r}{\amessage_4} < \rec{p}{r}{\amessage_4} < \send{r}{q}{\amessage_5} \in po \cup src$ and $\rec{r}{q}{\amessage_5} < \rec{p}{q}{\amessage_3} \notin po$
\end{itemize}
However, it ignores the dependency between $\amessage_2$ and $\amessage_3$. Indeed, the mailbox communication implies that if $\rec{t}{q}{\amessage_2} < \rec{p}{q}{\amessage_3}$ with the same receiver then $\send{t}{q}{\amessage_2} < \send{p}{q}{\amessage_3}$. With this additional constraint, it is impossible to find a linearisation. We can deduce that the definition of causal delivery in \cite{DBLP:conf/cav/BouajjaniEJQ18} is not complete  and should also consider the order imposed by the mailbox communication. In this case, a causally dependency would have been detected between $\amessage_1$ and $\amessage_6$ and we would have seen that the receptions do not happen in the correct order. 
\end{example}
\begin{figure}[t]
\begin{center}
\begin{tikzpicture}[scale = 0.7]
\begin{scope}
\draw (2,-4.8) node[above] {$\textbf{(a)}$};
\draw (0,0.2) node[above]{$p$} ;
\draw (1,0.2) node[above]{$q$}; 
\draw (2,0.2) node[above]{$r$};
\draw (3,0.2) node[above]{$s$};
\draw (4,0.2) node[above]{$t$};
\draw (0,0.2) edge (0,-3.8); 
\draw (1,0.2) edge (1,-3.8);
\draw (2,0.2) edge (2,-3.8); 
\draw (3,0.2) edge (3,-3.8); 
\draw (4,0.2) edge (4,-3.8);
\draw[>=latex, ->] (0,-0.5) to node [pos = 0.7, above, scale = 0.9] {$\amessage_1$} (1,-0.5);
\draw[>=latex, ->] (0,-1) to node [above, sloped, pos = 0.2, scale = 0.9] {$\amessage_2$} (2,-1);
\draw[>=latex, ->] (2,-1.5) to node [above, scale = 0.9, pos=0.3] {$\amessage_3$} (1,-1.5);
\draw[>=latex, ->] (2,-2) to node [above, scale = 0.9] {$\amessage_4$} (3,-2);
\draw[>=latex, ->] (2,-2.5) to node [above, scale = 0.9,pos=0.7] {$\amessage_5$} (4,-2.5);
\draw[>=latex, ->] (4,-3) to node [above, scale = 0.9] {$\amessage_6$} (3,-3);
\draw[>=latex, ->] (4,-3.5) to node [below, pos = 0.1, scale = 0.9] {$\amessage_7$} (0,0);
\end{scope} 

\begin{scope}[xshift = 6.5cm]
\draw (2,-4.8) node[above] {$\textbf{(b)}$};
\draw (0,0.2) node[above]{$p$} ;
\draw (1,0.2) node[above]{$q$}; 
\draw (2,0.2) node[above]{$r$};
\draw (3,0.2) node[above]{$s$};
\draw (4,0.2) node[above]{$t$};
\draw (0,0.2) edge (0,-3.8); 
\draw (1,0.2) edge (1,-3.8);
\draw (2,0.2) edge (2,-3.8); 
\draw (3,0.2) edge (3,-3.8); 
\draw (4,0.2) edge (4,-3.8);
\draw[>=latex, ->] (4,-0.5) to node [pos = 0, right, scale = 0.9] {$\amessage_1$} (3,-3.5);
\draw[>=latex, ->] (4,-1) to node [above, sloped, pos = 0.5, scale = 0.9] {$\amessage_2$} (1,-1);
\draw[>=latex, ->] (0,-1.5) to node [above, scale = 0.9, pos=0.5] {$\amessage_3$} (1,-1.5);
\draw[>=latex, ->] (0,-2) to node [above, scale = 0.9, pos=0.25] {$\amessage_4$} (2,-2);
\draw[>=latex, ->] (2,-2.5) to node [above, scale = 0.9,pos=0.5] {$\amessage_5$} (1,-2.5);
\draw[>=latex, ->] (2,-3) to node [above, scale = 0.9] {$\amessage_6$} (3,-3);
\end{scope} 

\end{tikzpicture}
\end{center}
\caption{MSCs violating
causal delivery}
\label{fig:counter-exmp-causal}
\end{figure}

%
%

\section{Additional material}
\label{sec:additional}

\begin{definition}[Instrumented system $\system'$]\label{def:instrumented-system}
  Let $\system = ((L_p, \delta_p, l_p^0) \mid p \in \procSet)$ be a system of communicating machines. The instrumented system $\system'$ associated to $\system$
  is defined such that $\system'= \left( (L_p, \delta'_p, l_p^0) \mid p \in \procSet \cup \{\pi\} \right ) $ where
  for all $p \in \procSet$: 
  \begin{align*}
  \delta'_p =  \delta_p & \cup \{l_1 \xrightarrow{\send{p}{\pi}{(q,\amessage)}} l_2 \mid l_1 \xrightarrow{\send{p}{q}{\amessage}} l_2 \in \delta_p \} \\
  & \cup \{l_1 \xrightarrow{\rec{\pi}{p}{\amessage}} l_2 \mid l_1 \xrightarrow{\rec{q}{p}{\amessage}} l_2 \in \delta_p \}
  \end{align*} 
Process $\pi$ is the communicating automaton
  $(L_\pi, l_{\pi}^0, \delta_\pi )$ 
  where
  \begin{itemize}
  \item
    $L_{\pi} = \{l_\pi^0, l_f\} \cup \{l_{q,\amessage} \mid \amessage \in \paylodSet, q\in\procSet{} \} $, 
    and
  \item 
$\delta_{\pi} = \{l_\pi^0 \xrightarrow{\rec{p}{\pi}{(q,\amessage)}} l_{q,\amessage} \mid \send{p}{q}{\amessage} \in \sendSet \}$ \\ 
\hspace*{1cm}$   \cup \{l_{q,\amessage} \xrightarrow{\send{\pi}{q}{\amessage}} l_f \mid \rec{p}{q}{\amessage} \in \receiveSet\}
 $
  \end{itemize}
\end{definition}

\section{Proofs of Lemmata and Theorems}

\subsection*{Proof of Theorem~\ref{thm:k-sity-scc}}

\subsection*{Proof of Theorem~\ref{thm:causal-delivery-graphically}}

\subsection*{Proof of Lemma~\ref{lem:causal}}

\subsection*{Proof of Theorem \ref{thm:reachability-is-decidable}}
\begin{quote}
{\bf Theorem 3}
  Let $\system$ be a \kSable{k} system and $\globalstate{l}$ a global
  control state of $\system$. The problem whether there
  exists $e\in\asEx(\system)$ and $\B$ such that
  $(\globalstate{l_0},\B_0)\xRightarrow{e}(\globalstate{l},\B)$
  is decidable.
\end{quote}

\begin{proof}
There are only finitely many abstract configurations of the form
  $(\globalstate{l},B)$ with $\globalstate{l}$ a tuple of control states and
  $B:\procSet \to (2^{\procSet}\times 2^{\procSet})$.
  Therefore $\transitionKE{e}{k}$ is a relation
  on a finite set, and the set $\sTr_k(\system)$ of \kSous{k} MSCs of a
  system $\system$ forms a regular language. It follows that,  it is decidable
  whether a given abstract configuration of the form $(\globalstate{l}, B)$ is reachable from the initial configuration following a 
  \kSable{k} execution being a linearisation of an MSC contained in $sTr_k(\system)$. \qed
  
\end{proof}

\subsection*{Proof of Lemma~\ref{lem:bad-is-bad}}

\subsection*{Proof of Lemma~\ref{lem:graph-carac-feas}}

\subsection*{Proof of Lemma~\ref{lem:feasible-is-regular}}

\subsection*{Proof of Lemma~\ref{lem:bad-characterization}}

\subsection*{Proof of Lemma~\ref{lem:transitionBV}}

\subsection*{Proof of Theorem~\ref{thm:ksity-decidability}}

\section{Additional material for peer-to-peer systems}\label{sec:additional_p2p}


Note that all those notions that have not been redefined have the same definition as in the mailbox setting. 

 \begin{definition}[Peer-to-peer configuration]
Let $ \system = \left( (L_p, \delta_p, l^0_p) \mid p \in \procSet \right)$, a configuration is a pair $(\globalstate{l}, \B)$ where $\globalstate{l} = (l_p)_{p\in \procSet}  \in  \Pi_{p \in \procSet} L_p$ is a global control state of $\system$ (a local control state for each automaton) and $\B = (b_{pq})_{p,q\in\procSet} \in (\paylodSet^*)^\procSet$ is a vector of buffers, each $b_{pq}$ being a word over $\paylodSet$.   
\end{definition}

\begin{definition}[Peer-to-peer semantics]
\begin{center}
\begin{prooftree}
\AxiomC{\small$ \Transition{l_p}
               {\send{p}{q}{\amessage}}
               {l_p'}
               {p}
    \quad
    b_{pq}' = b_{pq} \cdot \amessage$}
 \LeftLabel{[SEND]}
\UnaryInfC{\small$\Transition{(\globalstate{l}, \B)}
               {\send{p}{q}{\amessage}}
	       {(\globalstate{l}\sub{l_p'}{l_p},\B\sub{b_{pq}'}{b_{pq}})}
               {}$}
\end{prooftree}
\begin{prooftree}
\AxiomC{\small$ \Transition{l_q}
               {\rec{p}{q}{\amessage}}
               {l_q'}
               {q}
    \quad
    b_{pq} = \amessage \cdot b_{pq}'$}
        \LeftLabel{[RECEIVE]}
\UnaryInfC{\small$\Transition{(\globalstate{l}, \B)}
               {\rec{p}{q}{\amessage}}
	       {(\globalstate{l}\sub{l_q'}{l_q},\B\sub{b_{pq}'}{b_{pq}})}
               {}$}
\end{prooftree}
\end{center}
\end{definition}

\begin{definition}[Peer-to-peer causal delivery]
Let $msc = (Ev , \lambda, \prec)$ be an MSC. 
  We say that $msc$ satisfies causal delivery if
  there is a linearisation $e~=~a_1\dots a_n$ such that
  for any two send events $i \prec j$ such that $a_i~=~\send{p}{q}{\amessage}$ and $a_j = \send{p'}{q'}{\amessage'}$, $p=p'$ and $q = q'$,  either $a_j$ is unmatched, or there are $i',j'$ such that $a_i\matches a_{i'}$, $a_j\matches a_{j'}$, and $i' \prec j'$.

\end{definition}

\begin{theorem}
Let $e$ be a sequence of actions such that $msc(e)$ satisfies causal delivery. Then $msc(e)$ is \kSous{k} iff every SCC in its conflict graph is of size at most $k$ and if no RS edge occurs on any cycle path. 
\end{theorem}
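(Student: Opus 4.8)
The plan is to observe that the proof of Theorem~\ref{thm:k-sity-scc} for mailbox systems is entirely architecture-agnostic, and then to replay it \emph{verbatim}. Indeed, the notions of MSC, conflict graph, \kE{k}, and \kSous{k} MSC are literally the same in the peer-to-peer setting (as noted in the body of Section~\ref{section:p2p}); the only place where the communication architecture enters the definition of \kSous{k} is through causal delivery, which here is furnished as a hypothesis. So I would not redo any causality analysis: I would simply check that each step of the argument of Theorem~\ref{thm:k-sity-scc} uses only the conflict graph and the definition of \kSous{k}, never the mailbox ordering constraints.

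For the forward direction, I would take a linearisation $e=e_1\cdots e_n$ into \kE{k}s witnessing \kSous{k}-ness, assign to each vertex $v$ of $\cgraph{e}$ the unique index $\iota(v)$ with $v\subseteq e_{\iota(v)}$, and note that any edge $v\to v'$ forces some action of $v$ to precede some action of $v'$, hence $\iota(v)\leq\iota(v')$. Consequently every SCC collapses to a single index and therefore lies inside one \kE{k}, which bounds its size by $k$; and since within a \kE{k} all sends precede all receives, an $\cgedge{RS}$ edge forces $\iota(v)<\iota(v')$ and so cannot lie on a cycle. None of this uses the buffering discipline.

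For the converse, given a linearisation $e$ whose conflict graph has no SCC of size greater than $k$ and no $\cgedge{RS}$ edge on a cyclic path, I would enumerate the maximal SCCs $V_1,\ldots,V_n$ in topological order and form $e'=e_1\cdots e_n$, where each $e_i$ lists first the send actions of $V_i$ (in their order in $e$) and then the receive actions (again in their order in $e$). As in the mailbox proof, this reordering can only turn an $\cgedge{RS}$ edge into an $\cgedge{SR}$ edge between two vertices of the \emph{same} SCC, which the hypothesis forbids; hence $\cgraph{e'}=\cgraph{e}$ and $msc(e')=msc(e)$. Each $V_i$ has size at most $k$, so every $e_i$ is a \kE{k}, and matched pairs stay together inside a single $e_i$ since a matched exchange is one vertex. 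Conditions 1 and 3 of \kSous{k}-ness hold by construction, and condition 2 --- peer-to-peer causal delivery --- holds because $msc(e')=msc(e)$ and $msc(e)$ satisfies it by hypothesis.

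The main (and essentially only) thing to verify carefully is the claim that, in the peer-to-peer setting too, equality of conflict graphs entails equality of MSCs, so that $msc(e')=msc(e)$; this is the single point where one might fear a dependence on the architecture. Since the conflict graph, $\prec_{po}$, and $\prec_{src}$ are defined identically for mailbox and peer-to-peer systems, the justification used in Theorem~\ref{thm:k-sity-scc} transfers without change, and no genuine obstacle arises. In short, this theorem is not a new result but a re-reading of Theorem~\ref{thm:k-sity-scc} confined to its communication-independent core.
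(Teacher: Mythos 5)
Your proposal is correct and matches the paper exactly: the paper's own proof of this theorem is the single line ``Analogous to the proof of Theorem~\ref{thm:k-sity-scc}'', and your argument is precisely that analogy spelled out, correctly identifying that causal delivery is the only architecture-dependent ingredient and that it is supplied here as a hypothesis.
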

\begin{proof}
  Analogous to the proof of Theorem~\ref{thm:k-sity-scc}. \qed
 \end{proof}

\begin{figure}[t]
\begin{prooftree}
\AxiomC{
	\stackanchor
		{\stackanchor
			{$ e = s_1 \cdots s_m\cdot r_1\cdots r_{m'} \qquad s_1 \cdots s_m  \in \sendSet^* \qquad r_1 \cdots r_{m'} \in \receiveSet^* \qquad 0 \leq m' \leq m \leq k$ 
			}
			{$(\globalstate{l}, \B_0) \xRightarrow{e} (\globalstate{l'} , \B)$ for some $ \globalstate{l'}$ and $\B$
			}
		}
		{\stackanchor
			{\stackanchor{for all $q \in \procSet \quad B_{i+1}(q) = B_i(q) \cup \{ p \mid s_i = \send{p}{q}{\amessage} \mbox{ \& } s_i \mbox{ is unmatched}	\}$ }		
			{
			}
			}
			{for all send action $s_i \in e$, $s_i \matches r_j \implies  \procofaction{}{s_i} \notin B_i(\procofaction{}{r_j})$	
			}
		}
	}
\UnaryInfC{$(l, B) \ptransitionKE{e}{k} (l', B_{m+1}) $}
\end{prooftree}
\caption{Definition of transition $\ptransitionKE{e}{k}$ in a peer-to-peer system}\label{fig:transitionCDp2p}
 \end{figure}

\begin{lemma}
 An MSC $msc$ is \kSous{k} iff
  there is a linearisation  
  $e~=~e_1\cdots e_n$ such that
  $(\globalstate{l_0},B_0)\ptransitionKE{e_1}{k}\cdots\ptransitionKE{e_n}{k}(\globalstate{l'},B')$ for some global state $\globalstate{l'}$ and
  some $B':\procSet\to(2^{\procSet}\times 2^{\procSet})$.
%
\end{lemma}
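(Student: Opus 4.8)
The plan is to adapt the proof of Lemma~\ref{lem:causal} to the peer-to-peer setting, exploiting that peer-to-peer causal delivery is local to each directed channel: a violation can only involve two sends issued by the \emph{same} process $p$ to the \emph{same} process $q$, and these are therefore totally ordered by program order on $p$. Hence no extended edges, no summary nodes, and no transitive causal reasoning are needed, and the relation $\ptransitionKE{e}{k}$ records in $B(q)$ nothing more than the set of senders that already have an unmatched message directed to $q$. The invariant I would first establish is that after $(\globalstate{l_0},B_0)\ptransitionKE{e_1}{k}\cdots\ptransitionKE{e_i}{k}(\globalstate{l},B)$, the set $B(q)$ is exactly the set of senders of unmatched message exchanges with receiver $q$ occurring in $e_1\cdots e_i$; this is immediate from the update rule $B_{i+1}(q)=B_i(q)\cup\{p\mid s_i=\send{p}{q}{\amessage}\ \&\ s_i \text{ unmatched}\}$ and the fact that these sets only grow across \kE{k}s. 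The per-send refinement of the rule further guarantees that, at the moment a matched exchange is checked, $B$ already contains every sender of an unmatched message to the same receiver whose send strictly precedes it.

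For the $(\Rightarrow)$ direction, assume $msc$ is \kSous{k} and fix a witnessing linearisation $e=e_1\cdots e_n$ with each $e_i\in\sendSet^{\le k}\cdot\receiveSet^{\le k}$ and every matched pair confined to a single $e_i$. I would show by induction on $i$ that $\ptransitionKE{e_i}{k}$ is enabled. The semantic premise $(\globalstate{l},\B_0)\xRightarrow{e_i}(\globalstate{l'},\B)$ holds because within one \kE{k} all sends precede all receives and each reception matches a send of the same \kE{k}, so the buffer operations can be replayed from empty buffers. The only delicate premise is the causal-delivery check $\sender{v}\notin B(\receiver{v})$ for each matched exchange $v$ of $e_i$: were it to fail, the sender $p$ of $v$ would already have an unmatched send to the receiver $q$ of $v$; since both sends are from $p$ to $q$ they are program-ordered with the unmatched one first, contradicting peer-to-peer causal delivery of $msc$.

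For the $(\Leftarrow)$ direction, the transition sequence makes $e=e_1\cdots e_n$ a linearisation in which each $e_i$ is a valid \kE{k} and every matched pair lies inside one $e_i$, so the first and third clauses of \kSous{k} hold and it remains to verify causal delivery. I would argue by contradiction: a peer-to-peer violation amounts to two sends on a channel $(p,q)$, both issued by $p$ and hence program-ordered, the earlier one $v$ unmatched and the later one $v'$ matched. When $v$ is processed its sender $p$ enters $B(q)$, so when $v'$ is processed the premise $\sender{v'}\notin B(q)$ fails, contradicting the existence of the step. The remaining candidate violation---two \emph{matched} sends on $(p,q)$ whose receptions are out of order---cannot occur: matched pairs sit in single \kE{k}s and the sends are program-ordered, so the enclosing \kE{k}s appear in the same order, and FIFO is enforced within each \kE{k} by the semantic premise; thus the receptions respect send order.

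The main obstacle is making this last point rigorous, i.e.\ showing that matched-matched pairs are automatically FIFO-consistent under the \kE{k} decomposition, so that every genuine causal-delivery violation reduces to the unmatched-then-matched form captured by $B$. Once that reduction is justified, the remainder is a routine transcription of the proof of Lemma~\ref{lem:causal} with all of the extended-graph machinery removed.
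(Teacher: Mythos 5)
Your proposal is correct and follows essentially the same route as the paper's proof: the same invariant that $B(q)$ is exactly the set of senders of unmatched messages to $q$, the same contradiction argument for the forward direction, and the same case split in the converse direction between the unmatched-then-matched violation (caught by $B$) and the matched-matched out-of-order violation (excluded because matched pairs are confined to single \kE{k}s, whose order agrees with the program order of the sends, with FIFO within a \kE{k} enforced by the semantic premise). The point you flag as the remaining obstacle is handled in the paper exactly by the argument you sketch, so there is no gap.
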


\begin{proof}
$\Rightarrow$ 
Since $msc$ is \kSous{k} then $\exists e = e_1 \cdots e_n $ such that $e$ is a linearisation of $msc$. The proof proceeds by induction on $n$.  

\begin{description}
\item[Base case] If $n=1$ then $e=e_1$. Thus there is only one \kE{k}. By hypothesis, as $msc$ satisfies causal delivery we have for some $\globalstate{l}$ and $\B$, $(\globalstate{l}, \B_0) \xRightarrow{e} (\globalstate{l}, \B)$.
By contradiction, suppose that $\exists v = \{s_i, r_{i'}\}$ such that $s_i = \send{p}{q}{\amessage}$ and $ p \in B_i(q)$. 
Then $\exists v' = \{s_j\}$ such that $s_j = \send{p}{q}{\amessage'}$.   
Since $e$ is a linearisation of $msc$ and $\procofactionv{S}{v} = \procofaction{S}{v'}$, then $j \prec_{po} i$. As $\procofactionv{R}{v'} =\procofactionv{R}{v}$ and $v$ is matched while $v'$ is not matched, $msc$ does not satisfy causal delivery which is a contradiction.  

\item[Inductive step] If $n>1$, by inductive hypothesis, we have $$(\globalstate{l_0}, B_0) \ptransitionKE{e_1}{k} \cdots \ptransitionKE{e_{n-1}}{k} (\globalstate{l_{n-1}}, B)$$
Since all receptions match a corresponding send in the current \kE{k} and all sends precede all receptions, we have $(\globalstate{l_{n-1}}, \B_0) \xRightarrow{e} (\globalstate{l_n}, \B)$ for some $\B$.  

Inductive hypothesis entails that 
$$ B(q) =  \{ \procofactionv{S}{v} \mid v \mbox{ is unmatched \& } \procofactionv{R}{v} = q \} $$


By contradiction, we suppose that $\exists v = \{s_i, r_{i'}\}$ such that $s_i = \send{p}{q}{\amessage}$ and $ p \in B_i(q)$. 
Then there exists a message exchange $v'$  in $e$ such that 
	$v' = \{s_j \}$, $s_j = \send{p}{q}{\amessage'} $. 
 As $v'$ in $e$, and as in the base case, $j \prec_{po} i$ with $v$ matched and $v'$ unmatched, $msc$ would  not satisfy causal delivery which is a contradiction
 

\end{description}

$\Leftarrow$ 
If $e=e_1\cdots e_n$ where each $e_i$ corresponds to a valid \kE{k}.
Let show that $msc(e)$ is \kSous{k}.  

Suppose by contradiction that $msc(e)$ is not \kSous{k}. As $e$ is a linearisation of $msc(e)$ and is divisible into valid \kE{k}, then $msc(e)$ violates causal delivery. 
Then there exist $ s_i = \send{p}{q}{\amessage}, s_j=\send{p}{q}{\amessage'}$ such that $i \prec j$ and either: 
\begin{itemize}
\item there exist $r_{i'} = \rec{p}{q}{\amessage}, r_{j'}=\rec{p}{q}{\amessage'}$ such that $j' \prec i'$ or,
\item $s_i$ is unmatched and $s_j$ is matched
\end{itemize}
\noindent Moreover, we have that either $s_i,s_j \in e_l$ or $s_i \in e_l$ and $s_j \in e_m$ with $l\neq m$.  
We thus have four cases.

In the first case, $s_i$ and $s_j'$ are matched and belong to the same \kE{k} $e_l$. Then there is no valid execution  such that $(\globalstate{l}, \B_0) \xRightarrow{e_l} (\globalstate{l'}, \B)$ and $e_l$ does not describes a valid \kE{k}. 
In the second case, $s_i$ and $s_j$ are matched but do not belong to the same \kE{k}. This case cannot happen as  $j' \prec i'$ entails that $s_i$ and $s_j$ must belong to the same \kE{k} or the reception of message $s_i$ will be separated from its sending.  
In the third case, $s_i$ is unmatched and $s_j$ is matched, and they are in the same \kE{k}. If $s_i $ is unmatched, then $p \in B_{i+1}(q)$. Moreover as $i \prec j$ ,  $p \in B_j(q)$ thus concluding that  $e_l$ is not a valid \kE{k}. 
In the last case, $s_i \in e_l$ is unmatched and $s_j \in e_{m}$ is matched. Therefore, since function $B$ is incremental, $p \in B(q)$ at the beginning of $e_{m}$. Then,  $p \in B_j(q)$ and $e_{m}$ is not a valid \kE{k} concluding the proof. \qed
%
%
%
%
\end{proof}

 \begin{lemma}\label{lem:bad-is-bad-p}
  A system $\system$ is not \kSable{k} iff there is a \kSable{k}
  execution $e'$ of $\system'$ that is feasible and bad.
\end{lemma}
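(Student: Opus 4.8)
The plan is to replay, essentially verbatim, the argument used for the mailbox version in Lemma~\ref{lem:bad-is-bad}, since nothing in that proof depends on the communication architecture: the notions of borderline violation, of feasible and bad execution, of the instrumented system $\system'$ and of the deviation $\deviate{\cdot}$ are all defined uniformly, and the graph-theoretic characterisation of \kSous{k} MSCs (Theorem~\ref{thm:k-sity-scc}) holds identically in the peer-to-peer case. I would therefore establish the two implications by the same constructions, checking only that the single place where communication actually matters --- preservation of peer-to-peer causal delivery by the deviation --- goes through.

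For the forward direction, I would start from the assumption that $\system$ is not \kSable{k}. As already observed for the borderline-violation notion, there is then an execution in $\asEx(\system)$ with a unique minimal non-\kSable{k} prefix of the form $e \cdot r$, where $r = \rec{p}{q}{\amessage}$ is a reception and $e$ is \kSable{k}. Setting $e' = \deviate{e \cdot r}$, I would show that $e'$ is a feasible bad execution of $\system'$. Badness is immediate, since $e \cdot r$ is not \kSable{k} by the choice of the prefix. For feasibility together with \kSability, since $e$ is \kSable{k} its $msc(e)$ is \kSous{k}, so it admits a linearisation $e_1 \cdots e_n$ split into \kE{k}s; the send $s$ with $s \matches r$ lies in some block $e_i$. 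I would then rewrite this block by replacing $s = \send{p}{q}{\amessage}$ with $\send{p}{\pi}{(q,\amessage)}$ and appending its reception $\rec{p}{\pi}{(q,\amessage)}$ among the receives of $e_i$, and finally append one fresh \kE{k} consisting of $\send{\pi}{q}{\amessage} \cdot \rec{\pi}{q}{\amessage}$. This yields a linearisation of $msc(e')$ broken into \kE{k}s, so $e'$ is \kSable{k}, and by construction $e' = \deviate{e \cdot r}$, hence $e'$ is feasible.

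The reverse direction is purely definitional and requires no architecture-specific reasoning: if $e'$ is a feasible bad execution of $\system'$, then by the definition of feasibility $e' = \deviate{e \cdot r}$ for some $e \cdot r \in \asEx(\system)$, and by the definition of badness $e \cdot r$ is not \kSable{k}; hence $\system$ admits a non-\kSable{k} execution and is therefore not \kSable{k}.

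The only genuine obstacle compared to a literal copy of the mailbox proof is checking that the deviation preserves peer-to-peer causal delivery, i.e.\ that the rewritten execution above is indeed a valid execution of $\system'$ and that $msc(e')$ satisfies causal delivery. Here the work is in fact easier than in the mailbox setting: because peer-to-peer causal delivery only constrains pairs of sends sharing \emph{both} the same sender and the same receiver (rather than merely the same receiver), redirecting one send of $p$ towards the fresh process $\pi$ and reinjecting it from $\pi$ cannot create new ordering constraints among the original messages, and the two added message exchanges involving $\pi$ use channels that no other message uses. I would make this precise by appealing to the peer-to-peer causal-delivery definition, observing that feasibility of $e'$ amounts exactly to the statement that no message sent after the deviation is received before the reinjected copy --- a condition unaffected by the reordering into \kE{k}s --- so the construction stays within $\asEx(\system')$.
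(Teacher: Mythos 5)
Your proof is correct and follows essentially the same route as the paper, whose own proof of this lemma is literally ``analogous to the proof of Lemma~\ref{lem:bad-is-bad}'': you replay the mailbox argument (minimal borderline prefix $e\cdot r$, re-linearisation of $msc(e)$ into \kE{k}s with the deviated send and its reception inserted into the same block plus a fresh final \kE{k}, and the purely definitional converse). Your extra check that the deviation preserves peer-to-peer causal delivery is exactly the one architecture-dependent point the paper leaves implicit, and your observation that it is easier than in the mailbox case is accurate.
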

\begin{proof}
 Analogous to the proof of Lemma~\ref{lem:bad-is-bad}. \qed
  \end{proof}

\begin{figure}[t]
\begin{prooftree}
\AxiomC{
	\stackanchor
		{\stackanchor
			{$(\globalstate{l}, B) \ptransitionKE{k}{e} (\globalstate{l'}, B') \qquad e= a_1\cdots a_n \qquad (\forall v) \procofactionv{S}{v} \neq \pi$
			}
			{
			$(\forall v,v') \procofactionv{R}{v} = \procofactionv{R} {v'} = \pi \implies v = v' \wedge \finddest = \bot$ 
			}
		}
		{
		\stackanchor
			{\stackanchor
				{
				$(\forall i) a_i = \send{p}{\pi}{(q,\amessage)} \implies \finddest' = q \wedge d = i$
				}				
				{ 
				$\finddest \neq \bot \implies \finddest' = \finddest$
				}
			}
			{\stackanchor
				{\stackanchor
					{			
					$ \finddest' \neq \bot \wedge \finddest = \bot \implies \nexists v (v \mbox{ is matched } \wedge s_j \in v \wedge j > d $
					}
					{ 
					$\wedge \procofactionv{S}{v} = \findexp \wedge \procofactionv{R}{v} = \finddest) $	
					}
			 	}
			 	{
			 	$ \finddest' \neq \bot \wedge \finddest \neq \bot \implies \nexists v ( v \mbox{ is matched } \wedge \procofactionv{S}{v} = \findexp \wedge \procofactionv{R}{v} = \finddest) $	
			 	}
			}
		}
	}
\UnaryInfC{$(\globalstate{l}, B, \findexp, \finddest) \ptransitionCD{e}{k} (\globalstate{l'}, B', \findexp', \finddest') $}
\end{prooftree}
\caption{Definition of transition $\ptransitionCD{e}{k}$ in a peer-to-peer system }\label{fig:transitionfeasp2p}
 \end{figure}

\subsection*{Proof of Theorem \ref{thm:reachability-dec-p2p}}
\begin{quote}
{\bf Theorem 5}
Let $\system$ be a \kSable{k} system and $\globalstate{l}$ a global control state of $\system$. The problem whether there exists $e \in asEx(\system)$ and $\B$ such that $(\globalstate{l_0}, \B_0) \xRightarrow{e} (\globalstate{l}, \B)$ is decidable.
\end{quote}

\begin{proof}
There are only finitely many abstract configurations of the form
  $(\globalstate{l},B)$ with $\globalstate{l}$ a tuple of control states and
  $B:\procSet \to (2^{\procSet})$.
  Therefore $\ptransitionKE{e}{k}$ is a relation
  on a finite set, and the set $\sTr_k(\system)$ of \kSous{k} MSCs of a
  system $\system$ forms a regular language. It follows that,  it is decidable
  whether a given abstract configuration of the form $(\globalstate{l}, B)$ is reachable from the initial configuration following a 
  \kSable{k} execution being a linearisation of an MSC contained in $sTr_k(\system)$. \qed \end{proof}

\begin{lemma}\label{lem:feasible-is-regular-p}
Let $e'$ be an execution of $\system'$. 
Then $e'$ is a \kSable{k} feasible execution iff there are 
	$e'' = e_1\cdots e_n\cdot \send{\pi}{q}{\amessage}\cdot \rec{\pi}{q}{\amessage}$ with $e_1,\ldots ,e_n\in S^{\leq k}R^{\leq k}$ such that $msc(e') = msc(e'')$ , $B':\procSet\to2^{\procSet}$, a tuple of control states $\globalstate{l'}$ and processes $p$ and $q$ such that   
$$
(\globalstate{l_0}, B_0, \bot, \bot)
\ptransitionCD{e_1}{k}\dots\ptransitionCD{e_n}{k}
(\globalstate{l'}, B', p, q).
$$
\end{lemma}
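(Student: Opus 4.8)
The plan is to mirror the structure of the proof of Lemma~\ref{lem:feasible-is-regular}, but to exploit the fact that in the peer-to-peer setting feasibility is a purely \emph{local} property of the ordered channel $(\findexp,\finddest)$, so that no conflict-graph reachability analysis is needed. First I would observe that the premise of the rule defining $\ptransitionCD{e}{k}$ (Fig.~\ref{fig:transitionfeasp2p}) contains a $\ptransitionKE{e}{k}$ transition; hence, by the peer-to-peer analogue of Lemma~\ref{lem:causal}, the existence of the run $(\globalstate{l_0},B_0)\ptransitionKE{e_1}{k}\cdots\ptransitionKE{e_n}{k}(\globalstate{l'},B')$ is equivalent to $msc(e_1\cdots e_n)$ being \kSous{k}. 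The additional components $\findexp,\finddest$ of an abstract configuration carry only the sender and the recipient of the unique deviated message, so the whole remaining burden of the lemma is to show that the side conditions on lines~5 and~6 of Fig.~\ref{fig:transitionfeasp2p} capture feasibility exactly.

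The key technical ingredient I would establish first is a peer-to-peer analogue of Lemma~\ref{lem:graph-carac-feas}: \emph{an execution $e'$ of $\system'$ with a single deviation from $p=\findexp$ to $q=\finddest$ is not feasible iff $\cgraph{e'}$ contains a matched message exchange $v$ with $\sender{v}=p$ and $\receiver{v}=q$ whose send is performed by $p$ after the deviation send $\vstart$}. This is exactly where the simplification over the mailbox case appears: since peer-to-peer causal delivery only orders sends sharing the \emph{same} ordered pair $(p,q)$, and since both $\vstart$ and any such $v$ are emitted by the same process $p$, the relevant order is just the program order on $p$; no transitive closure $\cgedgeD{SS}$ and no summary nodes are required. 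The argument is that merging $\vstart$ and $\vstop$ into the single exchange $v_d$ of $\cgraph{e\cdot r}$ puts the reception of $v_d$ last, so a matched $v$ from $p$ to $q$ sent after $v_d$ is necessarily received before $v_d$, which is precisely a FIFO violation of peer-to-peer causal delivery; conversely any $p\to q$ message sent before $v_d$ is received consistently, and every other ordered pair is untouched by the merge, so its causal delivery, already guaranteed because $e'\in\asEx(\system')$, is preserved.

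With this characterisation in hand I would prove the two implications. For ($\Rightarrow$), assuming $e'$ is a feasible \kSable{k} execution, I would pick a linearisation $e''=e_1\cdots e_n\cdot\send{\pi}{q}{\amessage}\cdot\rec{\pi}{q}{\amessage}$ witnessing \kSity{k} in which $\vstart$ and its reception by $\pi$ lie in a single \kE{k} and the forwarding forms the last \kE{k}, then check lines~2--6: line~2 holds since $e'$ contains exactly one message to $\pi$; line~3 sets $\finddest'=q$ and fixes the index $d$ at the \kE{k} containing $\vstart$; and lines~5--6 hold because, by the characterisation, no matched $p\to q$ send occurs after $\vstart$. For ($\Leftarrow$), assuming the run exists, the $\ptransitionKE{e}{k}$ premises and the peer-to-peer analogue of Lemma~\ref{lem:causal} give that $msc(e_1\cdots e_n)$ is \kSous{k}; since the forwarding $\send{\pi}{q}{\amessage}\cdot\rec{\pi}{q}{\amessage}$ is a standalone \kE{k} that cannot break peer-to-peer causal delivery (the $\pi\!\to\!q$ channel carries a single message), $msc(e')=msc(e'')$ is \kSous{k}, so $e'$ is \kSable{k}; finally lines~5--6 guarantee that no matched $p\to q$ send occurs after $\vstart$, whence $e'$ is feasible by the characterisation.

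The hard part will be the bookkeeping of the deviation across the decomposition into \kE{k}s: one must verify that the two cases $\finddest=\bot$ (deviation discovered in the current \kE{k}, so only sends with in-\kE{k} index $j>d$ are excluded, line~5) and $\finddest\neq\bot$ (deviation already past, so \emph{every} matched $p\to q$ send in the current \kE{k} is excluded, line~6) together quantify over exactly the set of matched $p\to q$ sends occurring after $\vstart$, and that the invariant ``$\finddest'\neq\bot$ iff a message of the form $(q,\amessage)$ has already been sent to $\pi$, with at most one such message'' propagates correctly through the run. This is the peer-to-peer counterpart of the three-point induction carried out in the proof of Lemma~\ref{lem:feasible-is-regular}, but without the need to reconstruct causal predecessors, so I expect it to be considerably shorter.
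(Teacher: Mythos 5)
Your proof follows essentially the same route as the paper's: the $\ptransitionKE{e}{k}$ premise together with the peer-to-peer analogue of Lemma~\ref{lem:causal} handles the \kSous{k} part, and feasibility is reduced to the absence of a matched exchange from $\findexp$ to $\finddest$ emitted after the deviation --- exactly the criterion the paper derives, by contradiction from peer-to-peer causal delivery, in both directions of its argument. The only organisational difference is that you isolate this criterion as an explicit peer-to-peer counterpart of Lemma~\ref{lem:graph-carac-feas}, which the paper uses only implicitly.
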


\begin{proof}
$\Rightarrow$
Let $e'$ be a \kSable{k} feasible execution of $\system'$. We show that there exists $e'' = e_1\cdots e_n\cdot \send{\pi}{q}{\amessage}\cdot \rec{\pi}{q}{\amessage}$ with $e_1,\ldots ,e_n\in S^{\leq k}R^{\leq k}$, such that $msc(e') = msc(e'')$ , $B':\procSet\to2^{\procSet}$, a tuple of control states $\globalstate{l'}$ and processes $p$ and $q$ such that  
$$
(\globalstate{l_0}, B_0, \bot, \bot)
\ptransitionCD{e_1}{k}\dots\ptransitionCD{e_n}{k}
(\globalstate{l'}, B', p, q).
$$. 

Since $e'$ is \kSable{k}, $msc(e')$ is  \kSous{k} and it exists $e''$ such that $msc(e'') = msc(e')$ and $e'' = e_1 \cdots e_n$ where each $e_i$ is a valid \kE{k}. Therefore, there exist $\globalstate{l'}$ and $B'$ such that  $(\globalstate{l}, B) \ptransitionKE{e_1}{k} \cdots \ptransitionKE{e_n}{k} (\globalstate{l'},B')$. 

Since $e'$ is feasible, let $e\cdot r \in asEx(\system)$ such that $\deviate{e \cdot r} = e'$. So there is a send to $\pi$ in $e'$ and $e' = e'_1 \cdot \send{p}{\pi}{(q,\amessage)} \cdot \rec{p}{\pi}{(q,\amessage)} \cdot e'_2 \cdot  \send{\pi}{q}{\amessage} \cdot \rec{\pi}{q}{\amessage}$. Then there is one and only one send to $\pi$ such that $\findexp = p$ and $\finddest = q$. 

By contradiction, suppose that there is a matched message $v'=\{a_i, a_j\}$ belonging to a \kE{k} in $e'_2$ such that 
$\procofactionv{S}{v'} = p$ and  $\procofactionv{R}{v'} = q$. 
Now if we consider the non-deviated sequence $e\cdot r$ let $a_{i'} = \send{p}{q}{\amessage}$ and $r =a_{j'} = \rec{p}{q}{\amessage}$. We, thus have $i' \prec i $ and $j \prec j'$ contradicting causal delivery and the fact that $e\cdot r$ is an execution.


$\Leftarrow$
Take a sequence of actions $e''$  such that $$(\globalstate{l_0}, B_0, \bot, \bot) \ptransitionCD{e_1}{k} \cdots \ptransitionCD{e_n}{k} (\globalstate{l'}, B', p, q)$$ and $e'$ an execution of $\system'$ such that $msc(e'') = msc(e')$.  

By contradiction, suppose that $e'$ is not feasible. Then $e' = \deviate{e \cdot r}$ with $e \cdot r$ that is not an execution of $\system$ whence it does not satisfy causal delivery. 
If $e \cdot r$ does not satisfy causal delivery, but $e'$ does then $\exists v =\{s_i, r_i'\}, v'= \{s_j, r_j'\}$ such that
$s_i = \send{p}{q}{\amessage}, r_{i'} = \rec{p}{q}{\amessage} $ and $s_j = \send{p}{q}{\amessage'}, r_{j'} = \rec{p}{q}{\amessage'} $ and in $msc(e \cdot r)$: $i\prec j$ and $j' \prec i'$. 
In $e'$, there exists an action $s_l = \send{p}{\pi}{(q,\amessage)}$ 
such that in $msc(e')$: $l \prec j $. Also, the send from $\pi$ being the last action and let $\rec{p}{q}{\amessage'} = r_m$ then $j' \prec m$.  Then, there exist a \kE{k} after the deviation of message $\amessage$ where the exchange $v'$ appears. Thus we have $\finddest' \neq \bot$, $\procofactionv{S}{v'} = p$ and $\procofactionv{R}{v'} = q$.
Moreover, if $s_l $ belongs to the \kE{k} under analysis then $l<j$.  This entails that transition $\ptransitionCD{~}{k}$ does not hold, reaching a contradiction. So $e'$ must be a \kSable{k} feasible execution of $\system'$. \qed

\end{proof}

\begin{lemma}
The feasible execution $e'$ is bad iff  one of the two
holds
\begin{itemize}
\item $\vstart\cgedge{}^*\cgedge{RS}\cgedge{}^*\vstop$, or
\item the size of the set $\succs{\vstart}\cap\preds{\vstop}$ is greater or equal to $k+2$.
\end{itemize}
\end{lemma}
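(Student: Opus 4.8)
The plan is to mirror the proof of Lemma~\ref{lem:bad-characterization} for the mailbox case, exploiting the fact that the graph-theoretic characterisation of \kSity{k} given by Theorem~\ref{thm:k-sity-scc} is stated purely in terms of the conflict graph and does not mention the communication model; its peer-to-peer analogue (proved in the appendix) holds verbatim. Since neither the conflict graph, nor $msc(\cdot)$, nor the notions of \kE{k}, $\vstart$ and $\vstop$ change when passing from mailbox to peer-to-peer communication, the whole argument transfers unchanged, and in particular none of the peer-to-peer-specific machinery (the simpler causal delivery, the summary function $B$, the extra bookkeeping for $\findexp$/$\finddest$) is needed here.

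First I would unfold feasibility: since $e'$ is feasible, $e'=\deviate{e\cdot r}$ for some $e\cdot r\in\asEx(\system)$, and being feasible it is \kSable{k}, so $msc(e')$ is \kSous{k}. Removing the last reception $r$ leaves $e$ \kSable{k} as well, and the conflict graph $\cgraph{e\cdot r}$ is obtained from $\cgraph{e'}$ by merging $\vstart$ and $\vstop$ into the single message exchange $v_d$ associated with the non-deviated message. Here $r$ being the very last action of $e\cdot r$ is the key structural fact: the receive of $v_d$ (carried by $\vstop$) has no successor, so every outgoing edge of $v_d$ originates from its send side (that is, from $\vstart$ after the split), while the edges reaching $v_d$ through its receive side are exactly the edges into $\vstop$.

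Next I would apply the peer-to-peer analogue of Theorem~\ref{thm:k-sity-scc}: $e'$ is bad, i.e. $e\cdot r$ is not \kSous{k}, iff $\cgraph{e\cdot r}$ contains either a cyclic path carrying an RS edge or an SCC of size $\geq k+1$. Because $e$ is already \kSous{k}, any such witness must pass through $v_d$. Translating back through the split, a cyclic path through $v_d$ becomes a path $\vstart\cgedge{}^*\vstop$ in $\cgraph{e'}$, and an RS edge on it yields exactly the first condition $\vstart\cgedge{}^*\cgedge{RS}\cgedge{}^*\vstop$; an SCC of size $\geq k+1$ through $v_d$ becomes the set of vertices that are both reachable from $\vstart$ and co-reachable from $\vstop$, namely $\succs{\vstart}\cap\preds{\vstop}$, whose cardinality is exactly one larger than that of the SCC because $\vstart$ and $\vstop$ are the two split copies of the single vertex $v_d$; hence the threshold becomes $\geq k+2$, which is the second condition.

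The only delicate point, and the step I expect to require the most care, is the rigorous justification of this merge/split correspondence, in particular that $w\in\succs{\vstart}$ iff $v_d\cgedge{}^*w$ and that $w\in\preds{\vstop}$ iff $w\cgedge{}^*v_d$, and that the RS-labelled edge survives the translation. All of this is driven by the observation above that $\vstop$ is a sink on its receive side and that $\vstart$ is the unique origin of the outgoing edges of $v_d$; this is a property of the conflict graph alone and is therefore insensitive to whether the buffers are mailbox or peer-to-peer. Consequently the argument is identical to the mailbox case, and the statement follows.
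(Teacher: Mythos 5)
Your proof is correct and follows essentially the same route as the paper: the paper's own proof of this peer-to-peer lemma is literally a one-line reduction to the mailbox case (Lemma~\ref{lem:bad-characterization}), whose argument you reproduce — apply the (peer-to-peer analogue of) Theorem~\ref{thm:k-sity-scc} to $\cgraph{e\cdot r}$, note that any witness of badness must involve the vertex of the last receive since $e$ itself is \kSable{k}, and translate through the split of that vertex into $\vstart$ and $\vstop$, which yields the $RS$-path condition and shifts the SCC threshold from $k+1$ to $k+2$. Your extra justification that every such witness must enter the merged vertex through its receive side (because $r$ is the last action, so that vertex has no outgoing $R$-edges) is a detail the paper leaves implicit, and it is exactly the right one.
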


\begin{proof}
Analogous to the proof of Lemma~\ref{lem:bad-characterization}. \qed
\end{proof}

\begin{figure}[t]
 \begin{prooftree}
\AxiomC {\small
	\stackanchor{ $P'=\procs{\succse{e}{P}} \qquad Q=\procs{\predse{e}{Q'}} \qquad SCC_e = \succse{e}{P}\cap\predse{e}{Q' }$}
		{ \stackanchor {$\cont' = \mathsf{min}(k+2,\cont + n) \quad \mbox{where } n= | SCC_e |$}
			{\stackanchor {$
\lastisRec'(q) \Leftrightarrow (\exists v. \procofaction{R}{v}=q\wedge v\cap R\neq\emptyset) \vee (\lastisRec(q) \wedge\! \not \exists v \in V.
\procofaction{S}{v}=q )
$}
				{\stackanchor {$
\sawRS' = \sawRS \vee (\exists v)(\exists p\in \procSet\setminus\{\pi\})\
\procofactionv{S}{v} = p \wedge \lastisRec(p) \wedge p\in P \cap Q 
$}}}}}
\UnaryInfC{ \small $
(P,Q, \cont, \sawRS, \lastisRec)
\ptransitionBV{e}{k}
(P', Q', \cont', \sawRS', \lastisRec')
$}
\end{prooftree}
\vspace*{-0.5cm}
\caption{Definition of the relation $\ptransitionBV{e}{k}$ in a peer-to-peer system
}
 \end{figure}

\begin{lemma}\label{lem:transitionBV-p}
Let $e'$ a feasible \kSable{k} execution of $\system'$. Then $e'$ is a bad execution iff there are $e'' =e_1\cdots e_n\cdot \send{\pi}{q}{\amessage}\cdot \rec{\pi}{q}{\amessage}$ with
  $e_1,\ldots ,e_n\in S^{\leq k}R^{\leq k}$ and $msc(e')=msc(e'')$, $P',Q\subseteq \procSet$, $\sawRS\in\{\cttrue,\ctfalse\}$, $\cont\in\{0,\dots,k+2\}$,
such that
$$
(\{\pi\},Q,0,\ctfalse,\lastisRec_0)
\transitionBV{e_1}{k}\dots\transitionBV{e_n}{k}
(P',\{\pi\},\cont,\sawRS,\lastisRec)
$$
and at least one of the two holds: either $\sawRS=\cttrue$, or $\cont=k+2$.

\end{lemma}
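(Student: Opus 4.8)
The plan is to mirror the proof of the mailbox version, Lemma~\ref{lem:transitionBV}, and to exploit the fact that every ingredient of the statement is insensitive to the communication architecture. Indeed, MSCs, conflict graphs, and the reachability/co-reachability sets $\succs{\cdot}$ and $\preds{\cdot}$ depend only on $\prec_{po}$ and $\prec_{src}$, not on whether buffers are mailboxes or peer-to-peer; so the relation in the statement computes exactly the same quantities as $\transitionBV{e}{k}$. Moreover the characterisation of bad feasible executions — either $\vstart\cgedge{}^*\cgedge{RS}\cgedge{}^*\vstop$, or $|\succs{\vstart}\cap\preds{\vstop}|\geq k+2$ — is the peer-to-peer analogue of Lemma~\ref{lem:bad-characterization} and has the very same statement. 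The only place where the architecture intervenes is in deciding which $e'$ are feasible \kSable{k} executions of $\system'$, and that has already been settled by Lemma~\ref{lem:feasible-is-regular-p}; in particular it guarantees that $e'$ contains a single send to $\pi$ and a final send from $\pi$, so the vertices $\vstart$ and $\vstop$ are well defined.

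With these reductions in place, I would fix a linearisation $e''=e_1\cdots e_n\cdot\send{\pi}{q}{\amessage}\cdot\rec{\pi}{q}{\amessage}$ of $msc(e')$ into \kE{k}s and prove, by induction on $n$, the following invariants for the run starting from $(\{\pi\},Q,0,\ctfalse,\lastisRec_0)$: after reading $e_1\cdots e_i$, the set $P$ collects exactly the processes touching a vertex of $\succs{\vstart}$, the set $Q$ collects those touching a vertex of $\preds{\vstop}$, the counter $\cont$ equals $\min(k+2,\,m)$ where $m$ is the number of already-seen vertices lying in $\succs{\vstart}\cap\preds{\vstop}$, and $\sawRS$ is $\cttrue$ exactly when an $\cgedge{RS}$ edge has been met on some path from $\vstart$ to $\vstop$. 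The base and inductive steps are those of the mailbox proof: each $SCC_e=\succse{e}{P}\cap\predse{e}{Q'}$ contributes its vertices to $\cont$, and since an $\cgedge{RS}$ edge can never arise inside a single \kE{k} (all sends precede all receives), such an edge must straddle two consecutive \kE{k}s and is detected through the $\lastisRec$ flag of the process shared by its endpoints.

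For the forward implication I would invoke the peer-to-peer analogue of Lemma~\ref{lem:bad-characterization}: if $e'$ is bad, then one of its two cases holds, and the invariants above immediately yield $\sawRS=\cttrue$ in the first case and $\cont=k+2$ in the second. For the converse, if the run ends with $\sawRS=\cttrue$ I would locate the step at which the flag flips and reconstruct a concrete path $\vstart\cgedge{}^*\cgedge{RS}v\cgedge{}^*\vstop$ from the witness $p\in P\cap Q$ with $\lastisRec(p)$ and the vertex $v$ with $\procofactionv{S}{v}=p$ and $v\in\predse{e_i}{Q}$; if instead $\cont=k+2$, every counted vertex belongs to $\succs{\vstart}\cap\preds{\vstop}$, so this set has size at least $k+2$. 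In either case the characterisation lemma gives that $e'$ is bad.

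The step I expect to be the genuine work, exactly as in the mailbox case, is establishing the reachability/co-reachability invariant across \kE{k} boundaries: one must check that a path from $\vstart$ to $\vstop$ crossing a boundary is faithfully summarised by the propagation $P'=\procs{\succse{e}{P}}$ and $Q=\procs{\predse{e}{Q'}}$, so that no vertex of $\succs{\vstart}\cap\preds{\vstop}$ is missed or double counted in the running total $\cont$. Since conflict graphs are defined identically in both settings, this argument transcribes verbatim from the proof of Lemma~\ref{lem:transitionBV}; the peer-to-peer flavour changes nothing here, because the simpler notion of causal delivery only affected the earlier relations $\ptransitionKE{e}{k}$ and $\ptransitionCD{e}{k}$, on which feasibility and \kSity{k} already rest.
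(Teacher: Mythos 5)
Your proposal is correct and matches the paper, whose entire proof of this lemma is the single line ``Analogous to the proof of Lemma~\ref{lem:transitionBV}''; you justify the transfer by the same observations (conflict graphs, $\succs{\cdot}$, $\preds{\cdot}$ and the bad-execution characterisation are architecture-independent, and feasibility is already settled by the peer-to-peer feasibility lemma) and then reproduce the mailbox argument's invariants. The only extra care worth noting is that the relation actually used in the peer-to-peer setting is $\ptransitionBV{e}{k}$, whose definition coincides with $\transitionBV{e}{k}$ up to the quantification over $SCC_e$, which is exactly why the verbatim transcription you describe goes through.
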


\begin{proof}
Analogous to the proof of Lemma~\ref{lem:transitionBV}. \qed
\end{proof}

\subsection*{Proof of Theorem \ref{thm:ksity-decidability-p2p}}
\begin{quote}
{\bf Theorem 6}
The \kSity{k} of a system $\system$ is decidable for  $k\geq 1 $. 
\end{quote}

\begin{proof}
  Let $\system$ be fixed.
  By Lemmata~\ref{lem:bad-is-bad-p}, \ref{lem:feasible-is-regular-p}, and \ref{lem:transitionBV-p},
  $\system$ is not \kSable{k} if and only if there is a sequence of actions
  $e'=e_1'\cdots e_{n}'\cdot s\cdot{}r$
  such that $e_i\in S^{\leq k}R^{\leq k}$,   $s=\send{\pi}{q}{\amessage}$, $r=\rec{\pi}{q}{\amessage}$,
  $$
  (\globalstate{l_0}, B_0, (\emptyset,\emptyset), \bot)\transitionCD{e_1'}{k}\dots\transitionCD{e_{n}'}{k}
  (\globalstate{l'}, B', \vec C',q)
  $$
  and
  $$
  (\{\pi\},Q,\ctfalse,0)
  \transitionBV{e_1'}{k}\dots\transitionBV{e_{n}'}{k}\transitionBV{s\cdot{}r}{k}
  (P',\{\pi\},\sawRS,\cont)
  $$
  for some $\globalstate{l'},B',\vec C',Q,P'$ with $\pi\not\in C_{R,q}$.
  Since both relations $\transitionCD{e}{k}$ and $\transitionBV{e}{k}$ are finite state, the existence of such a sequence of actions is decidable. \qed \end{proof}

\end{document}